\let\csname equation*\endcsname\relax
\let\csname endequation*\endcsname\relax
\newcommand{\unity}{\mathbbmss{1}}
\newcommand{\N}{\ensuremath{\mathbb N}}
\newcommand{\Z}{\ensuremath{\mathbb Z}}
\newcommand{\R}{\ensuremath{\mathbb R}}
\newcommand{\C}{\ensuremath{\mathbb C}}
\newcommand{\idty}{\mathbbmss{1}}
\renewcommand{\Bbb}[1]{\if1#1\idty\else\mathbb{#1}\fi}
\newcommand{\kb}[1]{|#1\rangle\langle#1|}
\newcommand{\KB}[2]{|#1\rangle\langle#2|}
\newcommand{\ket}[1]{|#1\rangle}
\newcommand{\SP}{\operatorname{span}}
\newcommand{\uu}{\mathfrak{u}}
\newcommand{\su}{\mathfrak{su}}
\newcommand{\fg}{\mathfrak{g}}
\newtheorem{thm}{Theorem}[section]
\newtheorem{defi}[thm]{Definition}
\newtheorem{prop}[thm]{Proposition}
\newtheorem{lem}[thm]{Lemma}
\newtheorem{kor}[thm]{Corollary}
\newtheorem{aX}{Axiom}
\newcommand{\FPV}{D_X}
\newcommand{\BLUE}[1]{\textcolor{blue}{#1}}
\newcommand{\sym}{\operatorname{sym}}
\newcommand{\Sym}{\operatorname{Sym}}
\renewcommand\tableofcontents{%
  \section*{\contentsname}
    \@starttoc{toc}%
    }
\begin{document}
\selectlanguage{english}

\title[Controlling Several Atoms in a Cavity]{Controlling Several Atoms in a Cavity}

\author{Michael Keyl}

\address{Zentrum Mathematik, M5,
Technische Universit{\"a}t M{\"u}nchen,\\
Boltzmannstrasse 3, 85748 Garching, Germany}
\ead{michael.keyl@tum.de}

\author{Robert Zeier}

\address{Department Chemie,
Technische Universit{\"a}t M{\"u}nchen,\\
Lichtenbergstrasse 4, 85747 Garching, Germany}
\ead{robert.zeier@ch.tum.de}

\author{Thomas {Schulte-Herbr{\"u}ggen}}

\address{Department Chemie,
Technische Universit{\"a}t M{\"u}nchen,\\
Lichtenbergstrasse 4, 85747 Garching, Germany}
\ead{tosh@ch.tum.de}


\vspace{3mm}
\hspace{17mm}{\footnotesize (Dated: January 22, 2014)}\\
\vspace{-3mm}

\pacs{03.67.Ac, 02.30.Yy, 42.50.Ct, 03.65.Db}

\begin{abstract}
We treat control  of several two-level atoms interacting
with one mode of the electromagnetic field in a cavity.  
This provides a useful model to study pertinent
aspects of quantum control in infinite dimensions via 
the emergence of infinite-dimensional system algebras. 
Hence we address problems arising with infinite-dimensional Lie algebras and those of unbounded
operators. For the models considered, these problems can be
solved by splitting the set of control Hamiltonians into two subsets:
The first obeys an abelian symmetry and can be treated in terms
of infinite-dimensional Lie algebras and strongly closed subgroups of
the unitary group of the system Hilbert space. The second breaks this
symmetry, and its discussion introduces new arguments. Yet, full
controllability can be achieved in a strong sense: e.g., in a time dependent 
Jaynes-Cummings model we
show that, by tuning coupling constants appropriately, every
unitary of the coupled system (atoms and cavity) can be approximated
with arbitrarily small error.
\end{abstract}

\maketitle

\BLUE{\scriptsize
\tableofcontents
}

\section{Introduction}

Exploiting controlled dynamics of quantum systems is becoming of increasing importance not only 
for solving computational tasks or quantum-secured communication, but also for 
simulating other physical  systems \cite{Fey82, VC02, WRJB02, ZGB, JC03, Lewenstein12}. An interesting
direction in quantum simulation applies many-body correlations to create \/`quantum matter\/'. 
E.g., ultra-cold atoms in optical lattices are versatile models for
studying large-scale correlations \cite{BDZ08, Lewenstein12}. 
Tunability and control over the system parameters of optical lattices allows for switching between 
several low-energy states of different quantum phases \cite{Sachdev99, QPT10} or in particular for following real-time 
dynamics such as the quantum quench from the super-fluid to the Mott insulator regime \cite{GMEHB02}.

Thus manipulating several atoms in a cavity is a key step to this end \cite{Haroche06} at the same time
posing challenging infinite-dimensional control problems.
While in finite dimensions controllability can readily be assessed by the Lie-algebra
rank condition \cite{SJ72, JS72, Bro72, Bro73, Jurdjevic97}, infinite-dimensional systems are
more intricate \cite{LY95}. As exact controllability in infinite dimensions seemed daunting in earlier work \cite{HTC83,TR03,LTCC05,WTL06}, 
it took a while before approximate control paved the way to more realistic assessment \cite{AB05,CMS+09,BGR+13},
for a recent (partial) review see, e.g., also \cite{Borzi2011rev} and references therein. 

Here we explore systems and control aspects for systems consisting of several two-level atoms coupled to a 
cavity mode, i.e. the {\em Jaynes-Cummings model} \cite{JC63, TC68, TC69, BRX98}. 
We build upon our previous symmetry arguments \cite{ZS11,ZZKS14} and moreover, we 
apply appropriate operator topologies for addressing two controllability problems in particular:
(i) to which extent can pure states be interconverted and 
(ii) can unitary gates be approximated with arbitrary precision.
In particular by treating the latter, we go beyond previous work,
which started out by a finite-dimensional truncation of a
two-level atom coupled to an oscillator \cite{RBM+04} followed by generalisations to infinite 
dimensions \cite{BRB03,YL07,BBR10} both being confined to establishing
criteria of pure-state controllability. Note that \cite{BBR10} also treats one atom coupled to several
oscillators.

The general aim of this paper is twofold: On the one hand we study control
problems for atoms interacting with electromagnetic fields in cavities. On the
other hand, we address quantum control in infinite dimensions. 
Therefore, the
purpose of  Section~\ref{sec:controllability} is to provide enough material for a non-technical overview 
on the second subject in order to understand the results on
the first (where the difficulties come from). Mathematical
details are postponed to Sections \ref{sec:lie-algebra-block}
and \ref{sec:strong-contr}, while results on cavity systems are presented in overview in
Section~ \ref{sec:atoms-cavity}.   

\section{Controllability}
\label{sec:controllability}

The control of quantum systems poses considerable mathematical challenges when
applied to infinite dimensions. Basically, they arise from the fact that
anti-selfadjoint operators (recall that according to Stone's Theorem~\cite[VIII.4]{ReedSimonI}, 
they are generators of strongly continuous, unitary one-parameter groups) 
do neither form a Lie algebra nor even a vector space. Or seen on the group level, 
the group of unitaries equipped with the strong operator topology is a topological group yet not a Lie group.
So whenever strong topology has to be invoked, controllability cannot be assessed via a system Lie algebra. 
Thus we address these challenges on the group level by employing the controlled time evolution of the quantum system
in order to approximate unitary operators,  the action of which is measured with respect
to arbitrary, but finite sets of vectors. This is formalized in the notion of \emph{strong controllability}
(see Section~\ref{sec:strong-contr-1}) introduced here as a generalisation of  pure-state controllability 
already discussed in the literature. 
Central to our discussion are abelian symmetries. Assuming that all but one of our Hamiltonians 
observe such an ablian symmetry, we systematically analyze the infinite-dimensional control system
in its block-diagonalized basis. We obtain strong controllability (beyond pure-state controllability) 
if one of the Hamiltonian breaks this abelian symmetry and some 
further technical  conditions are fullfilled.

\subsection{Time evolution}
\label{sec:time-evolution}

We treat control problems of the form 
\begin{equation} \label{eq:1}
   \dot{\psi}(t) = \sum_k u_k(t) H_k \psi(t) = H(t) \psi(t)
\end{equation}
where the $H_k$ with $k\in\{1,\dots,d\}$ are selfadjoint control Hamiltonians on an
infinite-dimensional, separable Hilbert space $\mathcal{H}$ and the controls $u_k: \R \to \R$  are piecewise-constant control
functions. 
Since $\mathcal{H}$ is infinite-dimensional, the operators $H_k$ are usually
only defined on a dense subspace $D(H_k) \subset \mathcal{H}$ called the \emph{domain}
of $H_k$, the only exceptions being those $H_k$ which are bounded. However,
in this context,
control problems where all $H_k$ are bounded are not very interesting from a
physical point of view. In other words, there is no way around considering
those domains and many difficulties of control theory in infinite dimensions
arises from this fact\footnote{Note that domains of unbounded operators are
  not just a mathematical pedantism. The domain is a crucial part of the
  definition of an operator and contains physically relevant information. A
  typical example is the Laplacian in a box which requires \emph{boundary
    conditions} for a complete description. Up to a certain degree, domains can
  be regarded as an abstract form of boundary conditions (possibly at
  infinity).} .

We will also assume that Eq.~\eqref{eq:1} will have unique solutions for all
initial states $\psi_0 \in \mathcal{H}$ and all times $t$. So for
each pair of times $t_1 < t_2$ there is a unitary propagator 
  $U(t_1,t_2)\psi_0 = \mathcal{T} \int_{t_1}^{t_2} \exp(-i t H(t)) \psi_0$,
where $\mathcal{T}$ denotes time ordering. Observe that this condition is usually
\emph{not} satisfied, not even if the $H_k$ share a joint domain of essential
selfadjointness. Fortunately, the systems we are going to study do not
show such pathological behavior. Yet, a minimalistic way to avoid this problem
would be to restrict to control functions where only one $u_k$ is different from 0
at each time $t$. In this case the propagator $U(t_1,t_2)$ is just a
concatenation of unitaries $\exp(i t H_k)$ which are guaranteed to exist due
to selfadjointness of the $H_k$.

\subsection{Pure-state controllability}

A key-issue
in quantum control theory is \emph{reachability}: Given two pure states
$\psi_0$, $\psi \in \mathcal{H}$, we are looking for a time $T>0$ and control functions
$u_k$ such that $\psi = U(0,T)\psi_0$. In infinite dimensions, however, this
condition is too strong, since there might be states which can be reached only
in infinite time, or not at all. Yet, one may 
find a reachable state
``close by'' with arbitrary small control error. Therefore we will  call
$\psi$ reachable from $\psi_0$ if for all $\epsilon>0$ there is a
\emph{finite} time $T>0$
and control functions $u_k$ such that 
$  \| \psi - U(0,T)\psi_0\| < \epsilon$
holds. Accordingly, we will call the system (\ref{eq:1}) \emph{pure-state
  controllable}, if each pure state~$\psi$ can be reached from one~$\psi_0$
(and, by unitarity, also vice versa). 

Since pure states are described by one-dimensional projections, two state
vectors describe the same state if they differ only by a global phase. Hence the
definition just given is actually a bit too strong. There are several ways
around this problem, like using the trace norm distance of $\kb{\psi}$ and
$\kb{\psi_0}$ rather then the norm distance of $\psi$ and $\psi_0$. For our
purposes, however, the most appropriate method is to assume that the unit
operator $\unity$ on $\mathcal{H}$ is always among the control Hamiltonians. This 
may appear somewhat arbitrary, but it helps to avoid problems with determinants and traces on
infinite-dimensional Hilbert spaces, which  otherwise would arise.

\subsection{Strong controllability}
\label{sec:strong-contr-1}

Next, the analysis shall be lifted to the level of operators, i.e.\ to 
unitaries~$U$ from the group $\mathcal{U}(\mathcal{H})$ of unitary operators
on the Hilbert space $\mathcal{H}$ such that a time $T>0$ and control functions $u_k$ exist with 
$U= U(0,T)$. As in the last paragraph, this has to be generalized to an
approximative condition again. The best choice---mathematically as well as from a
practical point of view---is approximation in the \emph{strong sense}: We
look for unitaries $U$ such that for each set of (not necessarily
orthonormal or linearly independent) vectors $\psi_1, \dots, \psi_f \in
\mathcal{H}$ and each $\epsilon > 0$, there exists a time $T>0$ and control
functions $u_k$ such that
\begin{equation} \label{eq:26}
  \| [U - U(0,T)]\, \psi_k\| < \epsilon\, \text{ for all }\, k\in\{1,\dots,f\}.
\end{equation}
In other words, we are comparing $U$ and $U(0,T)$ only on a finite set of
states, and the worst-case error one can get here is bounded by $\epsilon$. We
will call the control system (\ref{eq:1}) \emph{strongly controllable} if each
unitary $U$ can be approximated that way.
(NB, in strong controllability, one again has the choice of one single
joint global phase factor.)

Clearly, strong controllability implies pure-state controllability. To see this, choose an
arbitrary but fixed $\psi_0 \in \mathcal{H}$. For each $\psi \in
\mathcal{H}$, there is a unitary $U$ with $U\psi_0 = \psi$. Hence strong
controllability implies $\|\psi - U(0,T)\, \psi_{0}\| = \| [U - U(0,T)]\, \psi_0\| < \epsilon$. 

\subsection{The dynamical group $\mathcal{G}$}
\label{sec:dynamical-group}

Strong controllability is concept-wise related to the
strong operator topology \cite[VI.1]{ReedSimonI} on the group
$\mathcal{U}(\mathcal{H})$ of unitary operators on $\mathcal{H}$.
To this end, consider the sets
\begin{equation} \label{eq:66}
  \mathcal{N}(U;\psi_1,\dots,\psi_f;\epsilon) = \{ V \in
  \mathcal{U}(\mathcal{H})\,|\,\|(V-U)\, \psi_k\| < \epsilon\; \text{ for all } \; k\in\{1,\dots,f\}
  \}.  
\end{equation}
They form a neighborhood base for the strong topology, and we will call them
\emph{(strong) $\epsilon$-neighborhoods}. The condition in Eq.~\eqref{eq:26} can now 
be restated as: Any $\epsilon$-neighborhood of $U$ contains a time-evolution
operator $U(0,T)$ for appropriate time $T$ and control functions $u_k$. In turn, this
can be reformulated as: $U$ is an \emph{accumulation point} of the set
$\tilde{\mathcal{G}}$ of all unitaries $U(0,T)$. The set of all accumulation
points of $\tilde{\mathcal{G}}$ (which contains $\tilde{\mathcal{G}}$ itself)
is a strongly closed subgroup\footnote{There is a subtle point here: The
  group $\mathcal{U}(\mathcal{H})$ is not
  strongly closed as a subset of the bounded operators $\mathcal{B}(\mathcal{H})$. Actually its
  strong closure is the set of all isometries;
  cf.\ \cite[Prob. 225]{halmos82}. Hence whenever we talk about strongly closed groups
  of unitaries, this has to be understood as the closure in the
  restriction of the strong topology to $\mathcal{U}(\mathcal{H})$ (which
  coincides with the restriction of the weak topology).}
of $\mathcal{U}(\mathcal{H})$, which we will
call the \emph{dynamical group} $\mathcal{G}$ generated by control Hamiltonians
$H_k$ with $k\in\{1,\dots,d\}$. If we choose the
controls  as described in Subsection~\ref{sec:time-evolution} (i.e.\ piecewise
constant and only one $u_k$ different from zero at each time), $\mathcal{G}$
is just the smallest strongly closed subgroup of $\mathcal{U}(\mathcal{H})$
that contains all $\exp(itH_k)$ for all $k\in\{1,\dots,d\}$ and all $t \in
\Bbb{R}$. Note that it contains in particular all unitaries that can
be written as a strong limit s-$\lim_{T \rightarrow \infty} U(0,T) $. In
finite dimensions, $\mathcal{G}$ can be calculated via its system algebra,
i.e.\ the Lie algebra $\mathfrak{l}$ generated by the $i H_k$, since each $U
\in \mathcal{G}$ can be written as $U=\exp(H)$ for an $H \in \mathfrak{l}$. 

In infinite dimensions, however, several difficulties can occur. First, unbounded
operators $H_k$ are only defined on a dense domain $D(H_k) \subset
\mathcal{H}$. The sum $H_k + H_j$ is therefore only defined on the 
intersection $D(H_k) \cap D(H_j)$ and the commutator even only on a subspace
thereof. There is no guarantee that $D(H_k) \cap D(H_j)$ contains more than
just the zero vector. In this case, the Lie algebra cannot even be defined.  

The minimal requirement to get around this difficulty is the existence of a
joint dense domain $D$, i.e.\ $D \subset D(H_j)$ and $H_j D \subset D$ for all
$j$. However, even then we do not know whether $\mathcal{G}$ can be generated from
$\mathfrak{l}$ in terms of exponentials. In general, it is impossible to
define some $\exp(H)$ for all $H \in \mathfrak{l}$.

There are several ways to deal with these problems. One is to consider cases
where the $H_k$ generate (i) a finite-dimensional Lie algebra and admit (ii) a
common, invariant, dense domain consisting of analytic vectors \cite{HTC83,LTCC05}. In
this case the exponential function is defined on all of $\mathfrak{l}$, and we
can proceed in analogy to the finite-dimensional case. The problem is that the
group $\mathcal{G}$ will become a finite-dimensional Lie group and its orbits through a
vector $\psi \in \mathcal{H}$ are finite-dimensional as well. Hence, we never
can achieve full controllability. This approach is well studied; cf.\ \cite{HTC83,LTCC05}
and  references therein.

Another possibility which includes the possibility to study an
infinite-dimensional Lie algebra $\mathfrak{l}$ is to restrict to bounded generators $H_k$. In this
case, one can define $\mathfrak{l}$ as a norm-closed subalgebra of the Lie algebra
$\mathcal{B}(\mathcal{H})$ of bounded operators, and one ends up with a Banach-space theory which
works almost in the same way as the finite-dimensional analog; cf.~\cite{Lang96}
for details. Although this is a perfectly reasonable approach from the mathematical
point of view, it is not very useful for physical applications, since in most
cases at least some of the $H_k$ are unbounded.

In this paper, we will  thus consider a different approach which splits the
generators into two classes. The first $d{-}1$ generators $H_1, \ldots, H_{d-1}$ admit an abelian
symmetry and can be treated---with Lie-algebra methods---along the lines
outlined in the next subsection. Secondly, the last generator $H_d$ breaks this symmetry and achieves full
controllability with a comparably simple argument.
The details will be explained
in Section~\ref{sec:lie-algebra-block} and \ref{sec:strong-contr}. 

\subsection{Abelian symmetries}
\label{sec:abelian-symmetries}

One way to avoid the problem described in the last subsection, arises if the
control system admits symmetries. In this section, we will only sketch the
structure, while the details are postponed to Sect. \ref{sec:lie-algebra-block}. 

Let us consider the case of a 
$\mathrm{U}(1)$-symmetry\footnote{The generalization to multiple charges,
  i.e.\ a $\mathrm{U}(1)^N$, is straightforward.}, i.e.\ a 
  (strongly continuous) unitary representation $z \mapsto \pi(z) \in \mathcal{U}(\mathcal{H})$
  of the abelian group $\mathrm{U}(1)$ on
$\mathcal{H}$ 
where $\mathcal{U}(\mathcal{H})$ denotes the group of unitaries on
$\mathcal{H}$. It can be written in terms of a selfadjoint operator $X$ with
pure point spectrum consisting of (a subset of) $\Z$ as
  $\mathrm{U}(1) \ni z = e^{i\alpha} \mapsto \pi(z) = \exp(i \alpha X) \in
  \mathcal{U}(\mathcal{H})$. 
If we denote the eigenprojection of $X$ belonging to the eigenvalue $\mu\in \Z$ as
$X^{(\mu)}$ (allowing the case $X^{(\mu)}=0$ if $\mu$ is not an
eigenvalue of $X$) we get a block-diagonal decomposition of $\mathcal{H}$ 
in the symmetry-adapted basis as
\begin{equation} \label{eq:37}
  \mathcal{H} = \bigoplus_{\mu=-\infty}^\infty \mathcal{H}^{(\mu)} \, \text{ with }\,
  \mathcal{H}^{(\mu)}=X^{(\mu)} \mathcal{H},
\end{equation}
and we can rewrite $\pi(z)$ again as
  $\mathrm{U}(1) \ni z = e^{i\alpha} \mapsto \pi(z)  =
  \sum_{\mu=-\infty}^\infty e^{i \alpha \mu} X^{(\mu)} \in
  \mathcal{U}(\mathcal{H})$. 
Here we will make two assumptions representing substantial
restrictions of generality:
\begin{enumerate}
\item \label{item:4}
  All eigenvalues of $X$ are of finite multiplicity, i.e.\ the
  $\mathcal{H}^{(\mu)}$ are finite-dimensional. This is crucial for basically
  everything we will discuss in this paper.
\item 
  All eigenvalues of $X$ are non-negative. This assumption can be relaxed at
  certain points (e.g.\ all material in Sect. \ref{sec:commuting-operators} can
  be easily generalized). However, it helps to simplify the discussion at a
  technical level and all examples we are going to consider in the next
  section are of this form. 
\end{enumerate}
The first important consequence of 
(\ref{item:4}) concerns the space
of \emph{finite particle vectors}   
\begin{equation} \label{eq:28}
  \FPV = \{ \psi \in \mathcal{H} \, | \, X^{(\mu)}\psi = 0\ \text{for all but
    finitely many $\mu$}\}, 
\end{equation}
since it becomes (due to finite-dimensionality of $\mathcal{H}^{(\mu)}$)
a ``good'' domain for basically all unbounded operators appearing in this
paper. Moreover one gets the following theorem:
\begin{thm} \label{thm:6}
  Consider a strongly continuous representation $\pi$ of $\mathrm{U}(1)$ on
  $\mathcal{H}$ and the corresponding charge-type operator $X$. Then the following
  statements hold:
  \begin{enumerate}
  \item 
    A selfadjoint operator $H$ commuting with $X$ admits $\FPV$ as an
    invariant domain, i.e.\ $\FPV \subset D(H)$ and $H \FPV = \FPV$. Hence the
    space 
      $\mathfrak{u}(X) = \{ iH \, | \, H=H^*\, \text{ commuting with }\, X\}$
    is a Lie algebra with the commutator as its Lie bracket.
  \item 
    The exponential map is well defined on $\mathfrak{u}(X)$ and maps it onto
    the strongly closed subgroup
      $\mathcal{U}(X) = \{ U \in \mathcal{U}(\mathcal{H})\, | \, [U,\pi(z)]=0 \, \text{ for all }\, z
      \in \mathrm{U}(1)\}$
    of $\mathcal{U}(\mathcal{H})$.
  \item 
    The subalgebra $\mathfrak{l} \subset \mathfrak{u}(X)$ generated by a family of
    Hamiltonians $iH_1, \dots, iH_d \in \mathfrak{u}(X)$ is mapped by the
    exponential map into the dynamical group $\mathcal{G}$ of the corresponding control
    problem. The strong closure of $\exp(\mathfrak{l})$ coincides with $\mathcal{G}$.
  \end{enumerate}
\end{thm}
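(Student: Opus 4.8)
The plan is to reduce every assertion to the finite-dimensional blocks $\mathcal H^{(\mu)}=X^{(\mu)}\mathcal H$, exploiting that an operator commuting with $X$ is block-diagonal with respect to the decomposition~\eqref{eq:37}. For the first statement I would first read ``$H$ commutes with $X$'' as commutation of spectral projections, i.e.\ $[\exp(isH),\pi(z)]=0$; this makes each $\mathcal H^{(\mu)}$ reducing for $H$, so the restriction $H^{(\mu)}=H|_{\mathcal H^{(\mu)}}$ is a Hermitian matrix on the finite-dimensional space $\mathcal H^{(\mu)}$. Since $\FPV$ is precisely the algebraic direct sum $\bigoplus_\mu\mathcal H^{(\mu)}$, any $\psi\in\FPV$ is a finite sum of vectors $\psi^{(\mu)}$ and $H\psi=\sum_\mu H^{(\mu)}\psi^{(\mu)}$ is again a finite-particle vector; hence $\FPV\subset D(H)$ and $H\FPV\subseteq\FPV$. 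Because all $H^{(\mu)}$ act within the same finite-dimensional blocks, sums and commutators of elements of $\mathfrak{u}(X)$ are defined on $\FPV$ without any domain obstruction, and the commutator of two anti-Hermitian block-diagonal operators is again anti-Hermitian and block-diagonal, which supplies the Lie-algebra structure.

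For the second statement the exponential $\exp(iH)$ is defined by the spectral theorem for every selfadjoint $H$, and since $[\exp(isH),\pi(z)]=0$ it lies in $\mathcal U(X)$; that $\mathcal U(X)$ is strongly closed follows because, on unitaries, both left and right multiplication by the fixed bounded operator $\pi(z)$ are strongly continuous, so the relation $U\pi(z)=\pi(z)U$ survives strong limits. For surjectivity I would take $U\in\mathcal U(X)$, note that it is block-diagonal and hence restricts to a unitary matrix $U^{(\mu)}$ on each $\mathcal H^{(\mu)}$, invoke surjectivity of the exponential map for the compact groups $\mathrm U(\mathcal H^{(\mu)})$ to write $U^{(\mu)}=\exp(iH^{(\mu)})$ with $H^{(\mu)}$ Hermitian, and assemble $H=\bigoplus_\mu H^{(\mu)}$. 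This $H$ is selfadjoint (a direct sum of selfadjoint blocks, possibly unbounded), commutes with $X$, and satisfies $\exp(iH)=\bigoplus_\mu\exp(iH^{(\mu)})=U$, so $\exp\colon\mathfrak{u}(X)\to\mathcal U(X)$ is onto.

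The substance is in the third statement. For $\exp(\mathfrak l)\subseteq\mathcal G$ I would argue blockwise: given a finite set of test vectors I may assume, after approximating by finite-particle vectors, that they lie in $\mathcal H_F=P_F\mathcal H$ with $P_F=\sum_{\mu\in F}X^{(\mu)}$ for a finite set $F$, a finite-dimensional space on which the $iH_k$ restrict to generators $iH_k^{F}$ of a Lie algebra $\mathfrak l^{F}\subseteq\mathfrak{u}(\mathcal H_F)$. Any $L\in\mathfrak l$ restricts to $L^{F}\in\mathfrak l^{F}$, so $\exp(L)|_{\mathcal H_F}=\exp(L^{F})$ lies in the connected Lie group $\mathcal G^{F}=\langle\exp(itH_k^{F})\rangle$; by the standard fact that every element of a connected Lie group is a finite product of elements of the generating one-parameter subgroups, $\exp(L^{F})=\prod_j\exp(it_jH_{k_j}^{F})$. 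The global product $V=\prod_j\exp(it_jH_{k_j})\in\tilde{\mathcal G}$ is block-diagonal and agrees with $\exp(L)$ on $\mathcal H_F$, hence on the test vectors; together with density of $\FPV$ this gives $\exp(L)\in\mathcal G$, and since $\mathcal G$ is strongly closed, the strong closure of $\exp(\mathfrak l)$ is contained in $\mathcal G$.

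For the reverse inclusion it suffices to place every generating product $W=\prod_j\exp(it_jH_{k_j})\in\tilde{\mathcal G}$ into the strong closure of $\exp(\mathfrak l)$, and by the same blockwise reduction this amounts to approximating $W|_{\mathcal H_F}\in\mathcal G^{F}$ by elements of $\exp(\mathfrak l^{F})$. The crux---and the step I expect to be the main obstacle---is therefore the finite-dimensional density statement $\overline{\exp(\mathfrak l^{F})}=\overline{\mathcal G^{F}}$, i.e.\ that the exponential image of the subalgebra is dense in the generally non-closed connected subgroup it generates. I would establish it using that $\mathfrak l^{F}$ is a (reductive) subalgebra of the compact Lie algebra $\mathfrak{u}(\mathcal H_F)$: writing $\mathfrak l^{F}=\mathfrak z\oplus\mathfrak s$ with $\mathfrak s=[\mathfrak l^{F},\mathfrak l^{F}]$ compact semisimple (whose connected subgroup $S$ is automatically closed and compact, with $\exp(\mathfrak s)=S$) and $\mathfrak z$ the abelian centre (whose image $\exp(\mathfrak z)$ is a subgroup dense in a torus $T_Z$), and using $[\mathfrak z,\mathfrak s]=0$ to factor $\exp(\mathfrak z\oplus\mathfrak s)=\exp(\mathfrak z)\,S$, one obtains $\overline{\exp(\mathfrak l^{F})}=T_Z\,S=\overline{\mathcal G^{F}}$. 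Lifting the approximant $\exp(L^{F})$ to any $L\in\mathfrak l$ with $L|_{\mathcal H_F}=L^{F}$ (possible since the restriction map $\mathfrak l\to\mathfrak l^{F}$ is onto and $\exp(L)$ is block-diagonal) then yields $W$ in the strong closure of $\exp(\mathfrak l)$, completing $\mathcal G\subseteq\overline{\exp(\mathfrak l)}$ and hence the asserted equality.
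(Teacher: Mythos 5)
Your proposal is correct and follows the same architecture as the paper's proof (Propositions~\ref{prop:4.1} and \ref{prop:4.2}, Lemma~\ref{lem:1}, Proposition~\ref{prop:5}): block-diagonalize along the finite-dimensional eigenspaces $\mathcal{H}^{(\mu)}$, truncate to a finite-dimensional subspace containing the test vectors, lift from the truncated Lie algebra back to $\mathfrak{l}$, and exploit that strong neighborhoods of unitaries need only be tested on the dense subspace $\FPV$. Parts (i) and (ii) match the paper's treatment; the one technicality you suppress is that sums and commutators are a priori defined only on $\FPV$, and the paper invokes Nelson's analytic vector theorem to see that $\FPV$ is a core, so that elements of $\mathfrak{u}(X)$ determine unique selfadjoint operators and $\exp$ is unambiguous --- your blockwise formula produces the right operator, but this should be said. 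The genuine divergence is the finite-dimensional core of (iii): the paper simply declares that the closed group $\mathcal{G}^{[K]}$ is a Lie group with Lie algebra $\mathfrak{l}^{[K]}$ and writes $U^{[K]}=\exp(Q_K)$ exactly, which glosses over the fact that the connected subgroup generated by the $\exp(itH^{[K]}_k)$ need not be closed and that the Lie algebra of its closure can be strictly larger (irrational torus windings); strictly speaking one only gets $U^{[K]}$ in the closure of $\exp(\mathfrak{l}^{[K]})$. Your substitute --- the density statement $\overline{\exp(\mathfrak{l}^{F})}=\overline{\mathcal{G}^{F}}$ proved via the reductive splitting $\mathfrak{l}^{F}=\mathfrak{z}\oplus[\mathfrak{l}^{F},\mathfrak{l}^{F}]$ into a compact semisimple part (closed, with surjective exponential) and a centre whose exponential is dense in a torus --- is precisely the care needed to make the inclusion $\mathcal{G}\subseteq\overline{\exp(\mathfrak{l})}$ airtight, and an approximate rather than exact match on the block is all the $\epsilon$-neighborhood argument requires. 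The only step you assert without proof, surjectivity of the restriction $\mathfrak{l}\to\mathfrak{l}^{F}$, is exactly the paper's Lemma~\ref{lem:1} and follows by lifting each commutator word in the $iH^{F}_k$ to the same word in the $iH_k$.
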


The basic idea behind this theorem, is that one can cut off the decomposition
(\ref{eq:37}) at a sufficiently high $\mu$ without sacrificing strong
approximations as described in Subsection~\ref{sec:strong-contr-1}. One only has
to take into account that the cut-off on $\mu$ has to become higher when the
approximation error decreases. This strategy allows for tracing a lot of 
calculations back to finite-dimensional Lie algebras. We will postpone a
detailed discussion of this topic---including the proof of Theorem~\ref{thm:6}---to 
Section~\ref{sec:lie-algebra-block}. 

The only additional material one needs at
this point, since it is of relevance for the next section, is a subgroup of $\mathcal{U}(X)$ and
its corresponding Lie algebra  which relates unitaries
with determinant one and their traceless generators. Since the $iH \in
\mathfrak{u}(X)$ are unbounded and not necessarily positive, it is difficult
to give a reasonable definition of tracelessness, and the determinant of $U \in
\mathcal{U}(X)$ runs into similar problems. However, the elements of
$U \in \mathcal{U}(X)$ and $iH \in \mathfrak{u}(X)$ are block diagonal with
respect to the decomposition of $\mathcal{H}$ given in (\ref{eq:37}). In other
words $U=\sum_{\mu} U^{(\mu)}$ and $H=\sum_{\mu} H^{(\mu)}$ are infinite sums of operators\footnote{Two small remarks
  are in order here: (i).~Infinite sums require a proper definition of
  convergence in an appropriate topology. In Section
  \ref{sec:lie-algebra-block}, this will be made precise. (ii).~Operator products of the form $X^{(\mu)} H
  X^{(\mu)}$ are potentially problematic if $H$ is unbounded and therefore
  only defined on a domain. In our case, however, $X^{(\mu)}$ projects onto
  $\mathcal{H}^{(\mu)}$, which is a subspace of the domain $\FPV$ on which $H$
  is defined.}, where
  $U^{(\mu)} = X^{(\mu)} U X^{(\mu)} \in \mathcal{U}(\mathcal{H}^{(\mu)})$,
  $H^{(\mu)} = X^{(\mu)} H X^{(\mu)} \in \mathcal{B}(\mathcal{H}^{(\mu)})$, and
$X^{(\mu)}$ denotes the projection onto the $X$-eigenspace
$\mathcal{H}^{(\mu)}$. Since all the $U^{(\mu)}$ and $H^{(\mu)}$ are operators
on finite-dimensional vector spaces, one can define
\begin{align}
  \mathcal{SU}(X) &:= \{ U \in \mathcal{U}(X)\, | \, \det U^{(\mu)} = 1\; \text{ for all } \;
  \mu \in \Bbb{Z}\}, \label{eq:42}
  \\ \mathfrak{su}(X) &:= \{ iH \in \mathfrak{u}(X)\, | \,
  \tr(H^{(\mu)}) = 0\; \text{ for all } \; \mu \in \Bbb{Z}\}.
\end{align}
Obviously, $\mathcal{SU}(X)$ is a (strongly closed) subgroup of
$\mathcal{U}(X)$ and $\mathfrak{su}(X)$ is a Lie subalgebra of
$\mathfrak{u}(X)$. The image of $\mathfrak{su}(X)$ under the exponential map
therefore coincides with $\mathcal{SU}(X)$. Note that $\mathcal{SU}(X)$ is
effectively an infinite direct product of groups $\mathrm{SU}(d^{(\mu)})$, if
$d^{(\mu)} = \dim \mathcal{H}^{(\mu)}$ and not the ``special'' subgroup of
$\mathcal{U}(X)$. 

\subsection{Breaking the symmetry}

To get a fully controllable system, one has to leave the group $\mathcal{U}(X)$,
which can be thought of as being represented as block diagonal, see Fig.~\ref{fig:1}~a.
To this end, we have to add control Hamiltonians that break the
symmetry. There are several ways of doing so, and a successful strategy depends
on the system in question (beyond 
the treatment of the symmetric part of
the dynamics captured in Theorem~\ref{thm:6}). Here, we will present a special
result which covers the examples discussed in the next section. The first step
is another direct sum decomposition of 
  $\mathcal{H} = \mathcal{H}_- \oplus \mathcal{H}_0 \oplus \mathcal{H}_+$, where 
  $\mathcal{H}_\alpha = E_\alpha \mathcal{H}$, with $\alpha\in\{+,0,-\}$  
are projections onto the subspaces
$\mathcal{H}_\alpha$ and should satisfy $[E_\alpha,X^{(\mu)}] = 0$. 
Let in the following $\N:=\{1,2,3,\ldots\}$ denote the set of positive
integers and define $\N_0:=\N\cup\{0\}$.
Hence for $\mu \in \N_0$ we can introduce the
projections $X_\pm^{(\mu)} = X^{(\mu)} E_\pm$ which we require to be non-zero. For the exceptional case $\mu=0$ 
 the relation $X_-^{(0)} = X^{(0)} E_- = X^{(0)}$ should hold. 
 Futhermore we write $X^{(\mu)}_0 = X^{(\mu)} E_0$ for the overlap of
$X^{(\mu)}$ and $E_0$ which can (in contrast to $X^{(\mu)}_\pm$) be equal to zero for
all $\mu$. The $X^{(\mu)}_\alpha$ are projections onto the subspaces
$\mathcal{H}^{(\mu)}_\alpha := X^{(\mu)}_\alpha \mathcal{H}$ satisfying
$X^{(\mu)} = X^{(\mu)}_- \oplus X^{(\mu)}_0 \oplus X^{(\mu)}_+$. 

\begin{defi} \label{def:1}
  A selfadjoint operator $H$ with domain $D(H)$ is called complementary to
  $X$, if there exists a decompositon  $\mathcal{H} = \mathcal{H}_- \oplus \mathcal{H}_0 
  \oplus \mathcal{H}_+$ 
  as defined above such that: 
  \begin{enumerate}
  \item
    $\mathcal{H}_0 \subset D(X)$ and $H\psi = 0$ for all $\psi \in \mathcal{H}_0$.
  \item \label{item:1}
    $\FPV \subset D(H)$ and for all $\mu > 1$ we have $H X_{+}^{(\mu+1)}\psi =
    X_-^{(\mu)} H \psi$. The corresponding operator $X_-^{(\mu)} H
    X_+^{(\mu+1)} \in \mathcal{B}(\mathcal{H})$ is a partial isometry with
    $X_+^{(\mu+1)}$ as its source and $X_-^{(\mu)}$ as its target projection.
  \item \label{item:2}
    Given the projection $F_{[0]} = X^{(0)}_{\phantom{+}} \oplus X^{(1)}_-$ and the
    corresponding subspace $\mathcal{H}_{[0]} = F_{[0]} \mathcal{H}$. The
    group generated by $\exp(i t H)$ with $t \in \Bbb{R}$ and those $U \in
    \mathcal{SU}(X)$ which commute with $F_{[0]}$ acts transitively on the
    space of one-dimensional projections in $\mathcal{H}_{[0]}$. 
  \end{enumerate}
\end{defi}

\begin{figure}[tb]
(a)\hspace{70mm} (b)\\[-5mm]
\centerline{
\begin{tabular}{c@{\hspace{20mm}}c}
\hspace{12mm}\includegraphics[width=.37\textwidth]{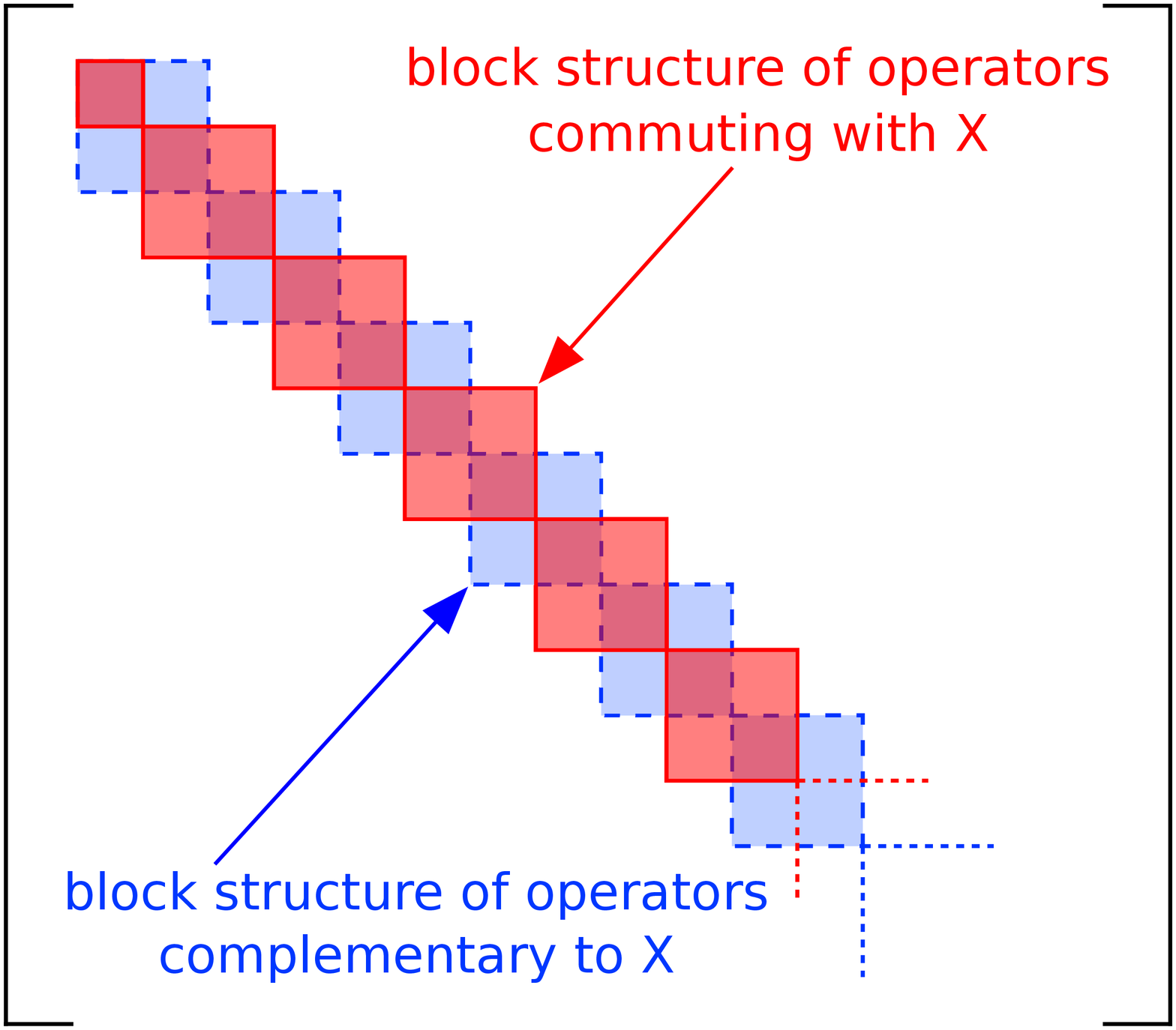} &
\raisebox{-2.3mm}{\includegraphics[width=.49\textwidth]{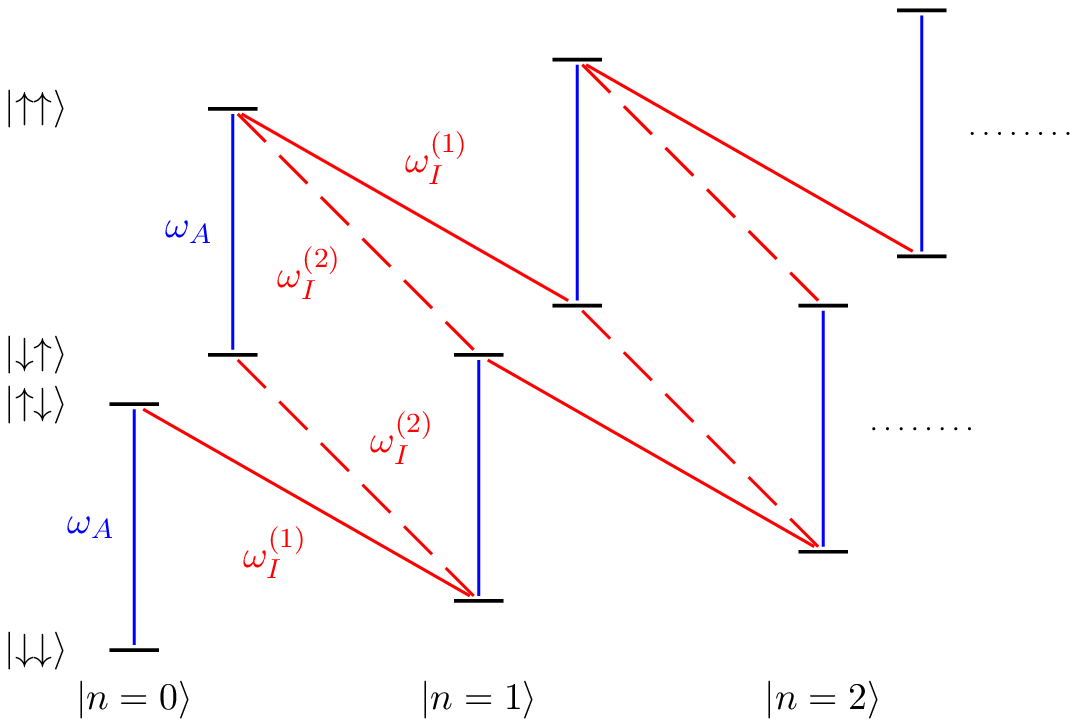} }
\end{tabular}}
\caption{\small \label{fig:1} (a) Block structure of operators in $\mathfrak{u}(X)$ (red) and of
    operators complementary to $X$ (blue) in the case where the projection
    $E_0$ vanishes. (b) Energy diagram  for the Jaynes-Cummings model (here two atoms in a cavity under individual controls
   $\omega_I^{(1)}$ and  $\omega_I^{(2)}$)
          with combined atom-cavity transitions matching the block structure of (a) given in red (see  Eqs.~(\ref{eq:15},\ref{eq:44})) since commuting with $X_1$ or $X_M$  of Eqs.~(\ref{eq:36}, \ref{eq:47}),
	and complementary transitions solely within the atoms given in blue  (see Eqs.~(\ref{eq:25},\ref{eq:45})). } 
\end{figure}

At first sight, the definition may look somewhat clumsy, but it allows for
proving a controllability result which covers all examples we are going to
present in the next 
section. We will state them here without a proof and postpone the latter to
Sect.~\ref{sec:strong-contr}.

\begin{thm} \label{thm:10}
  Consider a strongly continuous representation $\pi: \mathrm{U}(1) \to
  \mathcal{U}(\mathcal{H})$ with charge operator $X$ and a family of
  selfadjoint operators $H_1, \dots, H_d$ on
  $\mathcal{H}$. Assume that the following conditions hold:
  \begin{enumerate}
  \item 
    $H_1, \dots, H_{d-1}$ commute with $X$.
  \item 
    The dynamical group generated by $H_1, \dots, H_{d-1}$ contains
    $\mathcal{SU}(X)$.
  \item 
    The operator $H_d$ is complementary to $X$.
  \end{enumerate}
  Then the control system \eqref{eq:1} with Hamiltonians
  $H_0=\Bbb{1}, H_1, \dots, H_d$ is pure-state controllable.
\end{thm}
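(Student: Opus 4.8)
The plan is to prove pure-state controllability by showing that the reachable set from a fixed reference vector $\psi_0$ is dense in the unit sphere of $\mathcal{H}$. First I would record what sits inside the dynamical group $\mathcal{G}$: by the second assumption together with Theorem~\ref{thm:6} it contains $\mathcal{SU}(X)$, by the third assumption it contains every $\exp(itH_d)$, and since $H_0=\unity$ it contains the global phases $e^{i\theta}\unity$. Because $\mathcal{G}$ is a \emph{group} and the reachable set from $\psi_0$ is its strongly closed orbit $\overline{\mathcal{G}\psi_0}$, the relation ``$\psi$ reachable from $\phi$'' is symmetric and transitive (if $U_n\phi\to\psi$ then $U_n^{-1}\psi\to\phi$), so it suffices to connect any two members of a dense set. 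As $\FPV$ is dense, I would reduce to finite-particle unit vectors and, concretely, show that every such vector is reachable from a vector lying in $\mathcal{H}_{[0]}$, on which condition~(\ref{item:2}) of Definition~\ref{def:1} already grants transitivity.

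The engine of the argument is the ladder structure of $H_d$. From Definition~\ref{def:1}(\ref{item:1}) and selfadjointness one checks that, for $\mu>1$, $H_d$ maps $\mathcal{H}_+^{(\mu+1)}$ isometrically onto $\mathcal{H}_-^{(\mu)}$ and vice versa through the partial isometry $X_-^{(\mu)}H_dX_+^{(\mu+1)}$, annihilates $\mathcal{H}_0$, and carries no other matrix elements. Thus $H_d$ leaves each ``rung'' $\mathcal{H}_-^{(\mu)}\oplus\mathcal{H}_+^{(\mu+1)}$ invariant and acts there purely off-diagonally with eigenvalues $\pm1$; consequently a single pulse $\exp(itH_d)$ rotates \emph{all} rungs simultaneously by the \emph{same} angle $t$, a synchronized Rabi rotation that for $t=\pi/2$ swaps $\mathcal{H}_-^{(\mu)}$ and $\mathcal{H}_+^{(\mu+1)}$.

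Next I would build a state-transfer ``conveyor''. Since the requirement that the $X_\pm^{(\mu)}$ be non-zero forces $\dim\mathcal{H}^{(\mu)}\ge 2$ for every $\mu\ge 1$, the factor $\mathrm{SU}(d^{(\mu)})$ of $\mathcal{SU}(X)$ acts transitively on the unit sphere of each such block, so I can rotate the component of a state lying in any block $\mu\ge1$ entirely into its plus-part $\mathcal{H}_+^{(\mu)}$ while leaving the other blocks untouched. With every occupied block above the bottom thus moved into its plus-part, a $t=\pi/2$ pulse shifts each component down by one rung, $\mathcal{H}_+^{(\mu+1)}\to\mathcal{H}_-^{(\mu)}$, and—because the minus-parts were emptied—nothing clean is carried back up, so the highest occupied block is vacated. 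Iterating this rotate-then-pulse step drains any finite-particle vector down the ladder until all amplitude sits in the lowest few blocks; the residual transfer into $\mathcal{H}_{[0]}=\mathcal{H}^{(0)}\oplus\mathcal{H}_-^{(1)}$ and the final connection to $\psi_0$ are then supplied by condition~(\ref{item:2}), whose transitivity group is generated precisely by $\exp(itH_d)$ and the $F_{[0]}$-commuting elements of $\mathcal{SU}(X)$. Combined with the symmetry of reachability and the density of $\FPV$, this yields pure-state controllability.

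The main obstacle is the bottom of the ladder. The clean partial-isometry form is only guaranteed for $\mu>1$, so the lowest rungs coupling the blocks $0,1,2$ need not be synchronized Rabi rotations and resist the conveyor; all of this delicate low-$\mu$ bookkeeping is deliberately absorbed into the transitivity hypothesis~(\ref{item:2}), which is why it is assumed rather than derived. A secondary difficulty is that each pulse $\exp(itH_d)$ acts on every rung at once; the resolution above—pre-rotating all amplitude into the plus-subspaces so that one synchronized pulse shifts the whole state down without back-action—is exactly what makes the conveyor well defined.
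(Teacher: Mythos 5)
Your proposal follows essentially the same route as the paper's own proof (Lemma~\ref{lem:7} combined with Proposition~\ref{prop:9}): reduce by density to finite\-/particle vectors, use the blockwise transitivity of $\mathcal{SU}(X)$ to align each charge sector with a rung of $H_d$, apply the synchronized $\pi/2$-pulse $\exp(i\pi H_d/2)$ to shift the state down the ladder into $\mathcal{H}_{[0]}$, invoke condition~(\ref{item:2}) of Definition~\ref{def:1} there, and finish via the group structure of the reachability relation together with the global phase supplied by $H_0=\unity$. The only divergence is at the level of bookkeeping: you pre-rotate each block into its plus-part before pulsing, whereas Lemma~\ref{lem:7} phrases the analogous step as emptying the plus-parts --- your orientation is the one consistent with the convention that $H_d$ carries $\mathcal{H}_+^{(\mu+1)}$ onto $\mathcal{H}_-^{(\mu)}$, and both arguments defer the same low-$\mu$ subtleties to hypothesis~(\ref{item:2}).
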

\begin{thm} \label{thm:9}
The
  control system (\ref{eq:1}) is even strongly controllable
  if in addition to the assumptions of Thm.~\ref{thm:10} the condition $\dim
  \mathcal{H}^{(\mu)} > 2$ holds for at least one $\mu \in \N_0$.
\end{thm}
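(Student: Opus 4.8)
The plan is to show that, under the extra dimension hypothesis, the dynamical group $\mathcal{G}$ is all of $\mathcal{U}(\mathcal{H})$, which by the discussion in Section~\ref{sec:dynamical-group} is equivalent to strong controllability; since $\unity=H_0$ is among the generators the global phase is free, so it suffices to produce the full special-unitary content, i.e.\ to rule out that $\mathcal{G}$ is a proper strongly closed subgroup acting transitively on pure states. Theorem~\ref{thm:10} already hands us pure-state controllability, so $\mathcal{G}$ acts transitively on the pure states and is in particular irreducible; the entire task is therefore to close the single gap between transitivity on pure states and transitivity on the level of operators.

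First I would recall the finite-dimensional dichotomy that makes this gap precise: a closed connected subgroup of $\mathrm{U}(n)$ acting transitively on the projective space of pure states is, up to the central phase, either the full $\mathrm{SU}(n)$ or (for even $n$) conjugate to the compact symplectic group $\mathrm{Sp}(n/2)$, the latter being characterized by the existence of an invariant \emph{quaternionic structure}, i.e.\ an antiunitary $J$ with $J^2=-\unity$ commuting with the group. (A purely real type is excluded by transitivity, since an orthogonal group fixes the totally real sphere; the remaining sphere-transitive exceptional groups are of real type and hence likewise fail to be transitive on complex pure states.) Thus, arguing by contradiction, if $\mathcal{G}$ were not strongly dense there would exist such a $J$ commuting with every element of $\mathcal{G}$.

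The decisive step -- and the place where the hypothesis $\dim\mathcal{H}^{(\mu)}>2$ enters -- is to exclude this $J$ using that $\mathcal{SU}(X)\subseteq\mathcal{G}$. Since $\mathcal{SU}(X)$ is the block product $\prod_\nu \mathrm{SU}(\mathcal{H}^{(\nu)})$ and $J$ commutes with all of it, I would first show that $J$ cannot interchange two distinct blocks: testing the relation $Jg=gJ$ against a $g\in\mathcal{SU}(X)$ supported on a single block $\mathcal{H}^{(\nu)}$ forces any partner block to be fixed pointwise, which is impossible for $\dim\mathcal{H}^{(\nu)}\ge 2$. Hence $J$ leaves each $\mathcal{H}^{(\nu)}$ invariant, and $J|_{\mathcal{H}^{(\mu)}}$ is an antiunitary with square $-\unity$ commuting with the standard representation of $\mathrm{SU}(\mathcal{H}^{(\mu)})$. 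For $\dim\mathcal{H}^{(\mu)}>2$ this standard representation is of complex type (it is not self-dual), so no invariant quaternionic structure exists -- a contradiction. Consequently $\mathcal{G}$ is of complex type and therefore full.

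Finally, since $\mathcal{G}$ lives in infinite dimensions, I would make the above rigorous by the cut-off technique already used for Theorem~\ref{thm:6}: given $U$, finitely many vectors $\psi_1,\dots,\psi_f$ and $\epsilon$, approximate the $\psi_k$ and $U\psi_k$ by finite-particle vectors supported on $\mathcal{H}_{\le N}=\bigoplus_{\mu\le N}\mathcal{H}^{(\mu)}$, run the symplectic-exclusion argument on this finite block to obtain the full $\mathrm{SU}(\mathcal{H}_{\le N})$ in the relevant local group, and then extend by strong approximation. I expect the main obstacle to be exactly this infinite-dimensional bookkeeping rather than the algebra: the generators in $\mathcal{SU}(X)$ are block diagonal and harmless, but the symmetry-breaking $H_d$ couples block $N$ to block $N+1$, so one must control the leakage across the cut-off and ensure that the finite-dimensional classification (together with the quaternionic-structure exclusion) transfers uniformly in $N$ to the strong closure. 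Showing that the block-diagonal $\mathcal{SU}(X)$ together with the single adjacent coupling $H_d$ really generates all inter-block rotations on each cut-off -- with the condition $\dim\mathcal{H}^{(\mu)}>2$ being precisely what prevents the generated group from collapsing onto a symplectic subgroup -- is the technical heart of the argument.
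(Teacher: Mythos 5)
Your overall strategy is the one the paper itself follows: combine transitivity on pure states with the finite-dimensional classification of groups acting transitively on projective space (full unitary versus unitary symplectic), and kill the symplectic alternative by exploiting a block $\mathcal{H}^{(\mu)}$ with $\dim\mathcal{H}^{(\mu)}>2$. Your representation-theoretic exclusion step is correct and coincides with the paper's Lemma~\ref{lem:9}, merely phrased via an invariant quaternionic structure $J$ with $J^2=-\unity$ instead of via self-conjugacy of the defining representation of $\mathrm{SU}(d^{(\mu)})$; these formulations are equivalent, and your observation that $J$ must preserve each block is a correct (and slightly more explicit) version of the same argument.

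However, two substantive steps are missing, and they are exactly the ones you defer to ``bookkeeping''. First, the dichotomy you invoke is a finite-dimensional theorem; it cannot be applied to the strongly closed group $\mathcal{G}\subset\mathcal{U}(\mathcal{H})$ directly, and Theorem~\ref{thm:10} (transitivity on the pure states of all of $\mathcal{H}$) does not by itself produce a subgroup of $\mathcal{G}$ that preserves a finite-dimensional subspace and acts transitively on its unit sphere. The paper manufactures such a subgroup in Lemma~\ref{lem:7}, and the correct cut-off spaces are not your $\bigoplus_{\mu\le N}\mathcal{H}^{(\mu)}$ but $\mathcal{H}_{[K]}=\mathcal{H}^{[K]}\oplus\mathcal{H}^{(K+1)}_-$, chosen so that the ladder structure of the complementary operator $H_d$ (the partial isometries $X^{(\mu)}_- H_d X^{(\mu+1)}_+$ of Definition~\ref{def:1}) lets one funnel any vector down to $\mathcal{H}_{[0]}$, where condition~(iii) of that definition supplies transitivity. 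Second, even granting that for each $K$ every unitary of $\mathcal{U}(\mathcal{H}_{[K]})$ is realized by an element of $\mathcal{G}$ commuting with $F_{[K]}$, you must still show that an arbitrary $U\in\mathcal{U}(\mathcal{H})$ is a strong limit of such elements. The compression $F_{[K]}UF_{[K]}$ is not unitary, so ``approximate $\psi_k$ and $U\psi_k$ by finite-particle vectors'' does not hand you a unitary target on $\mathcal{H}_{[K]}$; the paper's Lemmas~\ref{lem:8} and~\ref{lem:10} close this gap by taking the polar decomposition of $F_{[K]}UF_{[K]}$, extending the resulting partial isometry to one with source and target projection $F_{[K]}$, proving strong convergence of these to $U$, and padding with an arbitrary unitary on $(\unity-F_{[K]})\mathcal{H}$. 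Until these two steps are supplied, your proposal is an accurate outline of the paper's proof rather than a complete proof.
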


\section{Atoms in a cavity}
\label{sec:atoms-cavity}

An important class of examples that can be treated along the lines described
in the last section are atoms interacting with the light field in a cavity. We
will discuss the case of $M$ two-level atoms interacting with one mode in
detail and consider three particular scenarios: one atom in Sect.~\ref{sec:one-atom}, 
individually controlled atoms in Sect.~\ref{sec:many-atoms-indiv},
and atoms under collective control in Sect.~\ref{sec:many-atoms-coll}.

\subsection{One atom}
\label{sec:one-atom}

Let us start with the special case $M=1$, i.e.\ one atom and one mode as
discussed in a number of previous publications  
mostly on pure-state controllability \cite{LE96,BBR10,YL07}. Our results go beyond this, in
particular because we are considering strong controllability not just pure-state controllability. 
The Hilbert space of the system is given by 
\begin{equation}
  \mathcal{H} = \C^2 \otimes \mathrm{L}^2(\R)
\end{equation}
and the dynamics is described by the well known Jaynes-Cummings Hamiltonian \cite{JC63}:
\begin{gather}
  H_{\mathrm{JC}} :=  \omega_A H_{\mathrm{JC},1} + \omega_I H_{\mathrm{JC},2} +
  \omega_C  H_{\mathrm{JC},3} \; \text{ with } \label{eq:9}\\
  H_{\mathrm{JC},1} := (\sigma_3 \otimes \unity)/2,\, H_{\mathrm{JC,2}}
  := (\sigma_+ \otimes a + \sigma_- \otimes a^*)/{2}, \,
  H_{\mathrm{JC},3} := \unity \otimes N \label{eq:15},
\end{gather}
where $\sigma_\alpha$ with $\alpha\in\{1,2,3\}$ are the Pauli matrices 
($\sigma_\pm=\sigma_1 \pm i \sigma_2$), $a,a^*$ denote
the annihilation and creation operator, and $N = a^*a$ is the number operator. The
joint domain of all these Hamiltonians is the space  
\begin{equation} \label{eq:40}
  D = \SP\{\ket{\nu}\otimes\ket{n}\, | \, \nu\in\{0,1\} \, \text{ and }\, n \in \N_0 \}
\end{equation}
with $\nu \in \C^2$ as canonical basis and $\ket{n} \in
\mathrm{L}^2(\R)$ as number basis (Hermite functions).  

We will assume that the frequencies $\omega_A$, $\omega_I$  and
$\omega_C$  can be controlled independently (or at least two of
them) such that we get a control system with control Hamiltonians
$H_{\mathrm{JC},j}$ where $j\in\{1,2,3\}$ 
corresponding to the lower half (1 atom) of the energy diagram in Fig.~\ref{fig:1}~b,
where we adopt the widely used convention of  
forcing the atom (spin) state $\ket{\negthickspace\uparrow}$ to be of \/`higher\/' energy than $\ket{\negthickspace\downarrow}$
to compensate for negative Larmor frequencies, see, e.g., the note in \cite[p.~144]{Haroche06}. 
The task is to determine the dynamical group $\mathcal{G}$. To this end, we use the
strategy described in Subsection~\ref{sec:abelian-symmetries}, which follows in
this particular case closely the exact solution of the Jaynes-Cummings model
\cite{JC63}. The charge-type operator $X_1$  (determining the block structure)
then takes the form
\begin{equation} \label{eq:36}
  X_1 = \sigma_3 \otimes \unity + \unity \otimes N,
\end{equation}
again with $D$ from (\ref{eq:40}) as its domain, which  in this case turns out 
to be identical to the space $D_{X_1}$ of finite-particle vectors. The operator $X_1$ is
diagonalized by the basis $\ket{\nu}\otimes\ket{n}$. It is convenient to
relabel these vectors in order to get
\begin{equation} \label{eq:27}
  \ket{\mu,\nu} = \ket{\nu} \otimes \ket{\mu - \nu} \in \mathcal{H}\,\text{ with }\, \mu =
  n+\nu \geq 0.
\end{equation}
In this basis, we have $X_1 \ket{\mu,\nu} = \mu \ket{\mu,\nu}$ and the
subspaces $\mathcal{H}^{(\mu)}$ from  (\ref{eq:37}) become
\begin{equation} \label{eq:38}
  \mathcal{H}^{(\mu)} = \SP\{ \ket{\mu,0},\ \ket{\mu,1} \}
\end{equation}
for $\mu>0$ and $\mathcal{H}^{(0)} = \C \ket{0,0}$ for $\mu=0$. 
The
space $D_{X_1}  \subset \mathcal{H}$ of finite-particle vectors turns out to be
identical with the domain $D$ from (\ref{eq:40}).

It is easy to see that the operators $H_{\mathrm{JC},j}$ from
Eq.~\eqref{eq:15} commute with $X_1$, and therefore we get $i H_{\mathrm{JC},j} \in
\mathfrak{u}(X_1)$.  A more detailed analysis, as will be given in
Section~\ref{sec:lie-algebra-block}, shows that  $i H_{\mathrm{JC},1}$  and $i H_{\mathrm{JC},2}$ generate
$\mathfrak{su}(X_1)$, and therefore we get according to Theorem~\ref{thm:6}: 
\begin{thm} \label{thm:3}
  The dynamical group $\mathcal{G}$ generated by $H_{\mathrm{JC},j}$ with $j\in\{1,2\}$ from
  Eq.~(\ref{eq:15}) coincides with the group $\mathcal{SU}(X_1)$ defined in (\ref{eq:42}).
\end{thm}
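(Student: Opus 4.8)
The plan is to reduce the claim to a finite-dimensional Lie-algebra computation via Theorem~\ref{thm:6}. By part~(3) of that theorem the dynamical group $\mathcal{G}$ equals the strong closure of $\exp(\mathfrak{l})$, where $\mathfrak{l}\subset\mathfrak{u}(X_1)$ is the Lie algebra generated by $iH_{\mathrm{JC},1}$ and $iH_{\mathrm{JC},2}$, so everything hinges on understanding $\mathfrak{l}$ block by block. First I would record the blocks in the basis~\eqref{eq:38}: a direct computation gives $H_{\mathrm{JC},1}^{(\mu)}=\tfrac12\sigma_3$ (the same matrix in every block) and $H_{\mathrm{JC},2}^{(\mu)}=\sqrt{\mu}\,\sigma_1$, while $\mathcal{H}^{(0)}$ is one-dimensional and contributes nothing. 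Since both blocks are traceless we get $\mathfrak{l}\subseteq\mathfrak{su}(X_1)$, and because $\mathcal{SU}(X_1)$ is strongly closed this already yields the easy inclusion $\mathcal{G}\subseteq\mathcal{SU}(X_1)$.

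For the reverse inclusion I would fix $U\in\mathcal{SU}(X_1)$, test vectors $\psi_1,\dots,\psi_f$ and $\epsilon>0$, and exploit that strong approximation constrains only finitely many blocks. Choosing $N$ large enough, each $\psi_k$ lies within $\epsilon/2$ of its truncation to $\bigoplus_{\mu\le N}\mathcal{H}^{(\mu)}$, so it suffices to reproduce $U$ exactly on these blocks. The central step is a finite-dimensional generation statement: the projected generators $A^{(\mu)}=\tfrac{i}{2}\sigma_3$ and $B^{(\mu)}=i\sqrt{\mu}\,\sigma_1$ with $1\le\mu\le N$ generate the full direct sum $\bigoplus_{\mu=1}^{N}\mathfrak{su}(2)$. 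I would prove this by the standard distinct-coupling argument: the brackets $[A,B]$ and $[A,[A,B]]$ recover the $\sigma_2$- and $\sigma_1$-directions with coefficient $\sqrt{\mu}$, while $D:=[[A,B],B]$ has blocks proportional to $\mu\,i\sigma_3$; iterating $\mathrm{ad}_B^2$ then produces diagonal ($i\sigma_3$) blocks with coefficient $\mu^k$ for every $k\ge0$, and analogously half-integer powers in the $\sigma_1,\sigma_2$ directions. Because the nodes $\mu=1,\dots,N$ are pairwise distinct, the Vandermonde matrix $(\mu^k)_{\mu,k}$ is invertible, so suitable real linear combinations isolate each block independently; hence the projection of $\mathfrak{l}$ onto $\bigoplus_{\mu\le N}\mathfrak{su}(2)$ is surjective.

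To finish, I would use that $K:=\prod_{\mu=1}^{N}\mathrm{SU}(2)$ is compact and connected, so its exponential map is onto. I pick $Y\in\bigoplus_{\mu\le N}\mathfrak{su}(2)$ with $\exp(Y)$ equal to the restriction of $U$ to the low blocks, lift $Y$ to some $\tilde Y\in\mathfrak{l}$ using the surjectivity just established, and observe that $\exp(\tilde Y)\in\exp(\mathfrak{l})\subseteq\mathcal{G}$ acts blockwise and therefore agrees with $U$ on all blocks $\mu\le N$. Consequently $\|[U-\exp(\tilde Y)]\psi_k\|<\epsilon$ for each $k$, so $U$ lies in the strong closure $\mathcal{G}$. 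Combining the two inclusions gives $\mathcal{G}=\mathcal{SU}(X_1)$.

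The genuine obstacle is the passage from the formally infinite product $\mathcal{SU}(X_1)\cong\prod_\mu\mathrm{SU}(2)$ to something controllable: no single element of $\mathfrak{l}$ can steer all blocks independently, since a polynomial in $\mu$ is determined by finitely many coefficients. The resolution, and the place where the structure of Theorem~\ref{thm:6} is essential, is that strong approximation only ever tests finitely many blocks at once, which is precisely what makes the Vandermonde/distinct-coupling argument applicable. Verifying that the coupling strengths $\sqrt{\mu}$ are pairwise distinct, so that block separation is possible, is the one spot where the specific form of the Jaynes-Cummings interaction rather than abstract generalities enters.
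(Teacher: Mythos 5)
Your proposal is correct and follows essentially the same route as the paper: reduce to finitely many blocks via Theorem~\ref{thm:6}(3) (Proposition~\ref{prop:5} and Corollary~\ref{kor:1}), compute the blocks $H_{\mathrm{JC},1}^{(\mu)}\propto\varsigma_3^{(\mu)}$ and $H_{\mathrm{JC},2}^{(\mu)}=\sqrt{\mu}\,\varsigma_1^{(\mu)}$, and separate the blocks using the distinctness of the couplings. Your Vandermonde argument is the same device the paper packages as Lemma~\ref{lem:2} (the spanning operators $A_{\alpha,k}=X_1^k\sqrt{X_1}\varsigma_\alpha$, whose brackets raise the power of $\mu$) together with the interpolating polynomial $f$ with $f(\mu_0)=1$ and $f(\mu)=0$ for the other $\mu\le K$ in Proposition~\ref{prop:7}.
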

To get a fully controllable system, apply Theorem~\ref{thm:9} to see that 
one has to add  a Hamiltonian which  breaks the symmetry.  A possible candidate is
\begin{equation} \label{eq:25}
  H_{\mathrm{JC},4} = \sigma_1 \otimes \unity \in \mathcal{B}(\mathcal{H}).
\end{equation}
If we define the spaces $\mathcal{H}_\alpha$ as
  $\mathcal{H}_- = \SP \{\ket{\mu,0}\, | \, \mu \in \N_0\}$, $\mathcal{H}_0 =
\{0\}$, and $\mathcal{H}_+ = \SP\{\ket{\mu,1}\, | \, \mu \in \N \}$ 
the operator $H_{\mathrm{JC},4}$ becomes complementary to $X_1$, which can be
easily seen since $\mathcal{H}^{(\mu)}_+ = \Bbb{C} \ket{\mu,1}$,
$\mathcal{H}^{(\mu)}_- = \Bbb{C} \ket{\mu,0}$, and $\mathcal{H}^{(\mu)}_0 =
\{0\}$. Hence, according to Thm. \ref{thm:10}, the control system with Hamiltonians 
of Eqs.~(\ref{eq:9},\ref{eq:15}).
\begin{equation} \label{eq:65}
  H_0 = \Bbb{1},\; H_1 = H_{\mathrm{JC},1},\; H_2 =
  H_{\mathrm{JC},2},\; H_3 = H_{\mathrm{JC},4}
\end{equation}
is pure-state controllable\footnote{We have
  omitted the Hamiltonian $H_{\mathrm{JC},3}$ since it is not needed for the
  result. However, it can be added as a drift term without changing the result.}, and we are
recovering a previous result from \cite{LE96,BBR10,YL07}. However, with our
methods we can go beyond this and prove even strong
controllability. Thm. \ref{thm:9} cannot be applied since $\dim
\mathcal{H}^{(\mu)} \leq 2$ for all $\mu$, but the analysis of
Sect. \ref{sec:strong-contr} will lead to an independent argument.
\begin{thm} \label{thm:1}
  The control problem (\ref{eq:1}) with Hamiltonians $H_j$ and $j\in\{0,\dots,3\}$ from
  Eq.~\eqref{eq:65} is strongly controllable.
\end{thm}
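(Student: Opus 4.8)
The plan is to prove that the dynamical group $\mathcal{G}$ generated by $H_0,\dots,H_3$ from \eqref{eq:65} is all of $\mathcal{U}(\mathcal{H})$, which is exactly strong controllability. The starting point is Theorem~\ref{thm:3}, which gives $\mathcal{SU}(X_1)\subseteq\mathcal{G}$; since $H_3=H_{\mathrm{JC},4}$ and $H_0=\unity$ are also available, $\mathcal{G}$ is a strongly closed group containing the global phases $e^{i\theta}\unity$, the whole block group $\mathcal{SU}(X_1)$ (in particular a copy of $\mathrm{SU}(2)$ acting on each two-dimensional $\mathcal{H}^{(\mu)}$, $\mu\geq 1$), and the one-parameter group $e^{itH_{\mathrm{JC},4}}$. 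Because the finite-particle vectors are dense, it suffices to prove the finite-level statement that, for every $N$, $\mathcal{G}$ contains the copy of $\mathrm{SU}(\mathcal{H}_{\leq N})$ acting on the cutoff space $\mathcal{H}_{\leq N}=\bigoplus_{\mu=0}^{N}\mathcal{H}^{(\mu)}$ and trivially on its orthogonal complement; the residual determinant degree of freedom is then supplied by the global phase $e^{i\theta}\unity$, which is precisely the joint phase allowed in the definition of strong controllability. The reduction itself is routine: given $U$, vectors $\psi_1,\dots,\psi_f$ and $\epsilon>0$, one approximates the $\psi_k$ and the $U\psi_k$ by finite-particle vectors supported in some $\mathcal{H}_{\leq N}$, adjusts the resulting approximately isometric frame to an exact one, and extends it to a cutoff unitary reproducing $U$ on the $\psi_k$ up to $\epsilon$.

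The heart of the matter is thus finite-dimensional, but with one genuine obstruction: $H_{\mathrm{JC},4}=\sigma_1\otimes\unity$ couples every block $\mu$ to $\mu+1$ at once and therefore leaks out of every cutoff, so it cannot simply be restricted. The key step I would take is to manufacture the individual, \emph{localized} inter-block couplings out of $\mathcal{G}$. Write $H_{\mathrm{JC},4}=\sum_{\mu\geq 0}c_\mu$, where $c_\mu$ is the Hermitian operator exchanging $\ket{\mu,0}$ and $\ket{\mu+1,1}$ and annihilating all other basis vectors. Conjugating with the sign operators $R=\sum_\mu\epsilon_\mu X^{(\mu)}$, where $\epsilon_\mu\in\{\pm 1\}$ and $\epsilon_0=1$ (so that $\det R^{(\mu)}=1$ in every block and hence $R\in\mathcal{SU}(X_1)\subseteq\mathcal{G}$), one finds $R\,c_\mu\,R^{*}=\epsilon_\mu\epsilon_{\mu+1}\,c_\mu$, so that conjugation multiplies each coupling $c_\mu$ by the sign $\epsilon_\mu\epsilon_{\mu+1}$, and these signs realise an arbitrary pattern as $(\epsilon_\mu)$ varies. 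Choosing the pattern that flips exactly $c_j$ and forming, through the Trotter product formula (which is unproblematic here because $H_{\mathrm{JC},4}$ is bounded), the combination $\tfrac12\bigl(H_{\mathrm{JC},4}-R\,H_{\mathrm{JC},4}\,R^{*}\bigr)=c_j$ shows that every localized one-parameter group $e^{itc_j}$ lies in $\mathcal{G}$.

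With the $e^{itc_j}$ and the block-local $\mathrm{SU}(2)$'s from $\mathcal{SU}(X_1)$ at hand, the basis vectors become the sites of a single connected chain $\ket{0,0}-\ket{1,1}-\ket{1,0}-\ket{2,1}-\ket{2,0}-\cdots$, in which consecutive sites are joined by a localized anti-selfadjoint generator contained in $\mathcal{G}$ (the $ic_\mu$ across neighbouring blocks, the three $\mathrm{SU}(2)$ generators within each block). Restricting to the first $2N+1$ sites, which span exactly $\mathcal{H}_{\leq N}$, all these generators act trivially outside $\mathcal{H}_{\leq N}$, and the standard Lie-algebra rank argument for a nearest-neighbour chain---its coupling graph being connected---shows that they and their iterated commutators span $\mathfrak{su}(\mathcal{H}_{\leq N})$. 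Exponentiating and taking products of these genuinely localized unitaries never leaves $\mathcal{G}$ and produces the full copy of $\mathrm{SU}(\mathcal{H}_{\leq N})$, which is the finite-level statement; together with the first paragraph this completes the proof.

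I expect the main obstacle to be precisely the leakage just described---converting the global symmetry-breaking generator $H_{\mathrm{JC},4}$ into localized, cutoff-preserving couplings---which is what forces the sign-conjugation device and is also the reason the dimension hypothesis $\dim\mathcal{H}^{(\mu)}>2$ of Theorem~\ref{thm:9} can be dispensed with in this example. A secondary point needing care is the strong-topology reduction to finite cutoffs, in particular the adjustment of an approximately isometric frame of finite-particle vectors---which, as the definition of strong controllability permits, need be neither orthonormal nor linearly independent---to an exact one before it is extended to a cutoff unitary.
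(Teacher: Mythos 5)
Your proof is correct, but the core of your argument differs from the paper's. Both proofs share the same outer shell: reduce strong approximation of an arbitrary $U$ to exhibiting, inside the dynamical group, the full unitary group of each finite cutoff space (up to the global phase supplied by $H_0=\unity$); this reduction is exactly the content of the paper's Lemmas~\ref{lem:8} and \ref{lem:10} (polar decomposition of $F_{[K]}UF_{[K]}$ and extension of the resulting partial isometry), which you could simply cite instead of re-deriving the frame adjustment you rightly flag as delicate. Where you genuinely diverge is in how the cutoff unitary group is obtained. The paper proves (Lemma~\ref{lem:7}) that the group generated by $\mathcal{SU}_{[K]}(X_1,F_{[K]})$, $\exp(itH_{\mathrm{JC},4})$ and phases acts transitively on the unit sphere of $\mathcal{H}_{[K]}$, invokes the classification of transitive compact groups (full unitary versus unitary symplectic), and then---since $\dim\mathcal{H}^{(\mu)}\leq 2$ defeats Lemma~\ref{lem:9}---excludes the symplectic alternative by an ad hoc conjugation argument: any intertwiner $V$ with $VUV^{*}=\bar{U}$ for all independent block unitaries would itself have to lie in $\mathcal{SU}_{[K]}(X_1,F_{[K]})$, and no such $V$ can also satisfy $VH_{\mathrm{JC},4}V^{*}=-H_{\mathrm{JC},4}$. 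You bypass transitivity and the classification entirely: conjugating $H_{\mathrm{JC},4}=\sum_{\mu}c_{\mu}$ by the sign operators $R=\sum_{\mu}\epsilon_{\mu}X^{(\mu)}\in\mathcal{SU}(X_1)$ and Trotterizing (legitimate, since $H_{\mathrm{JC},4}$ is bounded and the dynamical group is strongly closed) isolates each localized coupling $c_{j}$, after which the connected nearest-neighbour chain of $\mathfrak{su}(2)$'s generates $\mathfrak{su}(\mathcal{H}_{\leq N})$ directly. Your route is more elementary and constructive, and it makes transparent why the hypothesis $\dim\mathcal{H}^{(\mu)}>2$ of Theorem~\ref{thm:9} can be dispensed with here; its price is that it leans on the explicit hopping structure of $H_{\mathrm{JC},4}$ and on the availability of arbitrary block signs in $\mathcal{SU}(X_1)$, so unlike the paper's argument it does not extend to an abstract complementary operator in the sense of Definition~\ref{def:1}.
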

Hence any unitary $U$ on $\mathcal{H}$ can be approximated  by varying the
control amplitudes $u_1=\omega_A$ and $u_{2}=\omega_{I}$ in the
Hamiltonian $H_{\mathrm{JC}}$ of (\ref{eq:9}) plus flipping ground and excited
state of the atom in terms of $H_{\mathrm{JC},4}$ (with strength $u_3$)---both in an
appropriate time-dependent manner. The approximation has to be understood in
the strong sense as described in Eq.~(\ref{eq:26}). 

Finally, note that Theorem \ref{thm:1} implies that one can simulate (again in
the sense of strong approximations) any unitary $V \in
\mathcal{B}(\mathrm{L}^2(\R))$ operating on the cavity mode alone. One only
has to find controls $u_j$ such that $U(0,T)\, \phi \otimes \psi_k$ $\approx$
$\phi \otimes V \psi_k$ for a finite set of states $\psi_k$ of the cavity (and
an arbitrary auxiliary state $\phi$ of the atom). 

\subsection{Many atoms with individual control}
\label{sec:many-atoms-indiv}

Next, consider the case of many atoms interacting with the same mode,
and under the assumption that each atom (including the coupling with the cavity)
can be controlled individually. Such a scenario is relevant for experiments
with ion traps, if the number of ions is not too big as have been studied
since \cite{CC00,GRL+03,LBM+03}. 
The Hilbert space of the system is
\begin{equation}
  \mathcal{H} = (\C^2)^{\otimes M} \otimes \mathrm{L}^2(\R),
\end{equation}
where $M$ denotes the number of atoms. We define the basis
  $\ket{b}\otimes\ket{n} \in \mathcal{H}$ 
  where  
  $n \in \N_0$,
$  \ket{b} = \ket{b_1} \otimes \dots \otimes \ket{b_M}$,
 $ b = (b_1,\dots,b_M) \in \Z_2 \times \dots \times \Z_2 = \Z_2^M$,
and  the canonical basis $\ket{b_j} \in \C^2$ with $b_j\in\{0,1\}$. 
The control Hamiltonians become
\begin{equation}
\label{eq:44}
H_{\mathrm{IC},j} = \sigma_{3,j} \otimes \unity\,\text{ and }\,
H_{\mathrm{IC},M{+}j} = \sigma_{+,j} \otimes a + \sigma_{-,j} \otimes a^*  
\end{equation}
where 
$j\in\{1,\dots,M\}$ and
$  \sigma_{\alpha,j} = \unity^{\otimes (j-1)} \otimes \sigma_\alpha \otimes
  \unity^{\otimes (N-j)}$. As before,
$a$ and $a^*$ denote annihilation  and creation operator.
 The joint domain of all these operators is
\begin{equation} \label{eq:41}
  D = \SP \{ \ket{b}\otimes\ket{n}\, | \, b \in \Z_2^M\, \text{ and }\, n \in
  \N_0 \}, 
\end{equation}
with the basis $\ket{b}\otimes\ket{n}$ as defined above.
As depicted by the red parts  in Fig~\ref{fig:1},
all the $H_{\mathrm{IC},k}$ are invariant under the symmetry defined by the charge operator 
\begin{equation}\label{eq:47}
  X_M = S_3 \otimes \unity + \unity \otimes N\,\text{ with }\, S_3 = \sum_{j=1}^N \sigma_{3,j}
\end{equation}
where $N=a^*a$ denotes again the number operator and $D$ from (\ref{eq:41}) is the
domain of $X_M$. The eigenvalues of $X_M$ are $\mu \in
\N_0$ and the eigenbasis is given by 
\begin{equation} \label{eq:48}
  \ket{\mu,b} = \ket{b} \otimes \ket{\mu - |b|} \,\text{ for }\, |b|=\sum_{j=1}^M b_j \leq
  \mu. 
\end{equation}
In this basis, $X_M$ becomes
  $X_M \ket{\mu,b} = \mu \ket{\mu,b}$
and the eigenspaces $\mathcal{H}^{(\mu)}$ are
  $\mathcal{H}^{(\mu)} = \SP \{\ket{\mu,b} \,|\, b \in \Z_2^M$ with $ |b| \leq
  \mu\} $.
From now on, one may readily proceed as for one atom to arrive at the following analogy to
Theorem~\ref{thm:3}:
\begin{thm} \label{thm:8}
  The dynamical group $\mathcal{G}$ generated by $H_{\mathrm{IC},k}$ with $k\in\{1,\dots,2M\}$ from
  Eq.~\eqref{eq:44} coincides with group $\mathcal{SU}(X_M)$ of unitaries commuting with
  $X_M$.   
\end{thm}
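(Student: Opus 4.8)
The plan is to follow the route of the one-atom case (Theorem~\ref{thm:3}) and to reduce everything, via Theorem~\ref{thm:6}, to a statement about finite-dimensional Lie algebras on the individual charge blocks $\mathcal{H}^{(\mu)}$. First I would record that each $H_{\mathrm{IC},k}$ conserves the total excitation number, i.e.\ commutes with $X_M$, and that block by block the $iH_{\mathrm{IC},k}$ lie in $\mathfrak{su}(X_M)$ (for the full blocks $\mu\ge M$ this is immediate, since $\sum_{b}(1-2b_j)=0$; the finitely many partial blocks are handled as in Theorem~\ref{thm:3}). Hence the generated Lie algebra $\mathfrak{l}$ satisfies $\mathfrak{l}\subseteq\mathfrak{su}(X_M)$, and by Theorem~\ref{thm:6}(3) the dynamical group $\mathcal{G}$ is the strong closure of $\exp(\mathfrak{l})$, so the inclusion $\mathcal{G}\subseteq\mathcal{SU}(X_M)$ is free and only the reverse inclusion requires work. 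Since $\mathcal{SU}(X_M)$ is the strong closure of the infinite direct product of the $\mathrm{SU}(d^{(\mu)})$ with $d^{(\mu)}=\dim\mathcal{H}^{(\mu)}$, it suffices to show that for every finite set of charges the projection of $\mathfrak{l}$ onto the corresponding block-diagonal summand is onto $\bigoplus_\mu\mathfrak{su}(d^{(\mu)})$; strong approximation of an arbitrary element of $\mathcal{SU}(X_M)$ then follows by truncating the decomposition at a sufficiently high $\mu$, exactly as in the proof of Theorem~\ref{thm:6}.

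The first genuine step is the \emph{within-block} analysis. Restricted to a fixed block $\mathcal{H}^{(\mu)}=\SP\{\ket{\mu,b}:|b|\le\mu\}$, the diagonal generators $\sigma_{3,j}\otimes\unity$ assign to $\ket{\mu,b}$ the weight $1-2b_j$, while each coupling $H_{\mathrm{IC},M+j}$ connects $\ket{\mu,b}$ with $\ket{\mu,b\oplus e_j}$ through a non-zero matrix element carrying the Jaynes--Cummings factor $\sqrt{\mu-|b|}$ (resp.\ $\sqrt{\mu-|b|+1}$). I would argue that (i)~the $M$ weight functions $b\mapsto(1-2b_1,\dots,1-2b_M)$ separate all basis vectors of the block, and (ii)~the single-spin-flip transitions make the associated graph on $\{b:|b|\le\mu\}$ connected. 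Because the atoms are controlled \emph{individually}, no residual permutation symmetry obstructs generation, and a standard coupling-graph argument---isolating individual transition operators by repeated commutators with the node-separating diagonal elements and then using connectivity to reach every root vector---yields $\mathfrak{l}^{(\mu)}=\mathfrak{su}(d^{(\mu)})$ for each $\mu$. This is the point at which one must check a non-degeneracy condition on the differences of diagonal weights along edges, so that individual matrix elements can indeed be disentangled.

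The main obstacle is the \emph{decoupling of the infinitely many blocks}. For $\mu\ge M$ every block is a full copy of $\mathfrak{su}(2^M)$, so these blocks are mutually isomorphic and a priori $\mathfrak{l}$ could sit inside $\prod_\mu\mathfrak{su}(d^{(\mu)})$ as a \emph{diagonal} subalgebra identifying several blocks. To exclude this I would invoke a Goursat-type argument: $\mathfrak{l}$ projects onto each simple summand $\mathfrak{su}(d^{(\mu)})$, and two summands can only be tied together by a Lie-algebra isomorphism intertwining the generators of one block with those of another. Here the weights $\sqrt{\mu-|b|}$ enter decisively, since the ratios of transition strengths within a block depend on $\mu$; hence no fixed intertwiner matches the generators of two distinct blocks, and the projection of $\mathfrak{l}$ onto any finite product of blocks is the full direct sum. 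A clean way to package this is a Vandermonde/polynomial-separation argument in the variable $\mu$, producing real-linear combinations and commutators of the generators that act as prescribed on a chosen finite family of blocks while vanishing on the others.

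Finally I would assemble the pieces: given $U\in\mathcal{SU}(X_M)$ and a finite set of test vectors, choose a cut-off $\mu_{\max}$ so large that the tails beyond $\mu_{\max}$ contribute less than $\epsilon$; the block-decoupling result then provides an element of $\exp(\mathfrak{l})$ agreeing with $U$ on all blocks $\mu\le\mu_{\max}$, which strongly approximates $U$ on the given vectors. Together with $\mathcal{G}\subseteq\mathcal{SU}(X_M)$ this gives $\mathcal{G}=\mathcal{SU}(X_M)$. I expect the within-block step to be essentially routine once individual control is exploited, and the block-decoupling step---controlling infinitely many isomorphic $\mathfrak{su}(2^M)$ factors through the $\mu$-dependence of the coupling constants---to be the technical heart, mirroring but substantially generalising the two-dimensional blocks encountered in Theorem~\ref{thm:3}.
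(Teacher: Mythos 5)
Your overall architecture is sound, and its first and last steps coincide with the paper's: the reduction of $\mathcal{G}=\mathcal{SU}(X_M)$ to the finite-dimensional statements $\mathfrak{l}^{[K]}=\mathfrak{su}^{[K]}(X_M)$ for all $K$ is exactly Corollary~\ref{kor:1}, and your insistence that surjectivity onto each block separately is not enough---one must exclude diagonal subalgebras of the product of the mutually isomorphic blocks with $\mu\geq M$---is precisely the issue the paper settles (in the one-atom case) by Lagrange-interpolation polynomials $f(X)$ supported on a single charge, i.e.\ your ``Vandermonde/polynomial-separation'' device. Your alternative Goursat argument also works: an isomorphism of $\mathfrak{su}(n)$ intertwining the restricted generators would have to preserve their spectra up to conjugation, and the spectrum of $H_{\mathrm{IC},M+j}^{(\mu)}$ depends on $\mu$ through the factors $\sqrt{\mu-|b|}$. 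Where you genuinely diverge from the paper is the middle step: the paper never analyzes a single block $\mathcal{H}_M^{(\mu)}$ head-on, but proves Theorem~\ref{thm:8} by induction on the number of atoms $M$ (Proposition~\ref{prop:8}), with the one-atom result as base case and Lemmas~\ref{lem:5}--\ref{lem:6} showing that $\mathfrak{su}(\mathcal{H}_{M+1}^{(\mu)})$ is generated by the embedded algebras $\mathfrak{su}(\mathcal{H}_M^{(\mu)})\hat{\otimes}_k\unity$ and $\mathfrak{su}(\mathcal{H}_M^{(\mu-1)})\hat{\otimes}_k\unity$, which the induction hypothesis already supplies in block-decoupled form.

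The gap sits in your within-block step. A ``standard coupling-graph argument'' requires the diagonal generators to separate the \emph{edges} (transitions), not merely the nodes, and here that fails for every $M\geq 2$: along any edge $\ket{\mu,b}\leftrightarrow\ket{\mu,b\oplus e_j}$ the difference of $\sigma_{3,j'}$-weights is $\pm2\delta_{jj'}$ independently of the spectator bits, so repeated commutators with the $M$ available diagonal operators can only isolate the \emph{sum} over all atom-$j$ transitions, never an individual matrix unit $\KB{\mu,b}{\mu,b\oplus e_j}$. The non-degeneracy condition you correctly say ``one must check'' is therefore false as stated, and an additional idea is needed: one must first enlarge the diagonal algebra, e.g.\ by iterated commutators that produce operators of the form $p(N)\,\sigma_{3,j'}$, whose edge differences do depend on the photon number $\mu-|b|$ and on the spectator configuration and hence break the degeneracy. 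This is exactly the role of the operators $A_{3,k}=X_1^k\varsigma_3$ of Eq.~\eqref{eq:68} in the one-atom analysis (Lemma~\ref{lem:2}), and of the arbitrary traceless diagonal operators $B=(\kb{\cdot}-\kb{\cdot})\hat{\otimes}_1\unity$ that the induction hypothesis hands to the proof of Lemma~\ref{lem:6}. With that repair your route closes; without it the within-block step does not go through.
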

To get strong controllability, one has to add again one Hamiltonian. As before
a $\sigma_1$-flip of one atom is sufficient (see the blue parts  in Fig~\ref{fig:1}), and
\begin{equation} \label{eq:45}
  H_{\mathrm{IC},2M+1} = \sigma_{1,1} \otimes \unity.
\end{equation}
 is complementary to $X_M$ with $\mathcal{H}_\alpha$
given by $\mathcal{H}_0 =\{0\}$,
$\mathcal{H}_- = \SP \{ \ket{\mu;0,b_2,\dots,b_M}\,|\, \mu \in \N_0,\,
|(b_2,\dots,b_M)| \leq \mu \}$, 
$\mathcal{H}_+ = \SP \{  \ket{\mu;1,b_2,\dots,b_M}\,|\, \mu \in \N,\,  
|(b_2,\dots,b_M)| < \mu \}$.
Obviously, all the conditions of Thm. \ref{thm:9} are satisfied such that one
gets
\begin{thm} \label{thm:7}
  The control problem (\ref{eq:1}) with $H_{\mathrm{IC},k}$ and  $k\in\{1,\dots,2M{+}1\}$
  from (\ref{eq:44}) and (\ref{eq:45}) is strongly controllable. 
\end{thm}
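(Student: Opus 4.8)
The plan is to obtain Theorem~\ref{thm:7} as a direct application of Theorem~\ref{thm:9}: once the hypotheses of Theorem~\ref{thm:10} are in place and the extra dimension requirement of Theorem~\ref{thm:9} is met, strong controllability follows immediately. So the whole task reduces to checking four things for the system with generators $H_{\mathrm{IC},1},\dots,H_{\mathrm{IC},2M+1}$, taking the charge operator $X_M$ of Eq.~\eqref{eq:47} as the symmetry, with the first $2M$ generators in the role of the symmetric block $H_1,\dots,H_{d-1}$ and $H_{\mathrm{IC},2M+1}=\sigma_{1,1}\otimes\unity$ in the role of the symmetry-breaking generator $H_d$. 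I will assume $M\ge 2$ throughout; for $M=1$ the claim is exactly Theorem~\ref{thm:1}, which is handled separately.

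First I would verify hypothesis~(i), that each $H_{\mathrm{IC},k}$ with $k\in\{1,\dots,2M\}$ commutes with $X_M$. On the eigenbasis $\ket{\mu,b}$ of Eq.~\eqref{eq:48} the operators $\sigma_{3,j}\otimes\unity$ are diagonal, while $\sigma_{+,j}\otimes a+\sigma_{-,j}\otimes a^*$ simultaneously flips the bit $b_j$ and shifts the photon number so that $\mu=|b|+n$ is left invariant; hence all $2M$ operators preserve each $\mathcal{H}^{(\mu)}$ and commute with $X_M$. Hypothesis~(ii) is nothing but Theorem~\ref{thm:8}, which already identifies the dynamical group generated by these $2M$ Hamiltonians with $\mathcal{SU}(X_M)$, so in particular it contains $\mathcal{SU}(X_M)$.

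The substantial step is hypothesis~(iii): that $H_{\mathrm{IC},2M+1}$ is complementary to $X_M$ in the sense of Definition~\ref{def:1}, with respect to the decomposition $\mathcal{H}_0=\{0\}$, $\mathcal{H}_-=\{b_1=0\}$, $\mathcal{H}_+=\{b_1=1\}$ stated before the theorem. Conditions~(i) and~(ii) of Definition~\ref{def:1} I would settle by direct computation: the first is vacuous since $\mathcal{H}_0=\{0\}$, and for the second one checks that $\sigma_{1,1}$ merely flips $b_1$ and therefore carries $\mathcal{H}^{(\mu+1)}_+$ isometrically onto $\mathcal{H}^{(\mu)}_-$ for every $\mu$, which is precisely the required partial-isometry structure. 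The genuinely hard part is the transitivity condition on the bottom subspace $\mathcal{H}_{[0]}=\mathcal{H}^{(0)}\oplus\mathcal{H}^{(1)}_-$: one must show that the group generated by $\exp(itH_{\mathrm{IC},2M+1})$ together with the elements of $\mathcal{SU}(X_M)$ commuting with $F_{[0]}$ moves every ray of $\mathcal{H}_{[0]}$ onto every other. My strategy here would be to localize the argument to the low-lying blocks and combine two ingredients: for $M\ge2$ the symmetric unitaries already act as a full unitary group on the $M$-dimensional one-excitation sector $\mathcal{H}^{(1)}_-$ (the additional atoms $b_2,\dots,b_M$ provide exactly these extra rotation directions), while the symmetry-breaking flip couples the vacuum $\mathcal{H}^{(0)}$ into the $\mu=1$ manifold and back; the task is to show that the interplay of these two sweeps out all of $\mathbb{P}(\mathcal{H}_{[0]})$. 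I expect this transitivity verification to be the main obstacle, since it is the only place where the symmetric dynamics and the flip must be controlled jointly rather than read off block by block, and it is exactly here that the restriction $M\ge 2$ is used.

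Finally, the additional hypothesis of Theorem~\ref{thm:9} is immediate: by Eq.~\eqref{eq:48} the eigenspace $\mathcal{H}^{(1)}$ is spanned by the $\ket{1,b}$ with $|b|\le 1$, so $\dim\mathcal{H}^{(1)}=M+1>2$ for every $M\ge 2$. With conditions~(i)--(iii) of Theorem~\ref{thm:10} established and this dimension bound in hand, Theorem~\ref{thm:9} delivers strong controllability, which is the assertion of Theorem~\ref{thm:7}.
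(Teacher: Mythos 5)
Your route is exactly the paper's: reduce Theorem~\ref{thm:7} to Theorem~\ref{thm:9} by checking the hypotheses of Theorem~\ref{thm:10}, with hypothesis~(ii) supplied by Theorem~\ref{thm:8}, hypothesis~(iii) by the complementarity of $H_{\mathrm{IC},2M+1}=\sigma_{1,1}\otimes\unity$ with respect to the stated splitting, and the extra dimension bound from $\dim\mathcal{H}^{(1)}=M+1$. The paper disposes of the complementarity check with the word ``obviously,'' so in structure you match its proof, and your two side remarks --- that the bound $\dim\mathcal{H}^{(\mu)}>2$ holds only for $M\ge 2$, and that $M=1$ must be referred back to the separate argument behind Theorem~\ref{thm:1} --- are correct and more careful than the text.

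The genuine gap is the step you yourself flag, condition~(\ref{item:2}) of Definition~\ref{def:1}, and the strategy you sketch for it does not work as described. The flip $\sigma_{1,1}\otimes\unity$ sends the vacuum $\ket{0;\vec{0}}=\ket{0,\dots,0}\otimes\ket{0}$ to $\ket{1,0,\dots,0}\otimes\ket{0}$, which spans $\mathcal{H}^{(1)}_+$ --- it does \emph{not} couple the vacuum into the one-excitation sector $\mathcal{H}^{(1)}_-$ on which the symmetric unitaries act. Moreover every $U\in\mathcal{SU}(X_M)$ commuting with $F_{[0]}$ preserves the splitting $\mathcal{H}^{(1)}=\mathcal{H}^{(1)}_-\oplus\mathcal{H}^{(1)}_+$ and hence acts on the one-dimensional space $\mathcal{H}^{(1)}_+$ only by a phase. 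Consequently the two-dimensional subspace $\mathcal{H}^{(0)}\oplus\mathcal{H}^{(1)}_+$ is invariant under \emph{all} generators entering condition~(\ref{item:2}), so the orbit of the vacuum ray never meets a ray of $\mathcal{H}^{(1)}_-$, and the ``interplay'' you invoke cannot sweep out the rays of $\mathcal{H}_{[0]}=\mathcal{H}^{(0)}\oplus\mathcal{H}^{(1)}_-$. In other words, with the projection $F_{[0]}=X^{(0)}\oplus X^{(1)}_-$ taken literally, the transitivity statement is false for this example; what is actually transitive is the action on the rays of the invariant space $\mathcal{H}^{(0)}\oplus\mathcal{H}^{(1)}_+$, where the flip together with the relative phases from $\mathcal{SU}(X_M)$ generates an $\mathrm{SU}(2)$. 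This is also the space in which the descent of Lemma~\ref{lem:7} deposits vectors (an application of $\exp(itH)$ lands level-$1$ amplitude in the $+$ component). So the verification you defer is not merely tedious: it forces you either to fix the bookkeeping of $F_{[0]}$ and the $\pm$ labels in Definition~\ref{def:1} or to run the orbit argument through higher charge sectors, and the paper's ``obviously'' conceals exactly this difficulty. You should resolve it explicitly before claiming Theorem~\ref{thm:7}.
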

As a special case of this theorem, one can approximate
any unitary $U$ acting on the atoms alone, i.e.\ $U \in
\mathcal{U}((\C^2)^{\otimes M})$, by applying Theorem~\ref{thm:7} to $U
\otimes \unity$. That is, one can simulate~$U$ only by operations on one atom and
the interactions with the harmonic oscillator. This is used in
ion-trap experiments and is known as ``phonon bus''. 

\subsection{Many atoms under collective control}
\label{sec:many-atoms-coll}

Now one may modify the setup from the last section by considering again $M$ atoms
interacting with one mode, but assuming that one can control the atoms only
collectively rather than individually. In other words instead of the
Hamiltonians $H_{\mathrm{IC},j}$ and  $H_{\mathrm{IC},M{+}j}$ with $j\in\{1,\dots,M\}$ 
 from Eq.~\eqref{eq:44} one only has their sums
\begin{equation} \label{eq:31}
  H_{\mathrm{TC},1} = S_3 \otimes \unity\,\text{ and }\, H_\mathrm{{TC},2} = S_+ \otimes
  a + S_- \otimes a^*,
\end{equation}
where
$S_\alpha = \sum_{j=1}^M \sigma_{\alpha,j}$ and $\alpha\in\{1,2,3,\pm\}$,
combinded with the free evolution 
\begin{equation} \label{eq:33}
  H_{\mathrm{TC},3} = \unity \otimes N
\end{equation}
of the cavity.
As  before, all operators are defined on the domain $D$ from
(\ref{eq:41}). Note that one readily recovers the original setup from
Subsection~\ref{sec:one-atom} with Pauli operators $\sigma_\alpha$ replaced by
pseudo-spin operators $S_\alpha$. The multi-atom analogue of the Jaynes-Cummings
Hamiltonian, which can be formed from the $H_{\mathrm{TC},j}$ just defined, is called
Tavis-Cummings Hamiltonian \cite{TC68,TC69}.  

All the Hamiltonians in Eqs.~(\ref{eq:31}) and (\ref{eq:33}) are invariant
under the $\mathrm{U}(1)$-action generated by $X_M$ of
Eq.~\eqref{eq:47}. However, this is not the only symmetry, since all these
$H_{\mathrm{TC},j}$ are also invariant under the permutation of the
atoms. Therefore, one may no longer exhaust the group $\mathcal{SU}(X_M)$ as in
Theorem~\ref{thm:8} (since the following operators 
cannot be reached: those
commuting only with $X_M$ but not also with permutations
of the atoms). A minimal modification is to restrict the states of the
atoms to spaces on which permutation-invariant unitaries operate
transitively\footnote{An alternative strategy would be to treat 
  permutation symmetry in the same way as 
  $\mathrm{U}(1)$-symmetry. However, already the restriction to permutation-invariant
  states will turn out to be difficult enough.}. The most natural
choice is the symmetric tensor product
$
  (\C^2)^{\otimes M}_{\sym} \subset (\C^2)^{\otimes M},
$
i.e.\ the Bose subspace of $(\C^2)^{\otimes M}$. The preferred basis of
$(\C^2)^{\otimes M}_{\sym}$ is 
  $\ket{\nu} = \Sym_M \left( \ket{1}^{\otimes \nu} \otimes \ket{0}^{\otimes (M
      - \nu)} \right)$
with $\nu \in \{0, \dots, M\}$ and the projection 
$
  \Sym_M $ from $(\C^2)^{\otimes M}$ 
onto the symmetric subspace $(\C^2)^{\otimes M}_{\sym}$. In other words $\ket{\nu}$ is the unique, pure,
permutation-invariant state with $\nu$ atoms in the excited state $\ket{1}$
and $M {-} \nu$ ones in the ground state $\ket{0}$. Therefore,
$(\C^2)^{\otimes M}_{\sym}$ can be identified with the Hilbert space
$\C^{M+1}$ of a (pseudo-)spin-$M/2$ system. Its
basis $\ket{\nu}$, with $\nu\in\{0,\dots,M\}$ becomes the canonical basis. Combining this
with $\mathrm{L}^2(\R)$ for the cavity one gets
  $\mathcal{H}_{\sym} = \C^{M+1} \otimes \mathrm{L}^2(\R)$
as the new Hilbert space of the system. 

All the operators defined above ($H_{\mathrm{TC},1}, H_{\mathrm{TC},2},
H_{\mathrm{TC},3}$ and $X_M$) can be restricted to $\mathcal{H}_{\sym}$ (and in
slight abuse of notation we will re-use the symbols after restriction) and
their domain becomes  
\begin{equation} \label{eq:51}
  D_{\sym} = \SP \{ \ket{\nu} \otimes \ket{n}\, | \, \nu\in\{0,\dots,M\}\,\text{ and }\, n \in
  \N_0 \},
\end{equation}
which is just the projection of $D$ from (\ref{eq:41}), i.e.\ $D_{\sym} =
\Sym_M D$. The eigenbasis of $X_M$ now takes the form
  $\ket{\mu,\nu} = \ket{\nu}\otimes\ket{\mu-\nu}$ where  $\mu \in \N_0$  and $\nu
  < d_\mu = \min(\mu,M{+}1)$. 
For the $X_M$-eigenspaces, we get again $X_M \ket{\mu,\nu} = \mu \ket{\mu,\nu}$  and
\begin{equation} \label{eq:52}
  \mathcal{H}_{\sym}^{(\mu)} = \SP \{ \ket{\mu,\nu} \, | \, \nu\in\{0,\dots,d_\mu \}\}.
\end{equation}
Now one can proceed as the in the previous cases: The operators
$H_{\mathrm{TC},1}, H_{\mathrm{TC},2}, H_{\mathrm{TC},3}$ are (as operators on
$\mathcal{H}_{\sym}$) invariant under the action generated by $X_M$ and
therefore elements of $\mathfrak{u}(X_M)$. However, one still 
cannot exhaust all of $\mathcal{U}(X_M)$ (or $\mathcal{SU}(X_M)$). One only gets: 
\begin{thm} \label{thm:4}
  The dynamical group $\mathcal{G}$ generated by the operators $H_{\mathrm{TC},1},
  H_{\mathrm{TC},2}, H_{\mathrm{TC},3}$ from Eqs.~\eqref{eq:31} and
  (\ref{eq:33}) is a strongly closed subgroup of $\mathcal{U}(X_M)$. For each unitary $V
  \in \mathcal{U}(X_M)$ and each $\mu \in \N_0$ we can find an element $U \in \mathcal{G}$ such
  that    
    $U \psi^{(\mu)} = V \psi^{(\mu)}$
  holds for all $\psi^{(\mu)} \in \mathcal{H}_{\sym}^{(\mu)}$.
\end{thm}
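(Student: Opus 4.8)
The plan is to reduce the statement to a block-by-block, finite-dimensional generation problem and then apply Theorem~\ref{thm:6}. Since $H_{\mathrm{TC},1},H_{\mathrm{TC},2},H_{\mathrm{TC},3}$ all commute with $X_M$, we have $iH_{\mathrm{TC},j}\in\mathfrak{u}(X_M)$, so by Theorem~\ref{thm:6} the dynamical group $\mathcal{G}$ is the strong closure of $\exp(\mathfrak{l})$, where $\mathfrak{l}\subset\mathfrak{u}(X_M)$ is the Lie algebra generated by the $iH_{\mathrm{TC},j}$. As $\mathcal{U}(X_M)$ is strongly closed, this already yields the first assertion $\mathcal{G}\subset\mathcal{U}(X_M)$.

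For the second assertion, fix $\mu$ and let $A\mapsto A^{(\mu)}=X_M^{(\mu)}AX_M^{(\mu)}$ denote restriction to the finite-dimensional block $\mathcal{H}_{\sym}^{(\mu)}$. Since the elements of $\mathfrak{u}(X_M)$ are block diagonal, this map is a Lie-algebra homomorphism, so $\mathfrak{l}^{(\mu)}$ is the Lie algebra generated by $H_{\mathrm{TC},1}^{(\mu)},H_{\mathrm{TC},2}^{(\mu)},H_{\mathrm{TC},3}^{(\mu)}$; moreover $\exp(H)^{(\mu)}=\exp(H^{(\mu)})$, so the block action $\mathcal{G}^{(\mu)}=\{U^{(\mu)}\mid U\in\mathcal{G}\}$ contains the connected subgroup of $\mathcal{U}(\mathcal{H}_{\sym}^{(\mu)})$ generated by $\exp(\mathfrak{l}^{(\mu)})$. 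Hence it suffices to prove $\mathfrak{l}^{(\mu)}=\mathfrak{u}(\mathcal{H}_{\sym}^{(\mu)})$: then that connected subgroup, being compact and connected, is all of $\mathcal{U}(\mathcal{H}_{\sym}^{(\mu)})$, so any $V\in\mathcal{U}(X_M)$ has $V^{(\mu)}\in\mathcal{G}^{(\mu)}$, and we may pick $U\in\mathcal{G}$ with $U^{(\mu)}=V^{(\mu)}$, i.e.\ $U\psi^{(\mu)}=V\psi^{(\mu)}$ for all $\psi^{(\mu)}\in\mathcal{H}_{\sym}^{(\mu)}$.

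It remains to identify $\mathfrak{l}^{(\mu)}$. In the basis $\ket{\mu,\nu}$ the matrix $H_{\mathrm{TC},1}^{(\mu)}=\mathrm{diag}(\nu)$ is diagonal with distinct entries (but \emph{uniform} spacing), $H_{\mathrm{TC},3}^{(\mu)}=\mathrm{diag}(\mu-\nu)$, and $H_{\mathrm{TC},1}^{(\mu)}+H_{\mathrm{TC},3}^{(\mu)}=X_M^{(\mu)}=\mu\,\unity^{(\mu)}$ supplies the identity on the block (for $\mu=0$ the block is one-dimensional and $\mathfrak{u}(1)$ is immediate). The operator $T:=H_{\mathrm{TC},2}^{(\mu)}$ is a real symmetric tridiagonal (Jacobi) matrix whose nearest-neighbour coupling $\ket{\mu,\nu}\leftrightarrow\ket{\mu,\nu+1}$ has coefficient $t_\nu$ with $t_\nu^2\propto(\nu+1)(M-\nu)(\mu-\nu)$, which is nonzero throughout the block; thus $T$ is irreducible and its coupling graph is a connected path.

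The main obstacle is that $H_{\mathrm{TC},1}^{(\mu)}$, although nondegenerate, has uniform level spacing, so all transition frequencies along the path coincide and the usual ``distinct-frequency plus connected-graph'' generation criterion does not apply directly. I would circumvent this by manufacturing a second, non-uniformly spaced diagonal matrix from commutators: with $D=H_{\mathrm{TC},1}^{(\mu)}$ the double commutator $[T,[D,T]]\in\mathfrak{l}^{(\mu)}$ has diagonal part proportional to $\mathrm{diag}(t_{\nu-1}^2-t_\nu^2)$, and since $t_\nu^2$ is cubic in $\nu$, the consecutive gaps of this diagonal are second differences of a cubic, hence linear in $\nu$ and pairwise distinct. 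This new diagonal therefore separates the transition frequencies along the connected path. With a nondegenerate spectrum, pairwise-distinct transition frequencies on a connected coupling graph, and the identity at hand, the standard finite-dimensional generation lemma gives $\mathfrak{l}^{(\mu)}=\mathfrak{u}(\mathcal{H}_{\sym}^{(\mu)})$, completing the argument. The one point requiring care is the bookkeeping that $[T,[D,T]]$ has exactly the stated diagonal part and that its off-diagonal remainder does not obstruct extracting the diagonal piece (e.g.\ by projecting with the already-available Cartan elements), but this is routine once the non-uniformity of $t_\nu^2$ is in place.
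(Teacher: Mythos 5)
Your overall architecture is sound and coincides with the paper's: the first assertion and the reduction to the block-level claim $\mathfrak{l}^{(\mu)}=\mathfrak{u}(\mathcal{H}_{\sym}^{(\mu)})$ are exactly how the paper proceeds (via Theorem~\ref{thm:6} and the block-diagonal machinery of Section~\ref{sec:lie-algebra-block}), and your key object is in fact the same one the paper uses: since $[H_{\mathrm{TC},1},H_{\mathrm{TC},\pm}]=\pm H_{\mathrm{TC},\pm}$, your double commutator equals $[T,[D,T]]=-2[H_{\mathrm{TC},+},H_{\mathrm{TC},-}]$, the $n=1$ instance of the paper's Lemma~\ref{lem:4.13}(i). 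Incidentally, this means the ``off-diagonal remainder'' you worry about is identically zero ($H_{\mathrm{TC},+}H_{\mathrm{TC},-}$ and $H_{\mathrm{TC},-}H_{\mathrm{TC},+}$ are both diagonal), so no projection is needed. Where you diverge is in what you do next: the paper iterates the commutator identities of Lemma~\ref{lem:4.13} to obtain \emph{all} powers $Y_3^k$ and all $Y_3^kH_{\mathrm{TC},\pm}$ and then extracts matrix units by Lagrange interpolation, with no spectral bookkeeping at all; you stop after one step and invoke the standard ``connected coupling graph plus non-degenerate transitions'' generation lemma.

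That last step has a genuine gap. The hypothesis of the standard lemma is not that the edge frequencies $\omega_e$ of the diagonal element be pairwise distinct, but that their \emph{squares} be pairwise distinct and nonzero --- the Vandermonde argument separates edges only through $\mathrm{ad}^2$, whose eigenvalues are $-\omega_e^2$. Your diagonal $\mathrm{diag}(t_{\nu-1}^2-t_\nu^2)$ has edge frequencies given by the second difference of the cubic $t_\nu^2$, namely $\pm\bigl(6\nu+2-2(M+\mu)\bigr)$; these are pairwise distinct, but they can vanish and two of them can be opposite. Concretely, for $M=\mu=5$ one gets $\mathrm{diag}(-25,-7,5,11,11,5)$ (which is not even nondegenerate), with consecutive gaps $18,12,6,0,-6$: the edge $(3,4)$ is invisible to this diagonal, and edges $(2,3)$ and $(4,5)$ have gaps $6$ and $-6$ and hence cannot be separated by its adjoint action; the uniformly spaced $H_{\mathrm{TC},1}^{(\mu)}$ cannot separate them either. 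The repair is cheap --- take a generic real linear combination of your diagonal with $H_{\mathrm{TC},1}^{(\mu)}$, which shifts all gaps by a common constant and makes their absolute values distinct and nonzero, and only then apply the lemma to that single diagonal element --- but as written the argument fails for infinitely many $(M,\mu)$ (all $M=\mu\equiv 2 \bmod 3$ with $\mu\ge 5$, among others). A second, very minor slip: $H_{\mathrm{TC},1}+H_{\mathrm{TC},3}$ restricts to $(\mu-M/2)\unity$ on the block, which vanishes when $\mu=M/2$; in that case the identity has to be recovered instead from the nonzero trace of $H_{\mathrm{TC},1}^{(\mu)}$.
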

In other words: As long as the charge $\mu$ is fixed, one can still approximate
any $V \in \mathcal{U}(X_M)$, but if one considers superpositions of different charges
this is no longer the case, i.e.\ there are $\psi \in D_{X_M}$ and $V \in \mathcal{U}(X_M)$
such $U \psi \neq V \psi$ for all $U \in \mathcal{G}$. We have checked the latter
explicitly with the computer algebra system Magma \cite{magma} for the case
$M=2$. 
To circumvent this problem, one has to add control Hamiltonians. 
Unfortunately, it seems that one has to add quite a lot. The best result we
have got so far is to replace the operators from Eqs.~(\ref{eq:31}) and
(\ref{eq:33}) by
\begin{gather} 
  H_{\mathrm{CC},k} = \bigl(\kb{k} - \kb{k{-}1}\bigr) \otimes \Bbb{1} \, \text{ with }\,
  k\in\{1,\dots,M\}, \nonumber \\
  H_{\mathrm{CC},M+1} = H_{\mathrm{TC},2} = S_+ \otimes a + S_- \otimes
  a^*\,\text{ and }\,  
  H_{\mathrm{CC},M+2}  = \bigl(\KB{0}{1} + \KB{1}{0}\bigr) \otimes \Bbb{1}. \label{eq:30}
\end{gather} 
The operators $H_{\mathrm{CC},k}$ with $k\in\{1,\dots,M+1\}$ commute with $X_M$ and
generate (as we will see in Sect.~\ref{sec:many-atoms-coll-1}) the Lie algebra
$\mathfrak{su}(X_M)$. In addition we have $H_{\mathrm{CC},M+2}$ which is
complementary to $X_M$ with Hilbert spaces 
$\mathcal{H}_+ = \SP \{ \ket{\mu;0}\, | \, \mu \in \N_0\}$,
$\mathcal{H}_- = \SP \{ \ket{\mu;1}\, | \, \mu \in \N \}$, and 
$\mathcal{H}_0 = \SP \{ \ket{\mu,\nu},\ | \, \mu \in \N,\, \mu > 2,\,
\nu\in\{3,\dots,\min(M,\mu)\} \}$.
 Note that we get an example for Def. \ref{def:1} with a non-trivial
 $\mathcal{H}_0$. Now one can apply Thms. \ref{thm:6} and \ref{thm:9} to get
 the analogues of Theorems \ref{thm:3} and \ref{thm:1}: 
\begin{thm} \label{thm:5} 
  The dynamical group $\mathcal{G}$ generated by $H_{\mathrm{CC},k}$ with $k\in\{1,\dots,M{+}1\}$ from
  Eq.~(\ref{eq:30}) coincides with the group $\mathcal{SU}(X_M)$ of unitaries commuting with
  $X_M$.   
\end{thm}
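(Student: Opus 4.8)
The plan is to reduce the statement to a purely Lie-algebraic, block-wise computation and then invoke Theorem~\ref{thm:6}. Since every $H_{\mathrm{CC},k}$ with $k\in\{1,\dots,M+1\}$ commutes with $X_M$, we have $iH_{\mathrm{CC},k}\in\mathfrak{u}(X_M)$, so by Theorem~\ref{thm:6}(iii) the dynamical group $\mathcal{G}$ is the strong closure of $\exp(\mathfrak{l})$, where $\mathfrak{l}\subset\mathfrak{u}(X_M)$ is the Lie algebra generated by the $iH_{\mathrm{CC},k}$. As $\mathcal{SU}(X_M)=\exp(\mathfrak{su}(X_M))$, it suffices to prove the algebraic identity $\mathfrak{l}=\mathfrak{su}(X_M)$. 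Because all generators are block diagonal with respect to the decomposition $\mathcal{H}_{\sym}=\bigoplus_\mu\mathcal{H}^{(\mu)}_{\sym}$ of \eqref{eq:52}, this decouples into a separate finite-dimensional problem on each block: one must show that the restrictions $H_{\mathrm{CC},k}^{(\mu)}$ generate $\mathfrak{su}(d_\mu)$, with $d_\mu=\dim\mathcal{H}^{(\mu)}_{\sym}$.

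First I would read off the block generators in the eigenbasis $\ket{\mu,\nu}$. The diagonal Hamiltonians $H_{\mathrm{CC},1},\dots,H_{\mathrm{CC},M}$ restrict to the differences $\kb{\mu,\nu}-\kb{\mu,\nu-1}$ of neighbouring rank-one projections, i.e.\ to the standard simple-co-root generators of the Cartan subalgebra of $\mathfrak{su}(d_\mu)$. The coupling $H_{\mathrm{CC},M+1}=S_+\otimes a+S_-\otimes a^*$ of \eqref{eq:30} preserves the total excitation $\mu$ and connects only the neighbouring levels $\ket{\mu,\nu}$ and $\ket{\mu,\nu\pm1}$; hence its restriction is an anti-Hermitian, irreducible tridiagonal (Jacobi) matrix whose off-diagonal entries---products of the factors of $S_\pm$ with the $\sqrt{\cdot}$ factors of $a,a^*$---are all non-zero. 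The key finite-dimensional lemma is then: the Cartan subalgebra of $\mathfrak{su}(d)$ together with one such irreducible Jacobi element generates all of $\mathfrak{su}(d)$. I would prove this by the usual ladder argument: bracketing the Cartan generators with the Jacobi element isolates the individual nearest-neighbour root vectors $\KB{\mu,\nu}{\mu,\nu+1}-\KB{\mu,\nu+1}{\mu,\nu}$ and $i(\KB{\mu,\nu}{\mu,\nu+1}+\KB{\mu,\nu+1}{\mu,\nu})$; commuting these successively produces the next-nearest neighbours, and, since the tridiagonal graph on $\{0,\dots,d_\mu-1\}$ is connected, eventually every off-diagonal generator. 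Together with the Cartan part, these span $\mathfrak{su}(d_\mu)$.

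Passing from the block-wise identities $\mathfrak{l}^{(\mu)}=\mathfrak{su}(d_\mu)$ back to $\mathfrak{l}=\mathfrak{su}(X_M)$ and to the group level is exactly the cut-off argument underlying Theorem~\ref{thm:6}: given a target $U\in\mathcal{SU}(X_M)$ and finitely many vectors, one matches $U$ exactly on the finitely many blocks carrying the relevant components---each an ordinary $\mathrm{SU}(d_\mu)$ problem---and lets the cut-off grow as the error shrinks, formally as in the proofs of Theorems~\ref{thm:3} and~\ref{thm:8}. The step I expect to be most delicate is the control of the diagonal (trace) part on the boundary blocks, i.e.\ those with $d_\mu<M+1$ where the photon-number constraint $\nu\le\mu$ truncates the available atomic levels. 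There the generator $H_{\mathrm{CC},\mu+1}$ restricts to a single projection rather than to a traceless difference, so one must check carefully that the diagonal span produced on such a block is precisely the traceless Cartan subalgebra of $\mathfrak{su}(d_\mu)$---equivalently, that no spurious global phase survives on the truncated blocks and that $\operatorname{tr}(H^{(\mu)})=0$ is maintained for the generated algebra. Verifying this bookkeeping on every truncated block, and confirming that the Jacobi off-diagonals indeed never vanish there (so that no block secretly splits into smaller invariant pieces), is the real content of the argument; the remainder reduces to the finite-dimensional generation lemma and the already-established machinery of Theorem~\ref{thm:6}.
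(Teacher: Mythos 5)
There is a genuine gap, and it sits exactly at the step you treat as routine. You claim that because all generators are block diagonal, the problem ``decouples into a separate finite-dimensional problem on each block,'' i.e.\ that it suffices to show each restriction $\{H_{\mathrm{CC},k}^{(\mu)}\}$ generates $\mathfrak{su}(\mathcal{H}^{(\mu)}_{\sym})$. That reduction is false: a Lie algebra of block-diagonal operators can project \emph{onto} the full $\mathfrak{su}(\mathcal{H}^{(\mu)}_{\sym})$ in every single block and still be a proper (``diagonally embedded'', block-correlating) subalgebra of $\bigoplus_\mu \mathfrak{su}(\mathcal{H}^{(\mu)}_{\sym})$. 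The paper's own Theorem~\ref{thm:4} is a counterexample to your reduction within the very same model: the Tavis--Cummings generators $H_{\mathrm{TC},1},H_{\mathrm{TC},2},H_{\mathrm{TC},3}$ restrict to generators of all of $\mathfrak{u}(\mathcal{H}^{(\mu)}_{\sym})$ on \emph{every} block, yet their dynamical group is strictly smaller than $\mathcal{U}(X_M)$ because the blocks cannot be controlled independently. What Corollary~\ref{kor:1} actually requires is the stronger statement $\mathfrak{l}^{[K]}=\mathfrak{su}^{[K]}(X_M)=\bigoplus_{\mu\le K}\mathfrak{su}(\mathcal{H}^{(\mu)}_{\sym})$ for the \emph{joint} truncated algebra, and your argument never addresses how to decorrelate the blocks.

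The missing idea is precisely the content of the paper's proof: one must exploit the $\mu$-dependence of the off-diagonal matrix elements coming from $a,a^*$. Concretely, the paper isolates the hopping operators $\kappa(k,k{-}1)$ by inverting the tridiagonal matrix $(A_{jk})$ arising from $[H_{\mathrm{CC},j},Y_+]$, then obtains the operators $A_\pm=P(k)\,\kappa(k,k\mp 1)$ with $P(k)=\sqrt{X_M+(1{-}k)\Bbb{1}}$ in Eq.~\eqref{eq:64}; repeated commutators produce higher powers of $X_M$ multiplying these, and interpolation polynomials $f$ with $f(\mu_0)=1$, $f(\mu)=0$ for the other $\mu\le K$ (exactly as in Lemma~\ref{lem:2} and Proposition~\ref{prop:7}) then yield elements supported on a \emph{single} block $\mu_0$ while vanishing on all other blocks up to the cut-off. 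Only after that does the per-block ladder argument you describe finish the job. Your single-block Jacobi-matrix lemma is fine as far as it goes, and the bookkeeping on truncated blocks that you flag as ``the real content'' is a comparatively minor issue; the essential difficulty you have skipped is block separation, without which the conclusion $\mathcal{G}=\mathcal{SU}(X_M)$ does not follow.
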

\begin{thm} \label{thm:2}
  The control problem (\ref{eq:1}) with $H_0=\Bbb{1}$  and 
  $H_{\mathrm{CC},k}$ for $k\in\{1,\dots, M{+}2\}$ from (\ref{eq:30}) 
  is strongly controllable.
\end{thm}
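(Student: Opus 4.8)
The plan is to read Theorem~\ref{thm:2} off as a direct application of the two general results assembled in Section~\ref{sec:controllability}: Theorem~\ref{thm:10} for pure-state controllability and Theorem~\ref{thm:9} for the upgrade to strong controllability. I would fix the dictionary $X = X_M$, $d = M+2$, and $H_k = H_{\mathrm{CC},k}$, so that $H_1,\dots,H_{d-1} = H_{\mathrm{CC},1},\dots,H_{\mathrm{CC},M+1}$ are the symmetry-respecting generators, $H_d = H_{\mathrm{CC},M+2}$ is the single symmetry-breaking generator, and $H_0 = \unity$ is adjoined as always. The entire proof then reduces to verifying the three hypotheses of Theorem~\ref{thm:10} together with the extra dimension hypothesis of Theorem~\ref{thm:9}.

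First I would dispose of the symmetric sector. Hypothesis~(i) of Theorem~\ref{thm:10}, that $H_{\mathrm{CC},1},\dots,H_{\mathrm{CC},M+1}$ commute with $X_M$, is exactly the statement already recorded in the construction around Eq.~\eqref{eq:47}: the operators $H_{\mathrm{CC},k}$ with $k\le M$ are diagonal in the excitation number and $H_{\mathrm{CC},M+1}=S_+\otimes a + S_-\otimes a^*$ conserves the total charge. Hypothesis~(ii), that the dynamical group generated by these $M+1$ operators contains $\mathcal{SU}(X_M)$, is precisely the content of Theorem~\ref{thm:5}, which I would simply invoke. After this step the whole block-diagonal part of the target group is already available, and all that genuinely remains is to control the effect of the flip $H_{\mathrm{CC},M+2}$.

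Next comes hypothesis~(iii): that $H_{\mathrm{CC},M+2}=(\KB{0}{1}+\KB{1}{0})\otimes\unity$ is complementary to $X_M$ in the sense of Definition~\ref{def:1}, for the decomposition $\mathcal{H}_{\sym}=\mathcal{H}_-\oplus\mathcal{H}_0\oplus\mathcal{H}_+$ recorded before the theorem. I would check the three items of Definition~\ref{def:1} directly on the eigenbasis $\ket{\mu,\nu}$ of~\eqref{eq:52}. Item~(i) is the observation that $H_{\mathrm{CC},M+2}$ annihilates every $\ket{\mu,\nu}$ with $\nu\ge 2$, so it vanishes on $\mathcal{H}_0$; this is exactly where the non-trivial $\mathcal{H}_0$, absent in the earlier examples, plays a role. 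Item~(ii) follows from the computation $H_{\mathrm{CC},M+2}\ket{\mu{+}1,1}=\ket{\mu,0}$, which exhibits the charge-lowering partial isometry $X_-^{(\mu)}H_{\mathrm{CC},M+2}X_+^{(\mu+1)}$ demanded for $\mu>1$. The main obstacle I anticipate is Item~(iii), the transitivity statement on the two-dimensional space $\mathcal{H}_{[0]}=F_{[0]}\mathcal{H}_{\sym}$: here $\exp(it H_{\mathrm{CC},M+2})$ rotates the two basis directions of $\mathcal{H}_{[0]}$ into one another, while the elements of $\mathcal{SU}(X_M)$ commuting with $F_{[0]}$ supply an independent relative phase between them, and I would argue that these two one-parameter subgroups jointly generate an $\mathrm{SU}(2)$ acting transitively on the one-dimensional projections of $\mathcal{H}_{[0]}$ (the rotation and the phase furnishing two non-commuting axes on the Bloch sphere). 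With Item~(iii) in place, Theorem~\ref{thm:10} yields pure-state controllability.

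Finally, to upgrade to strong controllability I would invoke Theorem~\ref{thm:9}, for which it suffices to exhibit a single charge sector of dimension greater than $2$. From~\eqref{eq:52} one has $\dim\mathcal{H}_{\sym}^{(\mu)}=\min(\mu,M)+1$, so for $M\ge 2$ the sector $\mu=2$ already has dimension $3>2$ (the case $M=1$ is the single-atom model covered by Theorem~\ref{thm:1}). This hypothesis being met, Theorem~\ref{thm:9} promotes pure-state to strong controllability, which is the assertion of Theorem~\ref{thm:2}. I expect essentially all of the difficulty to reside in the transitivity clause~(iii) of the complementarity verification; the remaining steps are bookkeeping against Theorems~\ref{thm:5}, \ref{thm:10} and~\ref{thm:9}.
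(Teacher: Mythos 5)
Your proposal is correct and follows essentially the same route as the paper, which obtains Theorem~\ref{thm:2} precisely by combining Theorem~\ref{thm:5} for the $\mathcal{SU}(X_M)$ part, the complementarity of $H_{\mathrm{CC},M+2}$ to $X_M$, and Theorems~\ref{thm:10} and~\ref{thm:9} (proved as Propositions~\ref{prop:9} and~\ref{prop:10}). Your explicit check of Definition~\ref{def:1} --- the transitivity on $\mathcal{H}_{[0]}$ and the count $\dim\mathcal{H}_{\sym}^{(2)}=3$ for $M\ge 2$, with $M=1$ deferred to Theorem~\ref{thm:1} --- fills in details the paper only asserts; note merely that your assignment of $\mathcal{H}_\pm$ (so that $H_{\mathrm{CC},M+2}$ lowers the charge from $+$ to $-$) is the one actually consistent with item~(\ref{item:1}) of Definition~\ref{def:1}, whereas the paper's displayed labels are swapped.
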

To be able to control all diagonal traceless operators $H_{\mathrm{CC},k}$,
with $k\in\{1,\dots,M\}$ is a very strong assumption. Unfortunately, a detailed analysis
including computer algebra indicates that we cannot recover Theorem
\ref{thm:2} with fewer resources. 

\section{A Lie algebra of block-diagonal operators}
\label{sec:lie-algebra-block}

The purpose of this section is to re-discuss abelian symmetries and to provide
technical details (in particular proofs) we omitted in 
Sections~\ref{sec:controllability} and \ref{sec:atoms-cavity}. To this end, 
re-use the notations already introduced in Section~\ref{sec:abelian-symmetries}. 
In particular, the abelian symmetry induces a block-diagonal decomposition
which,  in infinite dimensions, allows for defining a block-diagonal Lie algebra 
and its exponential map onto a block-diagonal Lie group;
see Propositions~\ref{prop:4.1} and \ref{prop:4.2}. We identify 
the set of all block-diagonal unitaries reachable by block-diagonal 
time evolutions in Proposition~\ref{prop:5}
as the strong closure of exponentials of block-diagonal Lie algebra elements.
A central result is Corollary~\ref{kor:1}, in which the question of controllability
for the block-diagonal system of infinite dimensions 
is reduced to analyzing controllability for all finite-dimensional blocks.
Using finite-dimensional commutator calculations one can now establish controllability
on the infinite-dimensional but block-diagonal space
for each of the three control systems analyzed.

\subsection{Commuting operators}
\label{sec:commuting-operators}

The first step is a closer look at the Lie algebra $\mathfrak{u}(X)$ and the
corresponding group $\mathcal{U}(X)$ introduced in Theorem~\ref{thm:6} (which we will prove
in this context). To this end, let us start with a unitary $U$ commuting with
the representatives $\pi(z)$, i.e.\ $[\pi(z),U]=0$ for all $z \in
\mathrm{U}(1)$. This is equivalent to 
  $U\psi = \sum_{\mu=0}^\infty U^{(\mu)}\psi^{(\mu)}$ for all $\psi \in
  \mathcal{H}$   with  $\psi^{(\mu)} := X^{(\mu)} \psi \in
  \mathcal{H}^{(\mu)}$
given a sequence of unitaries $U^{(\mu)}$  on the $\mu$-eigenspaces
$\mathcal{H}^{(\mu)}$ of $X$. Similarly one can consider a selfadjoint $H$
with domain $D(H)$ commuting with $X$. By definition\footnote{Note
  	that the identity $[X,Y]\,\psi = 0$ for all $\psi$ on a common dense domain is--in contrast
  	to popular belief--{\em not a proper definition} for two commuting selfadjoint
  	operators; cf.\ the discussion in \cite[VIII.6]{ReedSimonI}. Fortunately, such
  	pathological cases do not occur in our set-up.} 
this means the spectral projections
of $H$ commute with the $X^{(\mu)}$, which is equivalent to
\begin{equation}\label{eq:29}
  \FPV \subset D(H),\, H \FPV \subset \FPV \, \text{ and }\, H \psi =
  \sum_{\mu=0}^\infty H^{(\mu)} \psi^{(\mu)} \,\text{ for }\, \psi \in \FPV 
\end{equation}
with a sequence  of selfadjoint operators $H^{(\mu)}$ on the eigenspaces
$\mathcal{H}^{(\mu)}$ and the $\psi^{(\mu)}$ as defined above. 
The
$\mathcal{H}^{(\mu)}$ are finite-dimensional, and therefore the $H^{(\mu)}$ are
bounded. Hence the unboundedness of $H$ is inherited only from the unboundedness of
the sequence of norms $\|H^{(\mu)}\|$.  So it is easy to 
see that all elements of $\FPV$ are analytic for $H$ and therefore $\FPV$
becomes a domain of essential selfadjointness for $H$ (i.e.\ $H$ is uniquely
determined by its restriction to $\FPV$ as a consequence of Nelson's
analytic vector theorem \cite[Thm. X.39]{ReedSimonII}). Accordingly, we
will denote (in slight abuse of notation) the selfadjoint operator $H$ and
its restriction to $\FPV$ by the same symbol. This proves very handy when
introducing, on the set $\mathfrak{u}(X)$ of anti-selfadjoint operators commuting
with $X$, the structure of a Lie algebra by 
$(\lambda Q_1 + Q_2) \psi = \lambda Q_1 \psi + Q_2 \psi,\, [Q_1,Q_2] \psi
  = Q_1 Q_2 \psi - Q_2 Q_1 \psi$ for 
  $Q_1, Q_2 \in \mathfrak{u}(X)$, $\lambda \in \Bbb{R}$, and
  $\psi \in \FPV$.
The linear combination $\lambda Q_1 + Q_2$ and the commutator $[Q_1,Q_2]$ are
defined only on the joint domain $\FPV$ but since they are essentially
selfadjoint on it, their selfadjoint extensions exist and are uniquely
determined. This proves the first statement of Thm.~\ref{thm:6}, which we restate
as follows:

\begin{prop}\label{prop:4.1}
  A selfadjoint operator $H$ commuting with $X$ admits $\FPV$ as an invariant
  domain of essential selfadjointness. The space
    \begin{align}
      \mathfrak{u}(X) &= \big\{ iH \, | \, H=H^*\ \text{commuting with}\ X\big\}\\
      &= \big\{ iH \, | \, H \psi = \mbox{$\sum_\mu$} H^{(\mu)}
        \psi^{(\mu)},\ \psi \in \FPV,\ H^{(\mu)} = (H^{(\mu)})^* \in
        \mathcal{B}(\mathcal{H}^{(\mu)})\big\} 
    \end{align}
    becomes a Lie algebra with the commutator as its Lie bracket.
\end{prop}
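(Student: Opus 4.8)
The plan is to establish the block characterisation (\ref{eq:29}) first and then read off both claims from finite-dimensional linear algebra applied blockwise. For the first claim, I would start from the proper definition of commutativity: the spectral projections of $H$ commute with every $X^{(\mu)}$. Standard reduction theory for commuting selfadjoint operators \cite[VIII.5]{ReedSimonI} then shows that each finite-dimensional eigenspace $\mathcal{H}^{(\mu)}$ is a reducing subspace for $H$, so that $\mathcal{H}^{(\mu)} \subset D(H)$ and $H$ restricts there to a selfadjoint---hence, by finite-dimensionality, bounded---operator $H^{(\mu)} = X^{(\mu)} H X^{(\mu)}$. Since $\FPV$ consists precisely of the finite linear combinations of vectors lying in the $\mathcal{H}^{(\mu)}$, the action $H\psi = \sum_\mu H^{(\mu)}\psi^{(\mu)}$ is a finite sum for each $\psi \in \FPV$; this yields $\FPV \subset D(H)$ and $H\FPV \subset \FPV$ at once.

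For essential selfadjointness on $\FPV$, I would invoke Nelson's analytic vector theorem \cite[Thm. X.39]{ReedSimonII}. The point is that every $\psi \in \FPV$ is in fact an entire analytic vector: if $\psi$ is supported on the blocks $\mu \in \{\mu_1, \dots, \mu_n\}$, then so is $H^k\psi$ for every $k$, whence $\|H^k\psi\| \le C^k\|\psi\|$ with $C = \max_i \|H^{(\mu_i)}\|$, and $\sum_k t^k\|H^k\psi\|/k!$ converges for all $t$. As $\FPV$ is dense and $H|_{\FPV}$ is symmetric, Nelson's theorem gives essential selfadjointness; since $H$ itself is a selfadjoint extension of $H|_{\FPV}$, uniqueness of the extension forces $H = \overline{H|_{\FPV}}$, so $H$ is determined by its restriction.

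For the second claim I would verify that the two algebraic operations, defined a priori only on the common invariant domain $\FPV$, produce operators that again belong to $\mathfrak{u}(X)$ after closure. For a real linear combination, $(\lambda H_1 + H_2)|_{\FPV}$ acts blockwise as $\lambda H_1^{(\mu)} + H_2^{(\mu)}$, a selfadjoint matrix on each $\mathcal{H}^{(\mu)}$; hence the combination is symmetric on $\FPV$, has the block form of (\ref{eq:29}), and---by the same analytic-vector argument---is essentially selfadjoint with closure commuting with $X$. For the bracket, a short computation on $\FPV$ (using selfadjointness of $H_1,H_2$ together with invariance of $\FPV$) gives $\langle\phi,[H_1,H_2]\psi\rangle = -\langle[H_1,H_2]\phi,\psi\rangle$, so $i[H_1,H_2]$ is symmetric on $\FPV$; blockwise it equals $i[H_1^{(\mu)},H_2^{(\mu)}]$, which is a selfadjoint matrix because the commutator of two Hermitian matrices is anti-Hermitian. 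Thus $[iH_1,iH_2] = -[H_1,H_2]$ closes to an element of $\mathfrak{u}(X)$.

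Finally, the Lie-algebra axioms---bilinearity, antisymmetry, and the Jacobi identity---need only be checked on $\FPV$, where they hold because they already hold in each finite-dimensional block $\mathcal{B}(\mathcal{H}^{(\mu)})$ under the ordinary matrix commutator; since any two elements of $\mathfrak{u}(X)$ agreeing on $\FPV$ agree everywhere by essential selfadjointness, these identities lift from $\FPV$ to the operators. The main obstacle I anticipate is not any single estimate but the bookkeeping of domains: one must consistently identify each element of $\mathfrak{u}(X)$ with its restriction to $\FPV$ and confirm at each step that the closure of a symmetric block-diagonal operator genuinely commutes with $X$ in the spectral sense, not merely on $\FPV$. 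Finite-dimensionality of the blocks is exactly what removes this obstacle, since it simultaneously makes the $H^{(\mu)}$ bounded, keeps $\FPV$ invariant, and renders every finite-particle vector entire analytic.
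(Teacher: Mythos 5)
Your proposal is correct and follows essentially the same route as the paper: the block decomposition of $H$ via commuting spectral projections (Eq.~\eqref{eq:29}), boundedness of the blocks $H^{(\mu)}$ from finite-dimensionality of the $\mathcal{H}^{(\mu)}$, essential selfadjointness on $\FPV$ via Nelson's analytic vector theorem, and the definition of sum and commutator on the common invariant domain $\FPV$ with unique selfadjoint extensions. You merely spell out some steps the paper leaves implicit (the explicit bound $\|H^k\psi\|\le C^k\|\psi\|$, symmetry of $i[H_1,H_2]$, and the blockwise verification of the Lie-algebra axioms), which is fine.
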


Since all  $iH \in \mathfrak{u}(X)$ are anti-selfadjoint, they admit a
well-defined exponential map $\exp(i H)$. Boundedness of the $H^{(\mu)}$
together with Eq.~\eqref{eq:29} allows to express $\exp(iH)$ very
explicitly. More precisely one has 
\begin{equation} \label{eq:22}
  \exp(iH)\, \psi = \sum_{\mu=-\infty}^\infty \exp(i H^{(\mu)})\,
  \psi^{(\mu)}\, \text{ where }\, \psi^{(\mu)} = X^{(\mu)}\psi \in \mathcal{H}^{(\mu)}
\end{equation}
and $\exp(i H^{(\mu)}) = \sum_{n=0}^\infty (iH^{(\mu)})^n/(n!)$.
This shows that $\exp: \mathfrak{u}(X) \rightarrow \mathcal{U}(X)$ is well-defined and onto
as stated in Thm.~\ref{thm:6}, which we are now ready to prove: 
\begin{prop}\label{prop:4.2}
  The exponential map on $\mathfrak{u}(X)$ is well-defined and given in terms of
  Equation~\eqref{eq:22}. 
  It maps $\mathfrak{u}(X)$ onto
  the strongly closed subgroup
  \begin{align}
    \mathcal{U}(X) &=  \big\{ U \in \mathcal{U}(\mathcal{H})\, | \, [U,\pi(z)]=0 \, \text{ for all }\, z
    \in \mathrm{U}(1)\big\} \\
    &= \big\{ U \, | \, U  \psi = \mbox{$\sum_\mu$} U^{(\mu)} \psi^{(\mu)},\
      \psi \in \mathcal{H},\ 
      U^{(\mu)} \in \mathcal{U}(\mathcal{H}^{(\mu)})\big\} \,\text{ of }\, \mathcal{U}(\mathcal{H}).
  \end{align}
\end{prop}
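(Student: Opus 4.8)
The plan is to dispatch the four assertions — well-definedness of $\exp$ on $\mathfrak{u}(X)$, the explicit formula~\eqref{eq:22}, surjectivity onto $\mathcal{U}(X)$, and the fact that $\mathcal{U}(X)$ is a strongly closed subgroup — one at a time, reducing each to the block-diagonal structure already secured in Proposition~\ref{prop:4.1}. First I would record that the two descriptions of $\mathcal{U}(X)$ coincide: by the discussion preceding the proposition, a unitary $U$ satisfies $[U,\pi(z)]=0$ for all $z$ iff it commutes with every spectral projection $X^{(\mu)}$, which happens exactly when $U$ leaves each $\mathcal{H}^{(\mu)}$ invariant, i.e.\ $U\psi = \sum_\mu U^{(\mu)}\psi^{(\mu)}$ with $U^{(\mu)} = X^{(\mu)} U X^{(\mu)} \in \mathcal{U}(\mathcal{H}^{(\mu)})$.

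For well-definedness and the formula, since $iH$ is anti-selfadjoint, Stone's theorem guarantees that $t \mapsto \exp(itH)$ is a strongly continuous one-parameter unitary group, so $\exp(iH)$ exists. To identify it with the right-hand side of~\eqref{eq:22}, I would define $V_t\psi := \sum_\mu \exp(itH^{(\mu)})\,\psi^{(\mu)}$. Each $\exp(itH^{(\mu)})$ is unitary on the finite-dimensional $\mathcal{H}^{(\mu)}$, so $V_t$ is unitary and $t\mapsto V_t$ is a one-parameter group; strong continuity follows because on the dense set $\FPV$ the action is a \emph{finite} sum of continuous block exponentials, and the $V_t$ are uniformly bounded. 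For $\psi \in \FPV$ the norm-derivative $\frac{d}{dt}\big|_{t=0} V_t\psi = \sum_\mu iH^{(\mu)}\psi^{(\mu)} = iH\psi$ exists (again a finite sum), so the Stone generator of $V_t$ agrees with $iH$ on $\FPV$. Since $\FPV$ is a core for $H$ by Proposition~\ref{prop:4.1}, uniqueness of the generator forces $V_t = \exp(itH)$, and evaluating at $t=1$ yields~\eqref{eq:22}. Equivalently, one may view $H$ as the orthogonal direct sum of the $H^{(\mu)}$ and appeal to the compatibility of the Borel functional calculus with such direct sums.

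The formula~\eqref{eq:22} immediately shows $\exp$ maps into $\mathcal{U}(X)$, since its blocks $\exp(iH^{(\mu)})$ are unitary. For surjectivity, I would take $U = \sum_\mu U^{(\mu)} \in \mathcal{U}(X)$; because each $\mathcal{H}^{(\mu)}$ is finite-dimensional, the finite-dimensional exponential map is onto, so I may choose selfadjoint $H^{(\mu)} \in \mathcal{B}(\mathcal{H}^{(\mu)})$ with $\exp(iH^{(\mu)}) = U^{(\mu)}$ (diagonalize $U^{(\mu)}$ and take logarithms of its eigenvalues). By the characterization in Proposition~\ref{prop:4.1}, the resulting sequence defines some $iH \in \mathfrak{u}(X)$ — no bound on the sequence $\|H^{(\mu)}\|$ is needed, only blockwise boundedness — and~\eqref{eq:22} gives $\exp(iH) = U$.

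Finally, $\mathcal{U}(X)$ is a subgroup because block-diagonal unitaries are closed under products and adjoints (equivalently, it is the intersection of a commutant with $\mathcal{U}(\mathcal{H})$). For strong closedness — understood, as in the footnote, relative to $\mathcal{U}(\mathcal{H})$ — suppose $U_n \in \mathcal{U}(X)$ with $U_n \to U$ strongly and $U \in \mathcal{U}(\mathcal{H})$. Applying strong convergence to the vector $\pi(z)\psi$ gives $U_n\pi(z)\psi \to U\pi(z)\psi$, while boundedness of $\pi(z)$ gives $\pi(z)U_n\psi \to \pi(z)U\psi$; since $U_n\pi(z)\psi = \pi(z)U_n\psi$ for every $n$, the limits coincide, whence $[U,\pi(z)]=0$ and $U \in \mathcal{U}(X)$. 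The only genuinely delicate step is the identification~\eqref{eq:22}, where the possible unboundedness of $H$ could obstruct a naive term-by-term exponentiation; this is precisely what the core property of $\FPV$ from Proposition~\ref{prop:4.1} is designed to circumvent, so I expect it to carry the main weight of the argument.
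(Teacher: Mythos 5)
Your proof is correct and follows essentially the same route as the paper: blockwise exponentiation justified by the invariant core $\FPV$ from Proposition~\ref{prop:4.1}, surjectivity from the finite-dimensional exponential map on each block $\mathcal{H}^{(\mu)}$, and closedness via continuity of $A \mapsto [\pi(z),A]$. The one place where the two arguments genuinely diverge is the closedness step. You take the closure relative to $\mathcal{U}(\mathcal{H})$ --- which is all the statement, read with the paper's footnote, requires --- so you may assume the limit $U$ is already unitary and need only verify $[U,\pi(z)]=0$. The paper instead considers an arbitrary net in $\mathcal{U}(X)$ converging strongly to a bounded operator $U$, decomposes $U$ into blocks $U^{(\mu)}$, and uses the fact that on the finite-dimensional $\mathcal{H}^{(\mu)}$ strong convergence is norm convergence to conclude that each $U^{(\mu)}$ is unitary. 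This buys the strictly stronger conclusion, stated explicitly in the remark after the proof, that $\mathcal{U}(X)$ is strongly closed even as a subset of $\mathcal{B}(\mathcal{H})$, i.e.\ a strong limit of elements of $\mathcal{U}(X)$ can never be a non-unitary isometry; your argument does not recover that refinement, but it is not needed for the proposition as stated. Two minor points: you use sequences where the paper uses nets, which is harmless here because the strong topology on the unit ball of $\mathcal{B}(\mathcal{H})$ is metrizable for separable $\mathcal{H}$; and your justification of Equation~\eqref{eq:22} via Stone's theorem and the core property is a more explicit version of what the paper compresses into one sentence before the proposition, so it adds rigor rather than changing the method.
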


\begin{proof}
  The only statement not yet proven is the closedness of $\mathcal{U}(X)$. To this end, we
  have to show that for any net $(U_\lambda)_{\lambda \in \mathcal{I}}$
  strongly converging to a bounded operator $U$ we have $U \in \mathcal{U}(X)$. 
  As $U_\lambda \in \mathcal{U}(X)$ we have $[\pi(z),U_\lambda] =0$ 
  for all $\lambda$. Due to strong continuity of the map $A \mapsto [\pi(z),A]$ and
  the convergence of the $U_\lambda$ to $U$  it follows that $[\pi(z),U]=0$.
  Hence $U$ decomposes into a strongly converging series $U = \sum_\mu
  U^{(\mu)}$ with $U^{(\mu)} \in \mathcal{B}(\mathcal{H}^{(\mu)})$, and for
  each fixed $\mu$ we get $\lim_\lambda U_\lambda^{(\mu)} = U^{(\mu)}$. Since
  $\mathcal{H}^{(\mu)}$  is finite-dimensional, the nets
  $(U_\lambda^{(\mu)})_{\lambda \in \mathcal{I}}$ converge in norm and therefore $U^{(\mu)} \in
  \mathcal{U}(\mathcal{H}^{(\mu)})$ which implies $U \in \mathcal{U}(X)$.
\end{proof}

Note that we actually proved more than what we stated. A strongly
convergent sequence (or net) of elements of $\mathcal{U}(X)$ cannot converge to an
isometry which is not unitary as well. Hence $\mathcal{U}(X)$ is strongly closed as a subset of
$\mathcal{B}(\mathcal{H})$---and not only as a subset of $\mathcal{U}(\mathcal{H})$
as generally is the case (cf.\ corresponding remarks in Sect.~\ref{sec:dynamical-group}).

The remaining statements in this subsection are devoted to the dynamical
group~$\mathcal{G}$ generated by a family of selfadjoint operators $H_1, \dots, H_d$.
Recall that we have introduced it as the smallest strongly closed subgroup of
$\mathcal{U}(\mathcal{H})$ containing all unitaries of the form $\exp(i t
H_k)$. If the $H_k$ are commuting with $X$, i.e.\ $iH_k \in \mathfrak{u}(X)$, then the
group $\mathcal{G}$ is a subgroup of $\mathcal{U}(X)$, and the simple structure of the latter
makes explicit calculations at least feasible. In the following, we show
how $\mathcal{U}(X)$ is related to the Lie algebra $\mathfrak{l}$ generated by the
$iH_k$. To this end, we need some additional notations. For each $K \in
\N$, $U \in \mathcal{U}(X)$, and $iH \in \mathfrak{u}(X)$, let us consider 
\begin{equation} \label{eq:62}
  U^{[K]} = \sum_{\mu=0}^K U^{(\mu)},\; H^{[K]} = \sum_{\mu=0}^K
  H^{(\mu)},\; \mathcal{H}^{[K]} = \bigoplus_{\mu=0}^K \mathcal{H}^{(\mu)}. 
\end{equation}
The operators $U^{[K]}$ and $H^{[K]}$ act on the finite-dimensional Hilbert space
$\mathcal{H}^{[K]}$. Therefore all operator topologies coincide and we can
apply the well-known finite-dimensional theory. The dynamical group $\mathcal{G}^{[K]}$
(generated by $H^{[K]}_k$ with $k\in\{1,\dots,d\}$) becomes a closed subgroup of the
unitary group $\mathcal{U}(\mathcal{H}^{[K]})$, which is a Lie group. Hence
$\mathcal{G}^{[K]}$ is a Lie group, too, and its Lie algebra $\mathfrak{l}^{[K]}$ is
generated by $iH^{[K]}_k$ with $k\in\{1,\dots,d\}$. Now, the crucial point is that one can
approximate the infinite-dimensional objects $\mathcal{G}$ and $\mathfrak{l}$
by the finite-dimensional $\mathcal{G}^{[K]}$ and $\mathfrak{l}^{[K]}$. To see this, the
first step is the following lemma.

\begin{lem} \label{lem:1}
  Consider the Lie algebras $\mathfrak{l} \subset \mathfrak{u}(X)$ and
  $\mathfrak{l}^{[K]} \subset \mathcal{B}(\mathcal{H}^{[K]})$ (with $K \in \N$) generated by
  $iH_1, \dots, iH_d$ and $iH^{[K]}_1, \dots, iH^{[K]}_d$,
  respectively. Each element $\tilde{Q} \in \mathfrak{l}^{[K]}$ can be written as
  $\tilde{Q} = Q^{[K]}$ for some element $Q \in \mathfrak{l}$. 
\end{lem}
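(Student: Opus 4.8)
The plan is to exhibit the truncation operation $Q \mapsto Q^{[K]} = \sum_{\mu=0}^K Q^{(\mu)}$ as a Lie-algebra homomorphism $\rho_K \colon \mathfrak{u}(X) \to \mathcal{B}(\mathcal{H}^{[K]})$ and then invoke the elementary fact that a homomorphism carries the subalgebra generated by a set onto the subalgebra generated by the image of that set.

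First I would verify that $\rho_K$ respects the Lie structure. By Proposition~\ref{prop:4.1} every $Q \in \mathfrak{u}(X)$ leaves $\FPV$ invariant and acts block-diagonally, $Q\psi = \sum_\mu Q^{(\mu)}\psi^{(\mu)}$ with $Q^{(\mu)} \in \mathcal{B}(\mathcal{H}^{(\mu)})$. Since each $\mathcal{H}^{(\mu)} \subset \FPV$, the product $Q_1 Q_2$ and the commutator $[Q_1, Q_2]$ are well-defined on $\FPV$ and again block-diagonal, with $\bigl([Q_1,Q_2]\bigr)^{(\mu)} = [Q_1^{(\mu)}, Q_2^{(\mu)}]$ and $(\lambda Q_1 + Q_2)^{(\mu)} = \lambda Q_1^{(\mu)} + Q_2^{(\mu)}$. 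Because the truncated operators $Q_i^{[K]}$ are themselves block-diagonal on the finite-dimensional space $\mathcal{H}^{[K]}$, truncating and bracketing commute: $[Q_1, Q_2]^{[K]} = \sum_{\mu=0}^K [Q_1^{(\mu)}, Q_2^{(\mu)}] = [Q_1^{[K]}, Q_2^{[K]}]$, and likewise for linear combinations. Hence $\rho_K$ is a Lie-algebra homomorphism, and by the very definition of $H_j^{[K]}$ in Eq.~\eqref{eq:62} we have $\rho_K(iH_j) = iH_j^{[K]}$ for each $j$.

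Next I would apply the generation principle. The image $\rho_K(\mathfrak{l})$ is a Lie subalgebra of $\mathcal{B}(\mathcal{H}^{[K]})$ containing all the generators $iH_j^{[K]}$, so it contains the subalgebra $\mathfrak{l}^{[K]}$ that these generate. Conversely, every element of $\mathfrak{l}$ is a finite linear combination of iterated brackets of the $iH_j$; applying the homomorphism $\rho_K$ turns such an expression into the same combination of iterated brackets of the $iH_j^{[K]}$, which lies in $\mathfrak{l}^{[K]}$. Thus $\rho_K(\mathfrak{l}) = \mathfrak{l}^{[K]}$, which is precisely the assertion: every $\tilde{Q} \in \mathfrak{l}^{[K]}$ equals $Q^{[K]}$ for some $Q \in \mathfrak{l}$.

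The only delicate point is the first step, where the $Q_i$ may be unbounded, so that one must be sure that forming $[Q_1, Q_2]$ and truncating genuinely commute despite the domain issues. This is handled by carrying out all computations on the common invariant domain $\FPV$, on which each $Q \in \mathfrak{u}(X)$ is essentially selfadjoint and honestly block-diagonal, so that no passage to closures or selfadjoint extensions is needed; the finite-dimensionality of each $\mathcal{H}^{(\mu)}$ guarantees that the blocks $Q^{(\mu)}$ are bounded and that the block-wise identities are purely algebraic. With this established, the remainder is routine.
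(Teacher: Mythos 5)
Your proof is correct and takes essentially the same route as the paper's: the paper writes $\tilde{Q}$ as a linear combination of repeated commutators of the $iH_j^{[K]}$ and lifts each to the corresponding commutator of the $iH_j$, which is exactly your homomorphism argument in unpacked form. You make explicit (and verify on the invariant domain $\FPV$) the fact that truncation commutes with the Lie bracket, a point the paper uses only implicitly, so your writeup is if anything slightly more complete.
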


\begin{proof}
  Since $\tilde{Q} \in \mathfrak{l}^{[K]}$, it 
  is equal to a linear
  combination $\sum_{\ell} c_{\ell} C_{\ell}(iH_{j_1}^{[K]},\dots,iH_{j_n}^{[K]})$ of repeated commutators 
  $C_{\ell}(iH_{j_1}^{[K]},\dots,iH_{j_n}^{[K]})$ 
  containing the elements
  $\{iH_{j_1}^{[K]},\dots,iH_{j_n}^{[K]}\}$ with $j_k \in\{1, \dots, d\}$. However,
  $\mathfrak{l}$ is generated by $iH_1, \dots, iH_k$ and 
  it contains the same
  commutators $C_{\ell}(iH_{j_1},\dots,iH_{j_n})$ yet with $H_j^{[K]}$ replaced by
  $H_j$. Hence one can form a linear combination $Q$ such that $Q^{[K]} =
  \tilde{Q}$ as stated.  
\end{proof}

Moreover, we now have the tools to prove the relation between the Lie algebra
$\mathfrak{l}$ and the dynamical group $\mathcal{G}$ already stated in Thm. \ref{thm:6}. 

\begin{prop} \label{prop:5}
   Consider again $iH_1, \dots, iH_d \in \mathfrak{u}(X)$ and the Lie algebra
   $\mathfrak{l}$ generated by them. Then the corresponding dynamical group $\mathcal{G}$
   coincides with the strong closure of $\exp(\mathfrak{l}) \subset \mathcal{U}(X)$. 
\end{prop}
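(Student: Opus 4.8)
The plan is to identify $\mathcal{G}$ with the block-diagonal unitaries whose finite truncations already lie in the finite-dimensional dynamical groups $\mathcal{G}^{[K]}$. Concretely, I would introduce the candidate set $\mathcal{M} = \{\, U \in \mathcal{U}(X)\,|\,U^{[K]} \in \mathcal{G}^{[K]}\ \text{for all}\ K \in \N \,\}$ and prove the chain $\mathcal{G} = \mathcal{M} = \overline{\exp(\mathfrak{l})}$, where the overline denotes strong closure. The one input from finite-dimensional theory I would invoke at the outset is that each $\mathcal{G}^{[K]}$ is a \emph{compact, connected} Lie group — compact as a strongly closed subgroup of $\mathcal{U}(\mathcal{H}^{[K]})$, connected because it is generated by the one-parameter subgroups $t \mapsto \exp(itH_k^{[K]})$ through the identity — so that $\exp\colon \mathfrak{l}^{[K]} \to \mathcal{G}^{[K]}$ is \emph{onto}. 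Together with formula \eqref{eq:22}, which gives $\exp(Q)^{[K]} = \exp(Q^{[K]})$ for $Q \in \mathfrak{u}(X)$, this links the infinite-dimensional exponential to the blocks.

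First I would check that $\mathcal{M}$ is a strongly closed subgroup containing all $\exp(itH_k)$. The group property is immediate and is the whole point of passing to $\mathcal{M}$: since truncation is multiplicative on block-diagonal operators, $(UV)^{[K]} = U^{[K]}V^{[K]}$ and $(U^{-1})^{[K]} = (U^{[K]})^{-1}$, so membership of the blocks in the \emph{group} $\mathcal{G}^{[K]}$ is preserved. Strong closedness follows exactly as in the proof of Proposition~\ref{prop:4.2}: a strong limit of elements of $\mathcal{U}(X)$ again lies in $\mathcal{U}(X)$ and converges blockwise in norm, and each $\mathcal{G}^{[K]}$ is closed. Containment of the generators is clear from $\exp(itH_k)^{[K]} = \exp(itH_k^{[K]}) \in \mathcal{G}^{[K]}$. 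Minimality of $\mathcal{G}$ then yields $\mathcal{G} \subseteq \mathcal{M}$.

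The substance is the reverse inclusion $\mathcal{M} \subseteq \mathcal{G}$ together with $\mathcal{M} = \overline{\exp(\mathfrak{l})}$, both obtained from the same truncation-and-approximation device. Given $U \in \mathcal{M}$, a finite vector set $\psi_1,\dots,\psi_f$, and $\epsilon>0$, I would choose $K$ so large that, setting $\psi_j^{[K]} := \sum_{\mu=0}^K \psi_j^{(\mu)}$, one has $\|\psi_j - \psi_j^{[K]}\| < \epsilon$ for all $j$ (possible since $\sum_\mu \psi_j^{(\mu)}$ converges by \eqref{eq:37}). On the finite-dimensional block, $U^{[K]} \in \mathcal{G}^{[K]}$, so I can either (i) write $U^{[K]} = \exp(\tilde{Q})$ with $\tilde{Q} \in \mathfrak{l}^{[K]}$ and, via Lemma~\ref{lem:1}, lift it to $Q \in \mathfrak{l}$ with $Q^{[K]} = \tilde{Q}$, giving $\exp(Q)^{[K]} = U^{[K]}$; or (ii) approximate $U^{[K]}$ in operator norm by a finite product $P^{[K]} = \prod_l \exp(it_l H_{k_l}^{[K]})$ of generator-exponentials, since $\mathcal{G}^{[K]}$ is the strong closure of the group they generate. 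Because every factor is block diagonal, both $\exp(Q)$ and $P = \prod_l \exp(it_l H_{k_l})$ act on $\mathcal{H}^{[K]}$ exactly through their truncations: $\exp(Q)$ agrees there with $U$ exactly, while $P$ agrees up to $\|U^{[K]}-P^{[K]}\|$. A triangle-inequality estimate using $\|\psi_j - \psi_j^{[K]}\| < \epsilon$ and unitarity then bounds $\|(U-\exp(Q))\psi_j\|$ and $\|(U-P)\psi_j\|$ by a fixed multiple of $\epsilon$. Route (i) places $U$ in $\overline{\exp(\mathfrak{l})}$ and route (ii) in $\mathcal{G}$; since trivially $\exp(\mathfrak{l}) \subseteq \mathcal{M}$ and $\mathcal{M}$ is strongly closed, this closes the chain $\mathcal{G} = \mathcal{M} = \overline{\exp(\mathfrak{l})}$.

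The main obstacle is precisely the reason one cannot argue naively on the group level: in infinite dimensions $\exp(\mathfrak{l})$ need not be a group, and a strong limit of exponentials need not be an exponential, so one cannot directly show its strong closure is closed under products. The device resolving this is the simultaneous block-diagonality of all operators in $\mathfrak{u}(X)$, which turns products of infinitely-supported exponentials into blockwise finite-dimensional products and lets the compact-group surjectivity of $\exp$ and Lemma~\ref{lem:1} do the work block by block. The only genuine care needed is the uniformity of the cutoff: the level $K$ must be chosen \emph{after} the vectors $\psi_j$ and $\epsilon$, and the tail contributions must be absorbed using norm-preservation, exactly the mechanism flagged after Theorem~\ref{thm:6}.
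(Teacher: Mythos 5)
Your proposal is correct and follows essentially the same route as the paper's proof: truncate to the finite-dimensional blocks $\mathcal{H}^{[K]}$, use surjectivity of $\exp$ on the compact connected Lie group $\mathcal{G}^{[K]}$ together with Lemma~\ref{lem:1} to lift block data back to $\mathfrak{l}$, and approximate on finitely many vectors supported (up to a controlled tail) in $\mathcal{H}^{[K]}$. The only differences are organizational --- you interpose the auxiliary set $\mathcal{M}$ and absorb the tails by a triangle inequality where the paper restricts the test vectors to the dense domain $\FPV$ --- and these do not change the substance of the argument.
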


\begin{proof}
  Each $U \in \mathcal{G}$ can be written as the limit of a net $(U_\lambda)_{\lambda
    \in \mathcal{I}}$ of operators $U_\lambda$, which are monomials in
  $\exp(it_kH_k)$ with $k\in\{1,\dots,d\}$ with appropriate times $t_k$. This implies in
  particular that the $U_\lambda$ commute with $\pi(z)$ for all $z$, and, by
  continuity, the same is true for $U$. Hence $U\in \mathcal{U}(X)$, and for each $K \in
  \N$ we can define $U^{[K]}$ which is the limit of the net
  $(U^{[K]}_\lambda)_{\lambda \in \mathcal{I}}$. The latter converges in norm
  (since $\mathcal{H}^{[K]}$ is finite-dimensional), and therefore $U^{[K]}
  \in \mathcal{G}^{[K]}$. This implies $U^{[K]} = \exp(Q_K)$ with $Q_K \in
  \mathfrak{l}^{[K]}$ as $\mathcal{G}^{[K]}$ is a Lie group and $\mathfrak{l}^{[K]}$
  its Lie algebra. 

  For $U$ to be in the strong closure of $\exp(\mathfrak{l})$, each strong
  $\epsilon$-neighborhood of $U$, i.e.\ the sets
  $\mathcal{N}(U;\psi_1,\dots,\psi_f;\epsilon)$ introduced in
  Eq.~\eqref{eq:66}, should contain  an element of $\exp(\mathfrak{l})$ 
  for all $\psi_1, \dots, \psi_f$ and all $\epsilon > 0$. 
  However, the unitary group is contained in the unit
  ball of $\mathcal{B}(\mathcal{H})$, and thus it is sufficient to consider
  only those $\mathcal{N}(U;\psi_1,\dots;\psi_f,\epsilon)$ with vectors
  $\psi_1, \dots, \psi_f$ from a dense subspace of $\mathcal{H}$;
  cf.\ \cite[I.3.1.2]{blackadar06}. Hence, in turn, it is sufficient to consider only
  neighborhoods with $\psi_j \in \FPV$. But then there is a $K \in \N$
  such that $\psi_j \in \mathcal{H}^{[K]}$ for all $j\in\{1,\dots,f\}$. Now take
  the operator $Q_K$ from the last paragraph and $\tilde{Q}_K \in \mathfrak{l}$
  with $\tilde{Q}_K^{[K]} = Q_K$, which exists due to Lemma \ref{lem:1}. By
  construction we have 
  $\| [U - \exp(\tilde{Q}_K)]\psi_j\| = \| [U^{[K]}
  - \exp(\tilde{Q}_K)^{[K]}]\psi_j\| =  \| [(U^{[K]} -
  \exp(\tilde{Q}_K^{[K]})]\psi_j\| 
  = \| [(U^{[K]} - \exp(Q_K)]\psi_j\| = 0 $   
  since $U^{[K]} = \exp(Q_K)$, as was also seen in the previous paragraph. Hence
  $\exp(\tilde{Q}_K) \in \mathcal{N}(U;\psi_1,\dots,\psi_f;\epsilon)$ which
  shows that $U$ is in the strong closure of $\exp(\mathfrak{l})$. This shows that 
  the dynamical group $\mathcal{G}$ is contained in the strong closure of $\exp(\mathfrak{l})$.

  Conversely, consider $\exp(Q)$ for $Q \in \mathfrak{l}$. We have to show
  that $\exp(Q)$ is in the dynamical group $\mathcal{G}$. To this end we observe, for each $K
  \in \N$, that $\exp(Q^{[K]}) = \exp(Q)^{[K]}$, which is obviously in
  $\mathcal{G}^{[K]}$. Hence there is a $U_K = \exp(iH_{j_1}^{[K]}) \cdots
  \exp(iH_{j_n}^{[K]})$ with $j_k \in \{1,\dots,d\}$ which is $\epsilon$-close (in
  norm) to $\exp(Q^{[K]})$. As in the last paragraph, this implies that
  $\tilde{U} = \exp(iH_{j_1})\cdots\exp(iH_{j_n})$ is in
  $\mathcal{N}(\exp(Q);\psi_1, \dots, \psi_f;\epsilon)$ provided $\psi_j \in
  \mathcal{H}^{[K]}$ for all $j\in\{1,\dots,f\}$. Hence $\exp(Q)$ is in the strong
  closure of the group of monomials in the $\exp(iH_j)$, but this is just the
  dynamical group $\mathcal{G}$. Since $\mathcal{G}$ is strongly closed, the strong closure of
  $\exp(\mathfrak{l})$ is contained in $\mathcal{G}$, too. Since we have shown the other
  inclusion before, the entire proposition is proven.
\end{proof}

Moreover, with this proposition the proof of Thm.~\ref{thm:6} is complete. -- The rest of
this subsection is devoted to analyzing a related question: If, in finite
dimension, two Lie algebras $\mathfrak{l}_1, \mathfrak{l}_2$ generate the same
group, then they are actually identical. However, in infinite dimensions this no longer true. 
Therefore, the next proposition is meant to decide if dynamical groups generated by two
different sets of Hamiltonians do in fact coincide.

\begin{prop} \label{prop:6}
  Consider two Lie algebras $\mathfrak{l}_1, \mathfrak{l}_2 \subset
  \mathfrak{u}(X)$. Assume that for each $Q \in \mathfrak{l}_1$ and each $K \in
  \N$, there is a $\tilde{Q} \in \mathfrak{l}_2$ such that $Q^{[K]} =
  \tilde{Q}^{[K]}$ holds (note that we can have different
    $\tilde{Q}$ for the same $Q$ but different $K$). Then
  $\exp(\mathfrak{l}_1)$ is contained in the strong closure of
  $\exp(\mathfrak{l}_2)$.  
\end{prop}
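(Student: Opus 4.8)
The plan is to reproduce the structure of the ``conversely'' half of the proof of Proposition~\ref{prop:5}, which already supplies all the essential machinery. Fix $Q \in \mathfrak{l}_1$; the goal is to show that $\exp(Q)$ lies in the strong closure of $\exp(\mathfrak{l}_2)$. By the definition of the strong topology via the neighborhood base \eqref{eq:66}, this amounts to producing, for every finite collection $\psi_1,\dots,\psi_f \in \mathcal{H}$ and every $\epsilon>0$, an element of $\exp(\mathfrak{l}_2)$ lying in $\mathcal{N}(\exp(Q);\psi_1,\dots,\psi_f;\epsilon)$. Since every unitary lies in the unit ball of $\mathcal{B}(\mathcal{H})$, it suffices (exactly as in Proposition~\ref{prop:5}, cf.\ \cite[I.3.1.2]{blackadar06}) to treat test vectors drawn from the dense subspace $\FPV$ of finite-particle vectors.

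First I would fix $\psi_1,\dots,\psi_f \in \FPV$ and choose $K \in \N$ large enough that $\psi_j \in \mathcal{H}^{[K]}$ for all $j$. By the hypothesis of the proposition, applied to this $Q$ and this $K$, there is a $\tilde{Q} \in \mathfrak{l}_2$ with $Q^{[K]} = \tilde{Q}^{[K]}$. The decisive point is that the exponential of a block-diagonal generator factorizes over the blocks, as recorded in \eqref{eq:22} and already used in Proposition~\ref{prop:5}: for any $R \in \mathfrak{u}(X)$ one has $\exp(R^{[K]}) = \exp(R)^{[K]}$. Hence
\[
  \exp(Q)^{[K]} = \exp\bigl(Q^{[K]}\bigr) = \exp\bigl(\tilde{Q}^{[K]}\bigr) = \exp(\tilde{Q})^{[K]},
\]
so that $\exp(Q)$ and $\exp(\tilde{Q})$ coincide \emph{exactly} on $\mathcal{H}^{[K]}$. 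Consequently $\|[\exp(Q)-\exp(\tilde{Q})]\psi_j\| = 0$ for every $j$, giving $\exp(\tilde{Q}) \in \mathcal{N}(\exp(Q);\psi_1,\dots,\psi_f;\epsilon)$ for arbitrary $\epsilon>0$. This completes the reduction to finite-particle vectors, whence $\exp(Q)$ lies in the strong closure of $\exp(\mathfrak{l}_2)$.

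I do not expect a genuine obstacle here, since the heavy lifting was carried out in Proposition~\ref{prop:5}. The only point deserving care is the reduction to test vectors in $\FPV$: this relies on all operators involved being contractions, so that approximating each $\psi_j$ within $\FPV$ and applying the triangle inequality controls the error on arbitrary vectors of $\mathcal{H}$. It is also worth stressing the qualitative contrast with the finite-dimensional situation, where two Lie algebras generating the same group must coincide: here the hypothesis matches only the \emph{truncations} $Q^{[K]}$, yet because $\exp$ acts blockwise this already forces exact agreement on each $\mathcal{H}^{[K]}$ rather than mere norm approximation, so the monomial-approximation step needed in Proposition~\ref{prop:5} can be dispensed with. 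This yields the stated one-sided inclusion; applying the proposition with the roles of $\mathfrak{l}_1$ and $\mathfrak{l}_2$ interchanged is then what lets one conclude that the corresponding dynamical groups actually coincide.
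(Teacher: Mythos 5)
Your proof is correct and follows essentially the same route as the paper, whose own proof of Proposition~\ref{prop:6} simply points back to the strategy of Proposition~\ref{prop:5}: reduce to test vectors in $\FPV$, pick $K$ with $\psi_j \in \mathcal{H}^{[K]}$, and use the blockwise exponential $\exp(Q)^{[K]} = \exp(Q^{[K]}) = \exp(\tilde{Q}^{[K]}) = \exp(\tilde{Q})^{[K]}$ to place $\exp(\tilde{Q})$ in every strong $\epsilon$-neighborhood of $\exp(Q)$. You have merely filled in the details that the paper leaves implicit by its cross-reference.
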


\begin{proof}
  One may readily use the same strategy as in the proof of Prop. \ref{prop:5}: If the given
  condition holds, one can find in each neighborhood
  $\mathcal{N}(\exp(Q);\psi_1,\dots,\psi_f;\epsilon)$ of $\exp(Q)$ with
  $\psi_1,\dots,\psi_f \in \FPV$ an $\exp(\tilde{Q})$ with $\tilde{Q} \in
  \mathfrak{l}_2$. Hence $\exp(Q)$ is in the strong closure of
  $\exp(\mathfrak{l}_2)$.  
\end{proof}

Inserting $\mathfrak{su}(X)$ for $\mathfrak{l}_2$ provides a useful criterion to
check whether the dynamical group $\mathcal{G}$ generated by $H_1, \dots, H_d
\in \mathfrak{su}(X)$ is as large as possible in the sense that $\mathcal{G} =
\mathcal{SU}(X)$. To this end, let us introduce the truncated versions
\begin{align}
  \mathfrak{su}^{[K]}(X) &= \{ Q^{[K]}\,|\,Q \in \mathfrak{su}(X)\} = \oplus_{\mu=0}^K\,
  \mathfrak{su}(\mathcal{H}^{(\mu)}),\nonumber \\ 
  \mathcal{SU}^{[K]}(X) &= \{U^{[K]}\,|\, U \in \mathcal{SU}(X)\}=
  \oplus_{\mu=0}^K\, \mathcal{SU}(\mathcal{H}^{(\mu)}),
\end{align}
where we have used for any finite-dimensional subspace $\mathcal{K}$ of
$\mathcal{H}$ the notations $\mathfrak{su}(\mathcal{K})$ for the Lie algebra of
traceless operators on $\mathcal{K}$ and similarly $\mathcal{SU}(\mathcal{K})$
for the Lie group of unitaries on $\mathcal{K}$ with determinant $1$. Note
that  elements of $\mathfrak{su}(\mathcal{K})$
and $\mathcal{SU}(\mathcal{K})$ have---as operators on $\mathcal{H}$---a finite rank and their support and range
are both contained in $\mathcal{K}$. 

\begin{kor} \label{kor:1}
  Consider Hamiltonians $iH_1, \dots, iH_d \in \mathfrak{su}(X)$, the corresponding
  dynamical group $\mathcal{G}$ and the generated Lie algebra
  $\mathfrak{l}$. If $\mathfrak{su}^{[K]}(X) = \mathfrak{l}^{[K]}$ holds for all
  $K \in \N$, then one finds $\mathcal{G} = \mathcal{SU}(X)$. 
\end{kor}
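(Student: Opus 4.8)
The plan is to reduce the infinite-dimensional statement $\mathcal{G} = \mathcal{SU}(X)$ to the two inclusions, exploiting the block-diagonal machinery already assembled. By Proposition~\ref{prop:5}, the dynamical group $\mathcal{G}$ is exactly the strong closure of $\exp(\mathfrak{l})$, so it suffices to show that this strong closure coincides with $\mathcal{SU}(X)$. I would handle the two inclusions separately, and I expect both to follow almost mechanically from the hypothesis $\mathfrak{su}^{[K]}(X) = \mathfrak{l}^{[K]}$ once Proposition~\ref{prop:6} is invoked in each direction.

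For the inclusion $\mathcal{G} \subseteq \mathcal{SU}(X)$, first I would note that $\mathfrak{l} \subseteq \mathfrak{su}(X)$ since the generators $iH_1, \dots, iH_d$ lie in $\mathfrak{su}(X)$ and the latter is a Lie subalgebra (closed under bracket and linear combination on $\FPV$). Hence $\exp(\mathfrak{l}) \subseteq \exp(\mathfrak{su}(X)) = \mathcal{SU}(X)$, where the last equality is the exponential surjectivity statement recorded just after Eq.~\eqref{eq:42}. Since $\mathcal{SU}(X)$ is strongly closed, the strong closure of $\exp(\mathfrak{l})$, which is $\mathcal{G}$, is contained in $\mathcal{SU}(X)$.

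For the reverse inclusion $\mathcal{SU}(X) \subseteq \mathcal{G}$, the natural route is to apply Proposition~\ref{prop:6} with $\mathfrak{l}_1 = \mathfrak{su}(X)$ and $\mathfrak{l}_2 = \mathfrak{l}$. The hypothesis to verify is that for each $Q \in \mathfrak{su}(X)$ and each $K \in \N$ there is a $\tilde{Q} \in \mathfrak{l}$ with $Q^{[K]} = \tilde{Q}^{[K]}$. Given $Q$ and $K$, the truncation $Q^{[K]}$ lies in $\mathfrak{su}^{[K]}(X)$ by definition of the truncated algebra; by the standing assumption $\mathfrak{su}^{[K]}(X) = \mathfrak{l}^{[K]}$, so $Q^{[K]} \in \mathfrak{l}^{[K]}$, i.e.\ $Q^{[K]} = \tilde{Q}^{[K]}$ for some $\tilde{Q} \in \mathfrak{l}$. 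This is precisely the hypothesis of Proposition~\ref{prop:6}, which then yields that $\exp(\mathfrak{su}(X)) = \mathcal{SU}(X)$ is contained in the strong closure of $\exp(\mathfrak{l}) = \mathcal{G}$. Combining the two inclusions gives $\mathcal{G} = \mathcal{SU}(X)$.

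The argument is mostly bookkeeping, so there is no deep obstacle; the one point deserving care is the bridge between the \emph{truncated} equality of Lie algebras and the \emph{strong-closure} statement for the groups. The subtlety is that $\exp(\mathfrak{su}(X))$ need not be contained in $\exp(\mathfrak{l})$ itself—only in its strong closure—because a given element $Q \in \mathfrak{su}(X)$ may require a different approximant $\tilde{Q} \in \mathfrak{l}$ for each cutoff $K$, and the approximants need not converge in any operator norm. This is exactly why the conclusion is phrased for the strongly closed dynamical group $\mathcal{G}$ rather than for $\exp(\mathfrak{l})$, and why Proposition~\ref{prop:6} (rather than a naive equality $\mathfrak{su}(X) = \mathfrak{l}$) is the right tool. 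I would make sure to state explicitly that the $K$-dependence of $\tilde{Q}$ is harmless, since the strong neighborhoods $\mathcal{N}(\exp(Q);\psi_1,\dots,\psi_f;\epsilon)$ with $\psi_j \in \FPV$ only constrain the action on a finite cutoff $\mathcal{H}^{[K]}$, matching the mechanism already used in the proof of Proposition~\ref{prop:5}.
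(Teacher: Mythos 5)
Your proof is correct and follows exactly the route the paper intends: the paper's own proof of Corollary~\ref{kor:1} is the one-line remark ``Simple application of Props.~\ref{prop:5} and \ref{prop:6}'', and your argument is precisely the spelled-out version of that, using Proposition~\ref{prop:5} to identify $\mathcal{G}$ with the strong closure of $\exp(\mathfrak{l})$ and Proposition~\ref{prop:6} with $\mathfrak{l}_1=\mathfrak{su}(X)$, $\mathfrak{l}_2=\mathfrak{l}$ for the nontrivial inclusion. Your closing remark about the $K$-dependence of the approximant $\tilde{Q}$ correctly identifies the one subtlety that Proposition~\ref{prop:6} is designed to absorb.
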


\begin{proof}
  Simple application of Props.~\ref{prop:5} and \ref{prop:6}.
\end{proof}

\subsection{One atom}
\label{sec:one-atom-1}

The material just introduced readily applies to the systems studied in
Sect.~\ref{sec:atoms-cavity}. This includes in particular the proofs of
Thms.~\ref{thm:3}, \ref{thm:8}, \ref{thm:4} and \ref{thm:5}. The first step is
again one atom interacting with a cavity (Sect.~\ref{sec:one-atom}). Hence the
Hilbert space is $\mathcal{H} = \Bbb{C}^2 \otimes \mathrm{L}^2(\Bbb{R})$ and
the $\mathrm{U}(1)$-symmetry under consideration is generated by the operator
$X_1 = \sigma_3 \otimes \Bbb{1} + \Bbb{1} \otimes N$ already defined in
(\ref{eq:36}). The domain of $X_1$ is $D$ from Eq.~\eqref{eq:40}, which is
identical to $D_{X_1}$ introduced in (\ref{eq:28}). 

The next step is to characterize the Lie algebra $\mathfrak{l}$ generated by
the control Hamiltonians $H_{\mathrm{JC},1}$ and $H_{\mathrm{JC},2}$ as defined in
(\ref{eq:15}). They admit $D=D_{X_1}$ as a joint common domain, and it is easy to
see that they commute with $X_1$ (in the sense introduced in the previous
subsection). Hence $\mathfrak{l} \subset \mathfrak{u}(X_1)$, and all the machinery
from Subsection~\ref{sec:commuting-operators} applies. This includes in particular
the block-diagonal decomposition of operators $A \in \mathfrak{u}(X_1)$ given in
Eq.~\eqref{eq:29}. In our case the subspaces $\mathcal{H}^{(\mu)}$ with $\mu \in
\N$ are given by (cf.\ Eq.~\eqref{eq:38}) $\mathcal{H}^{(\mu)} = \SP\{
\ket{\mu,0},\ \ket{\mu,1} \}$ using the basis $\ket{\mu,\nu} \in \mathcal{H}$
introduced in (\ref{eq:27}). For $\mu=0$, we get the one-dimensional space
$\mathcal{H}^{(0)} = \Bbb{C} \ket{0,0}$.
The restrictions
$H_{\mathrm{JC},j}^{(\mu)}$ of the operators
$H_{\mathrm{JC},j}$ to the subspaces $\mathcal{H}^{(\mu)}$ are given
by (for $\mu \geq 1$):
\begin{equation} \label{eq:67}
  H_{\mathrm{JC},1}^{(\mu)} = -  \varsigma_3^{(\mu)}/2 ,\, H_{\mathrm{JC},2}^{(\mu)} = \sqrt{\mu}
  \varsigma_1^{(\mu)},\, H_{\mathrm{JC},3} = (\mu + 1/2) \varsigma_0^{(\mu)} -
   \varsigma_3^{(\mu)}/2,
\end{equation}
where we have introduced the operators $\varsigma_{\alpha}=\sum_{\mu} \varsigma_{\alpha}^{(\mu)}$ 
with $\alpha \in \{0,\ldots,3\}$
via their projections
$\varsigma_0^{(\mu)} = \Bbb{1}^{(\mu)} = X^{(\mu)} = \kb{\mu,0} + \kb{\mu,1}$,
$\varsigma_1^{(\mu)} = \KB{\mu,0}{\mu,1} + \KB{\mu,1}{\mu,0}$,
$\varsigma_2^{(\mu)} = i \bigl(\KB{\mu,1}{\mu,0} - \KB{\mu,0}{\mu,1}\bigr)$, and
$\varsigma_3^{(\mu)} = \kb{\mu,0} - \kb{\mu,1}$.
Hence, for each fixed $\mu$, the operator $\varsigma_\alpha^{(\mu)}$ is just the
corresponding Pauli operator on $\mathcal{H}^{(\mu)}$ given in the basis
$\ket{\mu,0},\ket{\mu,1}$. We have used the core symbol $\varsigma$ rather than
$\sigma$ in order to avoid confusion with the operators $\sigma_\alpha \otimes
\Bbb{1}$ acting only on the atom. In addition we introduce the operators
$A_{\alpha,k} \in \mathfrak{u}(X_1)$ with $\alpha\in\{0,\dots,3\}$ and $k \in \N_0$ by
\begin{equation} \label{eq:68}
  A_{\alpha,k} = \sqrt{X_1} X_1^k \varsigma_\alpha\,\text{ for }\, \alpha \in \{1,2\},\, A_{3,k} =
  X_1^k \varsigma_3,\, A_{0,k} = X_1^{k}.
\end{equation}
In terms of the $A_{\alpha,k}$, now the $H_{\mathrm{JC},j}$ can readily be  re-expressed as
\begin{equation} \label{eq:39}
  H_{\mathrm{JC},1} = - A_{3,0}/2,\, H_{\mathrm{JC},3} = A_{1,0}/2,\, H_{\mathrm{JC},2} = A_{0,1} +
 (A_{0,0} - A_{3,0})/2.  
\end{equation}
The next lemma shows that the Lie algebra $\mathfrak{l}$ generated by the
$H_{\mathrm{JC},j}$ is spanned as a vector space by a subset of the
$A_{\alpha,k}$. 

\begin{lem} \label{lem:2}
  The Lie algebra $\mathfrak{l}$ generated by  $iH_{\mathrm{JC},j}$ with $j\in\{1,2\}$
  is spanned as a vector space by the operators $i A_{\alpha,k}$ with
  $\alpha\in\{1,2,3\}$ and
  $k \in \N_0$.
\end{lem}

\begin{proof}
  Obviously the operators $iA_{\alpha,k}$ are in $\mathfrak{su}(X_1)$. Hence, they
  span a subspace $\tilde{\mathfrak{l}} \subset \mathfrak{su}(X_1)$. 
  To prove that $\tilde{\mathfrak{l}}$ is a Lie subalgebra of
  $\mathfrak{su}(X_1)$ one only has to check that $[A_{\alpha,k},A_{\beta,j}] \in
  \tilde{\mathfrak{l}}$ for all $\alpha,\beta \in\{1,2,3\}$ and $j,k \in
  \mathbb{N}_0$. This follows easily, because the $A_{\alpha,k}$ are just
  products of powers of $X_1$ and the $\varsigma_\alpha$. But the latter are
  representatives of the Pauli operators. Hence
  \begin{equation} \label{eq:18}
    [A_{1,k},A_{2,\ell}] = 2 i A_{3,k+\ell+1},\, [A_{3,k},A_{1,\ell}] = 2 i
    A_{2,k+\ell},\, [A_{2,k}, A_{3,\ell}] = 2 i A_{1,k+\ell}, 
  \end{equation}
  All operators vanish in the case of $\mu=0$. Hence $\tilde{\mathfrak{l}}$ is a
  Lie algebra and Eq.~\eqref{eq:39} proves that $\mathfrak{l} \subset
  \tilde{\mathfrak{l}}$. 

  For proving $\tilde{\mathfrak{l}} = \mathfrak{l}$, one has to express the
  $A_{\alpha,k}$ for $\alpha\in\{1,2,3\}$ and $k \in \N_0$ 
  in terms of repeated commutators of the  $H_{\mathrm{JC},2}$ and $H_{\mathrm{JC},3}$. 
  By the commutation relations in Equation \eqref{eq:18} it is obvious  that
  $\tilde{\mathfrak{l}}$ is generated (as a Lie algebra) by $A_{\alpha,0}$ with
  $\alpha \in\{1,2,3\}$. Therefore, the statement follows from Eq.~\eqref{eq:39},
  which in turn shows that $A_{1,0}$ and $A_{3,0}$ are just $H_{\mathrm{JC},3}$ and
  $H_{\mathrm{JC},1}$, while $A_{2,0}$ can be derived from the commutator
  $[H_{\mathrm{JC},1},H_{\mathrm{JC},3}]$. 
\end{proof}

With this Lemma and the material developed in the last subsection, one can proceed to
determine the structure of the dynamical group generated by
$H_{\mathrm{JC},1}$ and $H_{\mathrm{JC},2}$. This is the content of
Thm.~\ref{thm:3}, which is restated (and proven) here as a proposition.

\begin{prop} \label{prop:7}
  The dynamical group generated by $H_{\mathrm{JC},1}$ and $H_{\mathrm{JC},2}$
  is equal 
  to $\mathcal{SU}(X)$. 
\end{prop}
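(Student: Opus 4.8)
The plan is to apply Corollary~\ref{kor:1}, which reduces the claim $\mathcal{G} = \mathcal{SU}(X_1)$ to the purely finite-dimensional statement that $\mathfrak{l}^{[K]} = \mathfrak{su}^{[K]}(X_1)$ for every $K \in \N$. By Lemma~\ref{lem:2} I already know that $\mathfrak{l}$ is spanned by the $iA_{\alpha,k}$ with $\alpha\in\{1,2,3\}$ and $k\in\N_0$, so the truncation $\mathfrak{l}^{[K]}$ is spanned by the operators $iA_{\alpha,k}^{[K]}$. Since each $A_{\alpha,k}$ is block-diagonal and acts on the block $\mathcal{H}^{(\mu)}$ (for $\mu\geq 1$) as $\mu^{k}\sqrt{\mu}\,\varsigma_\alpha^{(\mu)}$ (for $\alpha\in\{1,2\}$) or $\mu^{k}\varsigma_3^{(\mu)}$ (for $\alpha=3$), and vanishes on $\mathcal{H}^{(0)}$, the whole question becomes: do these block-diagonal Pauli-type operators, with polynomial-in-$\mu$ coefficients, span $\bigoplus_{\mu=1}^K \mathfrak{su}(\mathcal{H}^{(\mu)}) = \bigoplus_{\mu=1}^K \su(2)$?

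The key step, and the one I expect to be the main obstacle, is a Vandermonde-type independence argument across the blocks. Fix $\alpha$ and consider the family $\{A_{\alpha,k}^{[K]}\}_{k=0}^{K-1}$. Each lives in the $K$-dimensional space $\bigoplus_{\mu=1}^K \C\,\varsigma_\alpha^{(\mu)}$, and the coefficient of $\varsigma_\alpha^{(\mu)}$ in $A_{\alpha,k}^{[K]}$ is $\mu^{k}$ (up to the common nonzero factor $\sqrt{\mu}$ for $\alpha\in\{1,2\}$, which only rescales each block and does not affect spanning). The coefficient matrix $(\mu^{k})_{\mu,k}$ with $\mu\in\{1,\dots,K\}$, $k\in\{0,\dots,K-1\}$ is a Vandermonde matrix in the distinct nodes $1,2,\dots,K$, hence invertible. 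Therefore for each fixed $\alpha$ the $K$ operators $A_{\alpha,k}^{[K]}$, $k=0,\dots,K-1$, are linearly independent and span $\bigoplus_{\mu=1}^K \C\,\varsigma_\alpha^{(\mu)}$; in particular every individual $\varsigma_\alpha^{(\mu)}$ for $1\le\mu\le K$ is reached. Running this over $\alpha\in\{1,2,3\}$ produces all three Pauli directions in every block independently.

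Collecting these, the span of $\{iA_{\alpha,k}^{[K]} : \alpha\in\{1,2,3\},\, 0\le k\le K-1\}$ is exactly $\bigoplus_{\mu=1}^K \SP_{\R}\{i\varsigma_1^{(\mu)}, i\varsigma_2^{(\mu)}, i\varsigma_3^{(\mu)}\} = \bigoplus_{\mu=1}^K \su(\mathcal{H}^{(\mu)})$, which is precisely $\mathfrak{su}^{[K]}(X_1)$ (the $\mu=0$ block is one-dimensional and contributes nothing traceless, consistent with all $A_{\alpha,k}$ vanishing there). This proves the inclusion $\mathfrak{su}^{[K]}(X_1)\subset\mathfrak{l}^{[K]}$; the reverse inclusion is immediate since all $iA_{\alpha,k}\in\mathfrak{su}(X_1)$ by Lemma~\ref{lem:2}. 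Hence $\mathfrak{l}^{[K]}=\mathfrak{su}^{[K]}(X_1)$ for all $K$, and Corollary~\ref{kor:1} yields $\mathcal{G}=\mathcal{SU}(X_1)$.

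The one subtlety I would flag is the low-$\mu$ edge behaviour: each block $\mathcal{H}^{(\mu)}$ for $\mu\geq 1$ is two-dimensional, so $\su(\mathcal{H}^{(\mu)})$ is genuinely three-dimensional and the three Pauli directions are all needed, whereas $\mathcal{H}^{(0)}$ is one-dimensional and carries no traceless traceless content. Since the $A_{\alpha,k}$ all annihilate $\mathcal{H}^{(0)}$, the Vandermonde argument should be run over $\mu\in\{1,\dots,K\}$ only, and the matching is clean. I therefore expect the proof to be short once Lemma~\ref{lem:2} and Corollary~\ref{kor:1} are in hand, the real content being the Vandermonde nondegeneracy that decouples the blocks.
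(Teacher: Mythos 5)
Your proposal is correct and follows essentially the same route as the paper: reduction to $\mathfrak{l}^{[K]}=\mathfrak{su}^{[K]}(X_1)$ via Corollary~\ref{kor:1} and Lemma~\ref{lem:2}, followed by isolating each block $\mathcal{H}^{(\mu_0)}$ through polynomial freedom in $X_1$. The paper phrases this last step as Lagrange interpolation (choosing a polynomial $f$ with $f(\mu_0)=1$ and $f(\mu)=0$ for the other $\mu\leq K$, giving $B_{\alpha,f}=f(X)\sqrt{X}\varsigma_\alpha$), which is exactly the invertibility of your Vandermonde matrix, so the two arguments coincide; your handling of the $\sqrt{\mu}$ prefactor and of the trivial $\mu=0$ block also matches the paper.
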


\begin{proof}
  According to Prop.~\ref{prop:5} the dynamical group $\mathcal{G}$ is the strong
  closure of $\exp(H)$ with $H \in \mathfrak{l}$, i.e.\ the Lie algebra generated
  by $H_{\mathrm{JC},1}$ and $H_{\mathrm{JC},2}$, while $\mathcal{SU}(X)$ is the
  strong closure of $\exp(\mathfrak{su}(X))$. Hence, by Cor.~\ref{kor:1}
  we   have to show that the truncated algebras $\mathfrak{l}^{[K]}$ and
  $\mathfrak{su}^{[K]}(X)$ are identical. The inclusion $\mathfrak{l}^{[K]}
  \subset \mathfrak{su}^{[K]}(X)$ is trivial, since all the blocks
  $H_{\mathrm{JC},j}^{(\mu)}$ with $j\in\{1,2\}$ are traceless. To show the other
  inclusion, first note that 
  $\mathfrak{l}^{[0]} = \mathfrak{su}(X)^{[0]} = \{0\}$. Hence it is sufficient to
  check that for each fixed $0< \mu_0 \leq K$ and each $iH \in
  \mathfrak{su}^{[K]}(X)$ with $H^{(\mu)} = 0$ for $\mu \neq \mu_0$ there is
  an $i A \in \mathfrak{l}$ such that $i A^{(\mu_0)} = i H^{(\mu_0)}$ and
  $A^{(\mu)} = 0$ for all $0 < \mu \leq K$ with $\mu \neq \mu_0$.  The rest
  follows by linearity. 

  For constructing such an $A$, recall from Lemma \ref{lem:2} that $\mathfrak{l}$
  is spanned (as a vector space) by the $A_{\alpha,k}$ with $\alpha\in\{1,2,3\}$ and $k \in
  \N_0$. Now consider a polynomial $f$ in one real variable satisfying
  $f(\mu) = 0$ for all $0 < \mu \leq K$ with $\mu \neq \mu_0$ and $f(\mu_0) =
  1$. The operators 
  $B_{\alpha,f} = f(X) \sqrt{X} \varsigma_\alpha$ with $\alpha \in \{1,2\}$ and
  $B_{3,f} = f(X) \varsigma_3$
  are linear combinations of the $A_{\alpha,k}$, and they satisfy the
  condition 
  $B_{\alpha,f}^{(\mu)} = 0$ for all  $0 < \mu \leq K$ such that $\mu\neq\mu_0$ and $B_{\alpha,f}^{(\mu_0)} =
    c_\alpha \varsigma_\alpha^{(\mu_0)}$ 
 for a constant $c_\alpha$ given by $c_{1}=c_{2}=\sqrt{\mu_0}$ and $c_3 = 1$. But
 all traceless operators $H^{(\mu_0)} \in \mathcal{B}(\mathcal{H}^{(\mu_0)})$
 can be written as a linear combinations of the $\varsigma_\alpha^{(\mu_0)}$,
 which concludes the proof.  
\end{proof}

Before proceeding to the next subsection, consider the free
Hamiltonian of the cavity $H_{\mathrm{JC},3}$. We have omitted it from the
discussion of the dynamical group, and the reason can be seen easily from
(\ref{eq:18}): $H_{\mathrm{JC},2}$ differs from $\mathrm{H}_{\mathrm{JC},1}$
only by $X_1 + \Bbb{1}/2$ which commutes with all elements of
$\mathfrak{su}(X)$. Hence adding $H_{\mathrm{JC},3}$ as a control Hamiltonian
would just add a one-dimensional center to the dynamical group
$\mathcal{G}=\mathcal{SU}(X)$. For the same reason, $H_{\mathrm{JC},3}$ could
be easily added as a drift term. Any effect it may have can be undone by evolving
the system with $H_{\mathrm{JC},1}$, and the remaining relative phase between
sectors of different charge $\mu$ does not affect the
discussion of strong controllability in Sect.~\ref{sec:strong-contr}. 
Finally, let us remark that---due to the same reasons just discussed---we
could exchange $H_{\mathrm{JC},1}$ and $H_{\mathrm{JC},3}$ almost without
changes to the results of this subsection.

\subsection{Many atoms with individual control}

First, recall some notations from Sect.~\ref{sec:many-atoms-indiv}. The
Hilbert space is $\mathcal{H}_M = (\Bbb{C}^2)^{\otimes M} \otimes
\mathrm{L}^2(\Bbb{R})$ using the distinguished basis $\ket{\mu;\vec{b}}$ with $\vec{b} \in
\Bbb{Z}_2^M$ from Eq.~\eqref{eq:48}. The charge operator is $X_M = S_3
\otimes \unity + \unity \otimes N$, cf.\ Eq.~\eqref{eq:47}, with domain $D_M$
from Eq.~\eqref{eq:41}. In addition, let us introduce the re-ordered tensor
product (where $\ket{\mu,b_1,\dots,b_M} \in \mathcal{H}_M$ and $b\in\Bbb{Z}_2$)
\begin{equation}
  \ket{\mu,\vec{b}} \hat{\otimes}_k \ket{b} =
  \ket{\mu+b;b_1,\dots,b_{k-1},b,b_k,\dots,b_M} \in \mathcal{H}_{M+1}.
\end{equation}
The key result of this section is split into the following three lemmas,
which eventually will lead to a proof of Thm.~\ref{thm:8}. 

\begin{lem} \label{lem:5}
  The complexification $\mathfrak{su}_\Bbb{C}(\mathcal{H}_M^{(\mu)})$ of the
  real Lie algebra $\mathfrak{su}(\mathcal{H}_M^{(\mu)})$ is generated by 
  elements $\KB{\mu;\vec{b}}{\mu;\vec{c}}$ with $\vec{b},\vec{c} \in
  \Bbb{Z}_2^M$ satisfying $\vec{b} \neq \vec{c}$.
\end{lem}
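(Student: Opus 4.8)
The plan is to recognize the statement as the standard generation result for $\mathfrak{sl}$, dressed up in the notation of a single charge sector. Writing $V=\mathcal{H}_M^{(\mu)}$, the complexification $\mathfrak{su}_\Bbb{C}(V)$ is nothing but the Lie algebra $\mathfrak{sl}(V)$ of all traceless operators on the finite-dimensional space $V$, whose orthonormal basis is indexed by those $\vec b\in\Bbb{Z}_2^M$ with $|\vec b|\le\mu$. Abbreviate the matrix units by $E_{\vec b\vec c}=\KB{\mu;\vec b}{\mu;\vec c}$. First I would dispose of the easy inclusion: every off-diagonal generator $E_{\vec b\vec c}$ with $\vec b\neq\vec c$ is traceless, and since commutators and linear combinations of traceless operators remain traceless, the Lie algebra generated by the $E_{\vec b\vec c}$ is contained in $\mathfrak{sl}(V)$.

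For the reverse inclusion I would exploit the matrix-unit commutation relation
\[
  [E_{\vec b\vec c},E_{\vec d\vec e}]
  = \delta_{\vec c\vec d}\,E_{\vec b\vec e} - \delta_{\vec e\vec b}\,E_{\vec d\vec c}.
\]
Specializing $\vec d=\vec c$ and $\vec e=\vec b$ gives $[E_{\vec b\vec c},E_{\vec c\vec b}] = E_{\vec b\vec b}-E_{\vec c\vec c}$, so all diagonal differences $E_{\vec b\vec b}-E_{\vec c\vec c}$ lie in the generated algebra. These differences span the space of traceless diagonal operators (a Cartan subalgebra), of dimension $\dim V-1$. Together with the $\dim V(\dim V-1)$ off-diagonal matrix units themselves, one obtains a spanning set of $\mathfrak{sl}(V)$, whose dimension is exactly $(\dim V)^2-1$. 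Hence the generated Lie algebra is all of $\mathfrak{sl}(V)=\mathfrak{su}_\Bbb{C}(V)$, and the two inclusions together close the argument.

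This is genuinely a finite-dimensional fact about $\mathfrak{sl}(n,\Bbb{C})$ and carries no serious obstacle; the only point to keep in mind is that the index set is not $\{1,\dots,n\}$ but the constrained set $\{\vec b\in\Bbb{Z}_2^M:|\vec b|\le\mu\}$, yet the commutator bookkeeping above is insensitive to the labels, so nothing changes. The substantive work---showing that the physically available control commutators actually reproduce the individual matrix units $E_{\vec b\vec c}$---is what is deferred to the subsequent lemmas; Lemma~\ref{lem:5} supplies only the algebraic target they must hit.
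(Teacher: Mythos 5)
Your proof is correct and takes essentially the same route as the paper: identify $\mathfrak{su}_\Bbb{C}(\mathcal{H}_M^{(\mu)})$ with $\mathfrak{sl}(\mathcal{H}_M^{(\mu)})$, observe that the off-diagonal matrix units span the off-diagonal operators, and note that commutators recover the traceless diagonal part. The only difference is that you make explicit the computation $[E_{\vec b\vec c},E_{\vec c\vec b}]=E_{\vec b\vec b}-E_{\vec c\vec c}$ producing the Cartan subalgebra, which the paper leaves as the unproven assertion that the smallest Lie algebra containing the off-diagonal space is all of $\mathfrak{sl}$.
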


\begin{proof}
  $\mathfrak{su}_\Bbb{C}(\mathcal{H}_M^{(\mu)})$ is isomorphic to the Lie algebra
  $\mathfrak{sl}(\mathcal{H}_M^{(\mu)})$ of traceless operators on
  $\mathcal{H}_M^{(\mu)}$. The $\KB{\mu;\vec{b}}{\mu;\vec{c}}$ with $\vec{b}
  \neq \vec{c}$ span the vector space of all $A \in
  \mathcal{B}(\mathcal{H}_M^{(\mu)})$ satisfying $\langle \mu; \vec{b}\,|\, A
  \,|\, \mu; \vec{b}\rangle = 0$ for all $\vec{b} \in \Bbb{Z}_2^M$  i.e.\ all
  operators which are off-diagonal in the basis $\ket{\mu;\vec{b}}$. The
  smallest Lie algebra containing this space is
  $\mathfrak{sl}(\mathcal{H}_M^{(\mu)})$. 
\end{proof}

\begin{lem}\label{lem_prev}
  The Lie algebra $\mathfrak{su}_\Bbb{C}(\mathcal{H}_{M+1}^{(\mu)})$ is generated by the union of
  the subalgebras $\mathfrak{su}_\Bbb{C}(\mathcal{H}_M^{(\mu-b)} \hat{\otimes}_k
  \ket{b})$ with $b \in \Bbb{Z}_2$ and $k\in\{1,\dots,M\}$.
\end{lem}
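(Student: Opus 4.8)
The plan is to transfer the whole question to the level of matrix units and then to reassemble the full special linear algebra by commutators, in the same spirit as the proof of Lemma~\ref{lem:5}. Applying that lemma to the space $\mathcal{H}_{M+1}^{(\mu)}$ itself, $\mathfrak{su}_\Bbb{C}(\mathcal{H}_{M+1}^{(\mu)}) \cong \mathfrak{sl}(\mathcal{H}_{M+1}^{(\mu)})$ is already generated by the off-diagonal matrix units $\KB{\mu;\vec c}{\mu;\vec d}$ with $\vec c \neq \vec d$ in $\Bbb{Z}_2^{M+1}$ and $|\vec c|,|\vec d| \leq \mu$. Hence it suffices to prove that the Lie algebra $\mathfrak{g}$ generated by the listed subalgebras contains every such matrix unit. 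The opposite inclusion is automatic, since each $\mathfrak{su}_\Bbb{C}(\mathcal{H}_M^{(\mu-b)}\hat\otimes_k\ket{b})$ consists of traceless operators supported on a subspace of $\mathcal{H}_{M+1}^{(\mu)}$ and therefore lies in $\mathfrak{sl}(\mathcal{H}_{M+1}^{(\mu)})$.

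First I would make the generating subalgebras explicit. The subspace $\mathcal{H}_M^{(\mu-b)}\hat\otimes_k\ket{b}$ is exactly the span of the basis vectors $\ket{\mu;\vec c}$ whose $k$-th atom sits in state $b$, i.e.\ $c_k = b$. Consequently, any off-diagonal matrix unit $\KB{\mu;\vec c}{\mu;\vec d}$ with $c_k = d_k = b$ and $\vec c \neq \vec d$ is a traceless operator supported on this subspace and so belongs to $\mathfrak{su}_\Bbb{C}(\mathcal{H}_M^{(\mu-b)}\hat\otimes_k\ket{b})$. Letting $b$ and the atom index $k$ range over all admissible values, this shows at once that $\mathfrak{g}$ contains every matrix unit $\KB{\mu;\vec c}{\mu;\vec d}$ for which the configurations $\vec c$ and $\vec d$ agree in at least one coordinate.

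The remaining, and central, step is to reach the matrix units for which $\vec c$ and $\vec d$ differ in \emph{every} coordinate (these occur precisely when $\mu$ is large enough that both complementary strings are admissible). Here I would use the elementary identity $[\KB{\mu;\vec c}{\mu;\vec e},\KB{\mu;\vec e}{\mu;\vec d}] = \KB{\mu;\vec c}{\mu;\vec d}$, valid for $\vec c \neq \vec d$, choosing an intermediate configuration $\vec e$ that agrees with $\vec c$ in one coordinate and with $\vec d$ in another. Both factors are then of the ``agreeing'' type already shown to lie in $\mathfrak{g}$, so their commutator lies in $\mathfrak{g}$ as well. Once all off-diagonal matrix units are available, the traceless diagonal elements follow from commutators $[\KB{\mu;\vec c}{\mu;\vec d},\KB{\mu;\vec d}{\mu;\vec c}] = \kb{\mu;\vec c} - \kb{\mu;\vec d}$, which together span the remaining directions of $\mathfrak{sl}(\mathcal{H}_{M+1}^{(\mu)})$, completing the argument.

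The delicate point --- and the one I expect to absorb most of the work --- is the existence of a \emph{legitimate} intermediate $\vec e$: besides being distinct from $\vec c$ and $\vec d$, it must satisfy the charge constraint $|\vec e| \leq \mu$ so that $\ket{\mu;\vec e}$ genuinely lies in $\mathcal{H}_{M+1}^{(\mu)}$. I would secure this by matching $\vec c$ and $\vec d$ in two coordinates where they take the value $0$ whenever possible and setting the remaining entries of $\vec e$ to the ground state, thereby keeping $|\vec e|$ minimal; a short counting argument (using that $\vec c$ and $\vec d$ are then complementary, so $|\vec c|+|\vec d| = M+1$ with both bounded by $\mu$) confirms that the configurations requiring this construction leave exactly enough room for such a low-weight $\vec e$. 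Checking this bound, together with the small-$\mu$ boundary cases in which some basis vectors are absent and the two selected coordinates must be chosen with care, is the only place where the argument is not purely formal.
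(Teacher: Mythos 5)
Your proof is correct and follows essentially the same route as the paper's: reduce to off-diagonal matrix units via Lemma~\ref{lem:5}, note that units whose two configurations agree in some coordinate lie directly in one of the generating subalgebras, and obtain the remaining (complementary-configuration) units as commutators through an intermediate configuration. Your extra care about the admissibility constraint $|\vec{e}|\le\mu$ is well placed --- the paper's fixed choice $\vec{e}=(b_1,c_2,\dots,c_{M+1})$ can actually violate it when $b_1=1$ and $|\vec{c}|=\mu$, whereas matching on a coordinate where the value is $0$ (as you propose) always yields a legitimate low-weight intermediate.
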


\begin{proof}
  First of all, note that (by definition) $\ket{\mu{-}b;\vec{b}} \in
  \mathcal{H}_M^{(\mu-b)}$. Hence $\ket{\mu{-}b;\vec{b}} \hat{\otimes}_k \ket{b} \in
  \mathcal{H}_{M+1}^{(\mu)}$ which shows that all the Hilbert spaces
  $\mathcal{H}^{(\mu-b)} \hat{\otimes}_k \ket{b}$ are subspaces of
  $\mathcal{H}_M^{(\mu)}$. According to the previous lemma, we have to show
  that operators $A = \KB{\mu;\vec{b}}{\mu;\vec{c}}$ with $\vec{b}, \vec{c} \in
  \Bbb{Z}_2^{M+1}$ and $\vec{b} \neq \vec{c}$ can be written as commutators from
  operators in the $\mathfrak{su}_\Bbb{C}(\mathcal{H}_M^{(\mu+b)} \hat{\otimes}_k
  \ket{b})$. We have to distinguish two cases: 
  In the first case, there is at least one $k\in\{1,\dots,M\}$ with $b_k = c_k = b$. If this holds,
  $A$ can be written as 
 $
    \KB{\mu{-}b;b_1,\dots,b_{k-1},b_{k+1},\dots,b_{M+1}}{\mu{-}b;c_1,\dots,c_{k-1},c_{k+1},\dots,c_{M+1}} 
    \otimes \kb{b}
  \in\mathfrak{su}_\Bbb{C}(\mathcal{H}_M^{(\mu-b)} \hat{\otimes}_k \ket{b})$. The second
  case arises if $b_k \neq c_k$ for all $k$. Now consider the commutator
  of the operators $B =
  \KB{\mu;\vec{b}}{\mu;b_1,c_2,\dots,c_{M+1}}$ and $C =
  \KB{\mu;b_1,c_2,\dots,c_{M+1}}{\mu;\vec{c}}$ obviously $A = [B,C]$, $B
  \in \mathfrak{su}_\Bbb{C}(\mathcal{H}_M^{(\mu-b_1)} \hat{\otimes}_1 \ket{b_1})$, and $C
  \in \mathfrak{su}_\Bbb{C}(\mathcal{H}_M^{(\mu-c_k)} \hat{\otimes}_k \ket{c_k})$ for
  $k>1$. This concludes the proof.
\end{proof}

\begin{lem} \label{lem:6}
  The Lie algebra $\mathfrak{su}(\mathcal{H}_{M+1}^{(\mu)})$ is contained in
  the Lie algebra $\mathfrak{g}$ generated by
  $\mathfrak{su}(\mathcal{H}_M^{(\mu)}) \hat{\otimes}_k \Bbb{1}$ and
  $\mathfrak{su}(\mathcal{H}_M^{(\mu-1)}) \hat{\otimes}_k \Bbb{1}$. 
\end{lem}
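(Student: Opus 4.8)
The plan is to collapse the whole question to a single charge sector, settle it there with Lemma~\ref{lem_prev}, and then upgrade the resulting statement about projections to an honest inclusion using simplicity of $\mathfrak{su}(n)$.

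First I would record the block structure. Every generator of $\fg$ commutes with $X_{M+1}$ and is therefore block diagonal for the charge grading $\mathcal{H}_{M+1}=\bigoplus_\nu\mathcal{H}_{M+1}^{(\nu)}$. Inserting a ground atom at site $k$ leaves the charge unchanged while inserting an excited atom raises it by one; hence for $A\in\mathfrak{su}(\mathcal{H}_M^{(\nu)})$ the operator $A\hat{\otimes}_k\Bbb{1}$ has nonzero blocks exactly in sectors $\nu$ (the $k$-th atom in $\ket{0}$) and $\nu+1$ (the $k$-th atom in $\ket{1}$). Thus the first family lives in sectors $\{\mu,\mu+1\}$, the second in $\{\mu-1,\mu\}$, and
\[
  \fg \subseteq \mathfrak{su}(\mathcal{H}_{M+1}^{(\mu-1)})\oplus\mathfrak{su}(\mathcal{H}_{M+1}^{(\mu)})\oplus\mathfrak{su}(\mathcal{H}_{M+1}^{(\mu+1)}).
\]
The restriction $R\colon Q\mapsto Q^{(\mu)}$ to the middle block is a Lie-algebra homomorphism on this block-diagonal space, and on the generators it yields exactly $R\big(\mathfrak{su}(\mathcal{H}_M^{(\mu)})\hat{\otimes}_k\Bbb{1}\big)=\mathfrak{su}(\mathcal{H}_M^{(\mu)}\hat{\otimes}_k\ket{0})$ and $R\big(\mathfrak{su}(\mathcal{H}_M^{(\mu-1)})\hat{\otimes}_k\Bbb{1}\big)=\mathfrak{su}(\mathcal{H}_M^{(\mu-1)}\hat{\otimes}_k\ket{1})$, i.e.\ precisely the $b=0$ and $b=1$ generators of Lemma~\ref{lem_prev}.

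Next I would show $R(\fg)=\mathfrak{su}(\mathcal{H}_{M+1}^{(\mu)})$. Since $R$ is a homomorphism, $R(\fg)$ is the real Lie algebra generated by the images above. Complexification commutes with forming the generated Lie algebra, so $\big(R(\fg)\big)_{\mathbb{C}}$ is the complex Lie algebra generated by the $\mathfrak{su}_{\mathbb{C}}(\mathcal{H}_M^{(\mu-b)}\hat{\otimes}_k\ket{b})$, which by Lemma~\ref{lem_prev} is all of $\mathfrak{su}_{\mathbb{C}}(\mathcal{H}_{M+1}^{(\mu)})$. As $R(\fg)\subseteq\mathfrak{su}(\mathcal{H}_{M+1}^{(\mu)})$ and $\dim_{\mathbb{R}}\mathfrak{su}(n)=n^2-1=\dim_{\mathbb{C}}\mathfrak{su}_{\mathbb{C}}(n)$, comparing dimensions forces equality. (For $\mu=0$ the target is $\{0\}$ and there is nothing to prove, so assume $\mu\geq1$, where $n:=\dim\mathcal{H}_{M+1}^{(\mu)}\geq2$ and $\mathfrak{su}(n)$ is simple.)

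The main obstacle is the final step: passing from surjectivity of $R$ to the genuine inclusion $\mathfrak{su}(\mathcal{H}_{M+1}^{(\mu)})\subseteq\fg$, since a priori members of $\fg$ may carry unavoidable components in the neighbouring sectors $\mu\pm1$. To handle it I set $\mathfrak{k}=\fg\cap\mathfrak{su}(\mathcal{H}_{M+1}^{(\mu)})$, the elements of $\fg$ supported on sector $\mu$ alone. Any bracket of such an element with an arbitrary member of $\fg$ remains in sector $\mu$, so $\mathfrak{k}$ is an ideal of $\fg$; applying $R$ and using $R|_{\mathfrak{su}(\mathcal{H}_{M+1}^{(\mu)})}=\mathrm{id}$ shows $\mathfrak{k}=R(\mathfrak{k})$ is an ideal of $R(\fg)=\mathfrak{su}(\mathcal{H}_{M+1}^{(\mu)})$. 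By simplicity it is either $\{0\}$ or everything, so it remains only to exhibit one nonzero element. I would take $Y$ from the first family at a site $k$ and $Z$ from the second family at a distinct site $k'\neq k$: their supports $\{\mu,\mu+1\}$ and $\{\mu-1,\mu\}$ meet only in $\mu$, so $[Y,Z]$ lies purely in sector $\mu$ with block $[Y^{(\mu)},Z^{(\mu)}]$, and since $Y^{(\mu)}$ acts only on states whose $k$-th atom is $\ket{0}$ while $Z^{(\mu)}$ acts only on states whose $k'$-th atom is $\ket{1}$ — overlapping but non-identical subspaces when $k\neq k'$ — these blocks can be chosen not to commute, giving $[Y,Z]\neq0$. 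Hence $\mathfrak{k}=\mathfrak{su}(\mathcal{H}_{M+1}^{(\mu)})\subseteq\fg$, as claimed; two distinct insertion sites are indeed available since we are adjoining an atom to a system of $M\geq1$ atoms.
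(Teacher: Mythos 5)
Your overall strategy is genuinely different from the paper's and, where it applies, more structural. The paper works inside the complexification and isolates the two summands of a generator, $a\hat{\otimes}_k\kb{0}$ and $a\hat{\otimes}_k\kb{1}$, one at a time by commuting with explicitly chosen diagonal operators $B$ from the generating families, and only then invokes Lemma~\ref{lem_prev}. You instead apply the restriction homomorphism $R$ onto the middle charge block, obtain $R(\fg)=\mathfrak{su}(\mathcal{H}_{M+1}^{(\mu)})$ directly from Lemma~\ref{lem_prev}, and reduce everything to producing a single nonzero element of the ideal $\mathfrak{k}=\fg\cap\mathfrak{su}(\mathcal{H}_{M+1}^{(\mu)})$, using simplicity. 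Steps 1--3 are correct (one should add that $R$ is injective on operators supported in sector $\mu$ alone, so that $\mathfrak{k}$ may be identified with $R(\mathfrak{k})$), and for $\mu\geq 2$ your witness $[Y,Z]$ with $Y$, $Z$ taken from the two families at distinct insertion sites is indeed nonzero and supported purely in sector $\mu$, so the argument closes.

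The gap is the case $\mu=1$. There $\mathcal{H}_M^{(0)}$ is one-dimensional, hence $\mathfrak{su}(\mathcal{H}_M^{(0)})\hat{\otimes}_k\Bbb{1}=\{0\}$: the second family contributes nothing, no nonzero $Z$ exists, and your dichotomy ``$\mathfrak{k}=\{0\}$ or everything'' is left unresolved. This is not a repairable technicality within the stated generality: for $M=1$, $\mu=1$ all six generators $\bigl(A\hat{\otimes}_k\kb{0},A\hat{\otimes}_k\kb{1}\bigr)$, $k\in\{1,2\}$, lie on the graph of a single automorphism of $\mathfrak{su}(3)$ (in suitable bases of the $\mu=1$ block and of the relevant three-dimensional subspace of the $\mu=2$ block it is $Z\mapsto -DZ^{T}D$ with $D=\mathrm{diag}(1,-1,-1)$), so $\fg$ is contained in that graph, $\mathfrak{k}=\{0\}$, and the asserted inclusion actually fails. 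You should therefore either assume $\dim\mathcal{H}_M^{(\mu-1)}\geq 2$ (equivalently $\mu\geq 2$) or flag $\mu=1$ as an exception. To be fair, the paper's own proof has the same blind spot: its operator $B=(\kb{\mu{-}c_1;\vec{c}_0}-\kb{\mu{-}c_1;\vec{c}_1})\hat{\otimes}_1\Bbb{1}$ involves states that do not exist once $\mu-c_1=0$, so the case $\mu=1$ is not covered there either. Under the additional hypothesis $\mu\geq2$ your proof is complete and arguably cleaner than the published one.
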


\begin{proof}
  First of all note that it is sufficient to prove the statement for the
  corresponding complexified Lie algebras
  $\mathfrak{su}_\Bbb{C}(\mathcal{H}_{M+1}^{(\mu)}) =
  \mathfrak{su}(\mathcal{H}_{M+1}^{(\mu)}) \oplus i (\mathcal{H}_{M+1}^{(\mu)})$
  and $\mathfrak{g}_\Bbb{C} = \mathfrak{g} \oplus i \mathfrak{g}$, since we
  get the original statement back by restricting the inclusion
  $\mathfrak{su}_\Bbb{C}(\mathcal{H}_{M+1}^{(\mu)}) \subset
  \mathfrak{g}_\Bbb{C}$ to anti-selfadjoint elements on both sides.  

  The elements of $\mathfrak{su}_\Bbb{C}(\mathcal{H}^{(\mu)}) \hat{\otimes}_k \Bbb{1}$
  are of the form $A = a \hat{\otimes}_k \kb{0} + a \hat{\otimes}_k \kb{1}$ with
  $a \in \mathfrak{su}_\Bbb{C}(\mathcal{H}^{(\mu)})$. We will show that both summands
  are elements of $\mathfrak{g}_\Bbb{C}$, i.e.\ $a \hat{\otimes}_k \kb{b} \in
  \mathfrak{g}_\Bbb{C}$ for $b\in\{0,1\}$. The same holds for $\mu{-}1$. The statement then
  follows from Lemma~\ref{lem_prev}.

  Use again Lemma~\ref{lem:5} and choose $a = \KB{\mu;\vec{b}}{\mu;\vec{c}}$
  with  $\vec{b}, \vec{c} \in \Bbb{Z}_2^M$ and $\vec{b} \neq \vec{c}$. 
  We rewrite $A = a \hat{\otimes}_k \kb{0} + a \hat{\otimes}_k
  \kb{1}$ as 
  \begin{align}
  &\KB{\mu;(b_1,\dots,b_k,0,b_{k+1},\dots,b_M)}{\mu;(c_1,\dots,c_k,0,c_{k+1},\dots,c_M)}\nonumber \\ + &
  \KB{\mu+1;(b_1,\dots,b_k,1,b_{k+1},\dots,b_M)}{\mu+1;(c_1,\dots,c_k,1,c_{k+1},\dots,c_M)}.
  \end{align} 
  Moreover,  
    $\vec{b}_0:= (b_2,\dots,b_k,0,b_{k+1},\dots,b_M)$, $\vec{b}_1:=
    (b_2,\dots,b_k,1,b_{k+1},\dots,b_M)$, 
    $\vec{c}_0:= (c_2,\dots,c_k,0,c_{k+1},\dots,c_M)$, and $\vec{c}_1:=
    (c_2,\dots,c_k,1,c_{k+1},\dots,c_M)$
  allows us to simplify 
  \begin{equation} \label{eq:46}
     A=(\KB{\mu{-}b_1;\vec{b}_0}{\mu{-}c_1;\vec{c}_0}
     + \KB{\mu{-}b_1{+}1;\vec{b}_1}{\mu{-}c_1{+}1;\vec{c}_1}) \hat{\otimes}_1
     \KB{b_1}{c_1}.
  \end{equation}
  Next, consider a second operator $B = (\kb{\mu{-}c_1;\vec{c}_0}
  - \kb{\mu{-}c_1;\vec{c}_1}) \hat{\otimes}_1 \Bbb{1}$ and assume that $M > 1$
  holds. Then there is a $\ell\in\{1,\dots,M\}$ with $b_\ell \neq c_\ell$. Without loss of
  generality one can assume that $\ell\neq 1$ (otherwise rewrite $A$ in
  (\ref{eq:46}) as $\tilde{A} \hat{\otimes}_j \KB{b_j}{c_j}$ with another
  index $j$). The commutator now equals 
     $[A,B] = \KB{\mu-b_1;\vec{b}_0}{\mu-c_1;\vec{c}_0} \hat{\otimes}_1
     \KB{b_1}{c_1} = a \hat{\otimes}_k \kb{0}$. 
  If $M=1$ one has two possible cases: either $b=0$ and $c=1$ or $b=1$ and $c=1$. In the first
  case choose $B=(\kb{\mu{-}c;0} - \kb{\mu{-}c;1}) \otimes \Bbb{1}$, and in the
  second case pick $B = (\kb{\mu{-}b;0} - \kb{\mu{-}b;1}) \otimes \Bbb{1}$. Then the commutator
  $[A,B]$ leads again to $\pm \KB{\mu{-}b;0}{\mu{-}c;0} \otimes
  \KB{b}{c}$. 

  Therefore, one can conclude that $\mathfrak{su}_\Bbb{C}(\mathcal{H}_M^{(\mu)}
  \hat{\otimes}_k \ket{0}) \subset \mathfrak{g}_\Bbb{C}$ for all $k$. The same
  reasoning holds for  $\mathfrak{su}_\Bbb{C}(\mathcal{H}_M^{(\mu-1)}
  \hat{\otimes}_k \ket{1})$. Hence the statement follows from the previous
  lemma. 
\end{proof}

Now let us consider the control Hamiltonians $H_{\mathrm{IC},j},
H_{\mathrm{IC},M+j}$ from Equation (\ref{eq:44}). We will use  Lemma
\ref{lem:6} and an induction in $M$ to prove Thm. \ref{thm:8}, which we restate
here as a proposition.

\begin{prop} \label{prop:8}
  The dynamical group generated by the control Hamiltonians
  $H_{\mathrm{IC},j}$ with $j\in\{1,\dots,2M\}$ is identical to $\mathcal{SU}(X_M)$.
\end{prop}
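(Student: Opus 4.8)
The plan is to deduce everything from Corollary~\ref{kor:1}: it suffices to prove that the truncated system algebra $\mathfrak{l}^{[K]}$ generated by the blocks $iH_{\mathrm{IC},j}^{[K]}$ equals $\mathfrak{su}^{[K]}(X_M)=\bigoplus_{\mu=0}^{K}\mathfrak{su}(\mathcal{H}_M^{(\mu)})$ for every $K\in\N$, since this forces $\mathcal{G}=\mathcal{SU}(X_M)$. All the $H_{\mathrm{IC},j}$ commute with $X_M$, hence are block diagonal, so the inclusion $\mathfrak{l}^{[K]}\subset\mathfrak{su}^{[K]}(X_M)$ is automatic from the tracelessness of each $H_{\mathrm{IC},j}^{(\mu)}$. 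The real work is the reverse inclusion, which—because $\mathfrak{su}^{[K]}(X_M)$ is a direct sum of blocks—amounts to showing that for each $\mu\le K$ the full block algebra $\mathfrak{su}(\mathcal{H}_M^{(\mu)})$ sits inside $\mathfrak{l}^{[K]}$, realised by an operator of a single charge.

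I would carry this out by induction on the number $M$ of atoms, taking the truncated-algebra equality $\mathfrak{l}^{[K]}=\mathfrak{su}^{[K]}(X_M)$ (for all $K$) as the induction hypothesis. The base case $M=1$ is precisely Proposition~\ref{prop:7}, since for one atom $H_{\mathrm{IC},1}$ and $H_{\mathrm{IC},2}$ agree up to scalars with $H_{\mathrm{JC},1}$ and $H_{\mathrm{JC},2}$. For the step $M\to M+1$ I would single out an insertion position $k$ and observe that dropping atom $k$ leaves exactly the $2M$ generators of an $M$-atom system on atoms $\{1,\dots,M+1\}\setminus\{k\}$ with atom $k$ a spectator; as operators on $\mathcal{H}_{M+1}$ these are the $M$-atom generators $\hat{\otimes}_k\unity$. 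By the induction hypothesis together with the block separation built into $\mathfrak{su}^{[K]}(X_M)=\bigoplus_{\nu}\mathfrak{su}(\mathcal{H}_M^{(\nu)})$, the algebra $\mathfrak{l}$ for $M+1$ atoms then contains, for every $\nu$ and every prescribed $a\in\mathfrak{su}(\mathcal{H}_M^{(\nu)})$, an operator whose charge-$\nu$ block is $a\hat{\otimes}_k\kb{0}$ and whose charge-$(\nu{+}1)$ block is $a\hat{\otimes}_k\kb{1}$. These are exactly the building blocks $\mathfrak{su}(\mathcal{H}_M^{(\mu)})\hat{\otimes}_k\unity$ and $\mathfrak{su}(\mathcal{H}_M^{(\mu-1)})\hat{\otimes}_k\unity$ appearing in Lemma~\ref{lem:6}. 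Applying that lemma—whose proof, via Lemmas~\ref{lem:5} and \ref{lem_prev}, extracts the individual single-charge pieces $a\hat{\otimes}_k\kb{b}$ by commuting with diagonal traceless operators $B$ that are themselves available from the same $M$-atom subsystem—yields $\mathfrak{su}(\mathcal{H}_{M+1}^{(\mu)})\subset\mathfrak{l}$ for every $\mu$, realised by single-charge operators. This gives $\mathfrak{su}^{[K]}(X_{M+1})\subset\mathfrak{l}^{[K]}$ and closes the induction.

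The step I expect to be most delicate is the charge bookkeeping that guarantees genuine block separation at level $M+1$: the embedded generators $a\hat{\otimes}_k\unity$ necessarily occupy two adjacent charge sectors $\nu$ and $\nu+1$, so I must ensure that the commutator manipulations of Lemma~\ref{lem:6} really isolate a single charge block and never contaminate blocks below $K$. This is where it matters that every operator involved lies in $\mathfrak{u}(X_{M+1})$ and is thus block diagonal, so that commutators preserve charges and the truncation $[K]$ behaves cleanly—playing here the role that multiplication by polynomials in $X$ played in the one-atom proof. I would also need to dispose of the low-$\mu$ boundary cases where $\dim\mathcal{H}_{M+1}^{(\mu)}\le 1$ and $\mathfrak{su}(\mathcal{H}_{M+1}^{(\mu)})$ is trivial, and to verify that realising the charge-$\mu$ block of $M{+}1$ atoms only ever calls on $M$-atom blocks at charges $\le K$, so that invoking the induction hypothesis at level $K$ is sufficient.
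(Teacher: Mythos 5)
Your proposal is correct and follows essentially the same route as the paper's own proof: reduction via Corollary~\ref{kor:1} to the truncated algebras, induction on $M$ with Proposition~\ref{prop:7} as the base case, and the inductive step carried out by viewing the generators omitting one atom as an $M$-atom system tensored into position $k$, then invoking Lemma~\ref{lem:6}. The charge-bookkeeping concerns you flag (adjacent sectors $\nu$, $\nu{+}1$ and the shift from truncation level $K{+}1$ back to $K$) are exactly the points the paper's proof also handles.
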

 
\begin{proof}
  According to Corollary \ref{kor:1} we have to show that for each $K$, we find that
  $\mathfrak{l}_M^{[K]} = \mathfrak{su}^{[K]}(X_M)$, where
  $\mathfrak{l}_M$ denotes the Lie algebra generated by the
  $H_{\mathrm{IC},j}$ with $j\in\{1,\dots,2M\}$. Since $\mathfrak{l}_M \subset
  \mathfrak{su}(X_M)$ is trivial, only the other inclusion has to be
  shown. This will be done by induction. By Prop.~\ref{prop:7} the statement
  is true for $M=1$. Now we assume it is true for $M$ to show that it is
  true for $M{+}1$, too. To this end, consider for
  each $k\in\{1,\dots,M{+}1\}$  the Hamiltonians $H_{\mathrm{IC},j}$,
  $H_{\mathrm{IC},M+1+j}$ with $j\in\{1,\dots,M{+}1\}$ and $j\neq k$. They can be
  regarded as operators on the Hilbert space $\mathcal{H}_M$ and they generate
  a Lie algebra $\mathfrak{l}_M$ which satisfies by assumption
  \begin{equation} \label{eq:49}
    \mathfrak{l}_M^{[K]} = \mathfrak{su}^{[K]}(X_M) = \bigoplus_{\mu=1}^K
    \mathfrak{su}(\mathcal{H}_M^{(\mu)}) 
  \end{equation}
  for all $K$. As operators on $\mathcal{H}_{M+1}$, they generate the
  Lie algebra $\mathfrak{l}_M \hat{\otimes}_k \Bbb{1} \subset \mathfrak{l}_{M+1}$
  and according to (\ref{eq:49}) one finds that
    $\mathfrak{su}(\mathcal{H}_M^{(\mu)}) \hat{\otimes}_k \Bbb{1} \subset
    \mathfrak{l}_{M+1}^{[K+1]}$ holds for all $\mu \leq K$
  and
  $k\in\{1,\dots,M{+}1\}$. Thus, we can apply Lemma~\ref{lem:6} 
  and $\mathfrak{su}(\mathcal{H}_{M+1}^{(\mu)})$ is contained 
  in the Lie algebra $\mathfrak{l}_{M+1}^{[K+1]}$ for all $\mu \leq K$. But since
  $\mathfrak{l}_{M+1}^{(K)} \subset \mathfrak{su}^{[K]}(X_{M+1}) =
  \mathfrak{su}(\mathcal{H}_{M+1}^{(K)})$, one even gets
  $\mathfrak{su}^{[K]}(X_{M+1}) \subset \mathfrak{l}_{M+1}^{[K]}$, just as was
  to be shown.
\end{proof}

\subsection{Many atoms under collective control}
\label{sec:many-atoms-coll-1}

As a last topic in this section, we provide proofs for Thms.~\ref{thm:4} and 
\ref{thm:5}. To this end, recall the notation from
Sect.~\ref{sec:many-atoms-coll}. The Hilbert space is $\mathcal{H}_{\sym} =
\Bbb{C}^{M+1} \otimes \mathrm{L}^2(\Bbb{R})$ with basis
\begin{equation} \label{eq:23}
  \ket{\mu;\nu} = \ket{\nu} \otimes \ket{\mu {-} \nu}\,\text{ where }\, \nu \in\{0, \dots,
  d_\mu\}\, \text{ and }\, d_\mu= \min(\mu,M).  
\end{equation}
The charge operator is again $X_M = S_3 \otimes \unity + \unity \otimes N$
from Eq.~\eqref{eq:47} but now as an operator on $\mathcal{H}_{\sym}$ with
domain $D_{\sym}$ defined in (\ref{eq:51}) and the $\mu$-eigenspaces
$\mathcal{H}_{\sym}^{(\mu)}$ become $\mathcal{H}_{\sym}^{(\mu)} = \SP
\{\ket{\mu;\nu}\, | \, \nu \in \{0,\dots,d_\mu\}\}$; cf.\ Eq.~\eqref{eq:52}. The
control Hamiltonians are $H_{\mathrm{TC},j}$ with $j\in\{1,\dots,3\}$ defined in
\eqref{eq:31} and \eqref{eq:33}. In addition let us introduce the  operators $Y_3, Y_\pm \in
\mathfrak{su}_\Bbb{C}(X_M)$ (which denotes again the complexification of
$\mathfrak{su}(X_M)$) given by  
\begin{equation} \label{eq:43}
   Y_3 \ket{\mu;\nu} = \nu \ket{\mu;\nu}, \, Y_+^{(\mu)} = \sum_{\nu=0}^{d_\mu-1}
  \KB{\mu;\nu{+}1}{\mu;\nu}, \,  Y_-^{(\mu)} = \sum_{\nu=1}^{d_\mu}
  \KB{\mu;\nu{-}1}{\mu;\nu}.    
\end{equation}
They are related to the $H_{\mathrm{TC},j}$ by
\begin{align} 
  H_{\mathrm{TC},1} &= Y_3 - (M/2)\, \Bbb{1},\, H_{\mathrm{TC},3} = X_M
  - Y_3,\nonumber \\ 
  H_{\mathrm{TC},+} &= S_+ \otimes a = f(X_M,Y_3) Y_+,\, H_{\mathrm{TC},-}=S_- \otimes a^* = Y_-
  f(X_M,Y_3)  \label{eq:53}
\end{align}
where $f$ is a function in two variables $x,y$ given by
\begin{equation} \label{eq:59}
  f(x,y) = h_1(x,y) h_2(y) \sqrt{y},\, h_1(x,y) = \sqrt{x+1-y},\, h_2(y)
  = \sqrt{M+1-y},
\end{equation}
and $f(X_M,Y_3)$ has to be understood in the sense of functional caculus (both
operators commute). As operators on $\mathcal{H}_{\sym}^{(\mu)}$ for fixed
$\mu$, the $Y_\pm$ satisfy
\begin{equation} \label{eq:54}
  Y_+Y_- = \Bbb{1} - \kb{\mu,0},\, Y_-Y_+ = \Bbb{1} - \kb{\mu,d_\mu}
\end{equation} 
and for any function $g(y)$ which is continuous on the spectrum of $Y_3$, one finds
\begin{equation} \label{eq:55}
  Y_+ g(Y_3) = g(Y_3-\Bbb{1}) Y_+,\, Y_- g(Y_3) = g(Y_3+\Bbb{1}) Y_-. 
\end{equation}
We are now prepared for the first lemma. 

\begin{lem}\label{lem:4.13}
  The operators $H_{\mathrm{TC},1}$, $H_{\mathrm{TC},+} = S_+ \otimes a$,
  and $H_{\mathrm{TC},-} = S_- \otimes a^*$ satisfy the following commutation
  relations (as operators on $\mathcal{H}^{(\mu)}$)
   (i) $[Y_3^{n-1} H_{\mathrm{TC},+},H_{\mathrm{TC},-}] =  (X_M - Y_3) Y_3^n + (N 
    \Bbb{1} - Y_3)Y_3^n - (X_M - Y_3)(N \Bbb{1} - Y_3) \sum_{k=0}^{n-1}  \binom{n}{k} Y_3^k$ 
  and
   (ii) $[Y_3^{n+1},H_{\mathrm{TC},+}] = \sum_{k=0}^n \binom{n}{k} (-1)^{n-k} Y_3^k   H_{\mathrm{TC},+}$.
\end{lem}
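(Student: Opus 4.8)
The plan is to reduce both identities to purely diagonal computations on each fixed charge block $\mathcal{H}_{\sym}^{(\mu)}$, on which $X_M$ acts as the scalar $\mu$, driving everything with the intertwining relations of Eq.~\eqref{eq:55} together with Eq.~\eqref{eq:54}. Two structural observations do the real work. First, $f(X_M,Y_3)$ is a function of the two commuting diagonal operators $X_M$ and $Y_3$, so it commutes with every polynomial in $Y_3$; this lets me shuttle scalar factors freely in and out. Second, the rank-one boundary terms in Eq.~\eqref{eq:54} are annihilated by the accompanying $f$: since $f$ carries a factor $\sqrt{Y_3}$ it vanishes at $\nu=0$, so $f\,\kb{\mu,0}=0$; and on the block $\mu$ the shifted symbol $f(X_M,Y_3+\Bbb{1})^2=(X_M-Y_3)(M-Y_3)(Y_3+\Bbb{1})$ vanishes at $\nu=d_\mu=\min(\mu,M)$, because there either $X_M-Y_3=0$ or $M-Y_3=0$. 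Hence I may discard $\kb{\mu,0}$ and $\kb{\mu,d_\mu}$ whenever they are flanked by the appropriate $f$.

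For (ii) I would first pull the scalar out: since $Y_3^{n+1}$ commutes with $f(X_M,Y_3)$ and $H_{\mathrm{TC},+}=f(X_M,Y_3)Y_+$, one has $[Y_3^{n+1},H_{\mathrm{TC},+}]=f(X_M,Y_3)\,[Y_3^{n+1},Y_+]$. Applying Eq.~\eqref{eq:55} in the form $Y_+\,Y_3^{n+1}=(Y_3-\Bbb{1})^{n+1}Y_+$ gives $[Y_3^{n+1},Y_+]=\bigl(Y_3^{n+1}-(Y_3-\Bbb{1})^{n+1}\bigr)Y_+$. Commuting the resulting polynomial in $Y_3$ back past $f$ and recognising $f(X_M,Y_3)Y_+=H_{\mathrm{TC},+}$ reduces the claim to the single binomial expansion of $(Y_3-\Bbb{1})^{n+1}$, whose top-degree term cancels $Y_3^{n+1}$ and leaves a degree-$n$ polynomial in $Y_3$ multiplying $H_{\mathrm{TC},+}$, as claimed.

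For (i) I would expand $[Y_3^{n-1}H_{\mathrm{TC},+},H_{\mathrm{TC},-}]=Y_3^{n-1}H_{\mathrm{TC},+}H_{\mathrm{TC},-}-H_{\mathrm{TC},-}\,Y_3^{n-1}H_{\mathrm{TC},+}$ and evaluate each product with the same tools. The first product collapses via $H_{\mathrm{TC},+}H_{\mathrm{TC},-}=f(X_M,Y_3)\,Y_+Y_-\,f(X_M,Y_3)=f(X_M,Y_3)^2$, the $\kb{\mu,0}$ term dropping by the vanishing of $f$ at $\nu=0$. For the second product I move $Y_3^{n-1}$ and both copies of $f$ to the left through $Y_-$ using $Y_-\,g(Y_3)=g(Y_3+\Bbb{1})Y_-$, then apply $Y_-Y_+=\Bbb{1}-\kb{\mu,d_\mu}$, the boundary term now killed by $f(X_M,Y_3+\Bbb{1})^2$. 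This leaves the closed form
\[
 [Y_3^{n-1}H_{\mathrm{TC},+},H_{\mathrm{TC},-}]
 = Y_3^{n}(X_M+\Bbb{1}-Y_3)(M{+}1-Y_3)-(Y_3+\Bbb{1})^{n}(X_M-Y_3)(M-Y_3),
\]
where I have substituted $f(X_M,Y_3)^2=(X_M+\Bbb{1}-Y_3)(M{+}1-Y_3)Y_3$. Using the binomial identity $(Y_3+\Bbb{1})^{n}=Y_3^{n}+\sum_{k=0}^{n-1}\binom{n}{k}Y_3^k$ and the photon-number relation $\Bbb{1}\otimes N=X_M-Y_3$ (valid on the symmetric blocks), expanding and collecting regroups this difference into the three advertised blocks.

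The genuinely delicate points are the two boundary cancellations: one must verify carefully that $\kb{\mu,0}$ and $\kb{\mu,d_\mu}$ are indeed removed by the accompanying $f$ on \emph{every} block, including the degenerate small-$\mu$ blocks where $d_\mu=\mu<M$ and the two factors of $f(X_M,Y_3+\Bbb{1})^2$ vanish for different reasons. Everything after that is routine bookkeeping with the binomial theorem and the intertwining relations, carried out blockwise and reassembled over $\mu$; since each $\mathcal{H}_{\sym}^{(\mu)}$ is finite-dimensional there are no domain subtleties, and the identities hold as stated on the common core $D_{\sym}$ of finite-particle vectors.
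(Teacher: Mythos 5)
Your proposal is correct and follows essentially the same route as the paper's own proof: rewriting $H_{\mathrm{TC},\pm}$ as $f(X_M,Y_3)Y_+$ and $Y_-f(X_M,Y_3)$, using Eqs.~\eqref{eq:54}--\eqref{eq:55} to reduce everything to diagonal functions of $Y_3$, killing the boundary projections $\kb{\mu,0}$ and $\kb{\mu,d_\mu}$ via the vanishing of $f$ and of $f^2(X_M,Y_3+\Bbb{1})$, and finishing with a binomial expansion. Your intermediate closed form $Y_3^{n}(X_M+\Bbb{1}-Y_3)(M{+}1-Y_3)-(Y_3+\Bbb{1})^{n}(X_M-Y_3)(M-Y_3)$ is exactly the paper's expression $Y_3^{n-1}f^2(X_M,Y_3)-f^2(X_M,Y_3+\Bbb{1})(Y_3+\Bbb{1})^{n-1}$ after inserting the definition of $f$.
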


\begin{proof}
  Using Eq.~\eqref{eq:53} to re-express $H_{\mathrm{TC},\pm}$ in terms of
  $Y_\pm$, $Y_3$ and $X_N$, we get for the first commutator
  \begin{equation} \label{eq:56}
    [Y_3^{n-1} H_{\mathrm{TC},+},H_{\mathrm{TC},-}] = Y_3^{n-1}
    f(X_M,Y_3)Y_+Y_f(X_M,Y_3) - Y_- f^2(X_M,Y_3)Y_3^{n-1}Y_+.
  \end{equation}
  It is easy to check that $f(X_M,Y_3) \ket{\mu;0} = 0$ holds. Together with
  (\ref{eq:54}) this leads to 
  \begin{equation} \label{eq:57}
    Y_3^{n-1}f(X_M,Y_3) Y_+Y_- = Y_3^{n-1}f(X_M,Y_3).
  \end{equation}
  With (\ref{eq:55}) we get on the other hand $Y_-f^2(X_M,Y_3)Y_+ =
  f^2(X_M,Y_3+\Bbb{1})(Y_3+\Bbb{1})^{n-1} Y_-Y_+$. Now observe that
  $h_1^2(X_M,Y_3+\Bbb{1}) h_2^2(Y_3+\Bbb{1}) \ket{\mu;d_\mu} = 0$ and use again
  (\ref{eq:54}) to get 
  \begin{equation} \label{eq:58}
    Y_-f^2(X_M,Y_3)Y_+ = f^2(X_M,Y_3+\Bbb{1})(Y_3+\Bbb{1})^{(n-1)}.
  \end{equation}
  Inserting (\ref{eq:57}) and (\ref{eq:58}) into (\ref{eq:56}) leads to
    $[Y_3^{n-1} H_{\mathrm{TC},+},H_{\mathrm{TC},-}] = Y_3^{n-1}
    f^2(X_M,Y_3)-f^2(X_M,Y_3+\Bbb{1}) (Y_3+\Bbb{1})^{n-1}$,
  where we have used the fact that $f(X_M,Y_3)$ and $Y_3$ commute. Inserting
  the definition of $f$ in (\ref{eq:59}) and expanding $(Y_3+\Bbb{1})^{n-1}$
  leads to the first commutator. 
  The second commutator follows similarly from 
    $[Y_3^{n+1},H_{\mathrm{TC},+}] = Y_3^{n+1} f(X_M,Y_3) Y_+ -
    f(X_M,Y_3)Y_+Y_3^{n+1}$ 
  and applying (\ref{eq:55}) to commute $Y_+$ to the right.
\end{proof}

We are now ready to prove Thm.~\ref{thm:4}. The statement about the dynamical
group $\mathcal{G}$ as a subgroup of $\mathcal{U}(X_M)$ is an easy consequence
of the discussion in Sect.~\ref{sec:commuting-operators}. The second
statement in Thm.~\ref{thm:4} is rephrased in the following Proposition.

\begin{prop}
  Consider the Lie algebra $\mathfrak{l}_{\mathrm{TC}} \subset
  \mathfrak{u}(X_M)$ generated by the $H_{\mathrm{TC},j}$ with $j\in\{1,\dots,3\}$ and
  $\mu \in \N$. The restriction
  $\mathfrak{l}_{\mathrm{TC}}^{(\mu)}$ of $\mathfrak{l}_{\mathrm{TC}}$ to
  $\mathcal{H}_{\sym}^{(\mu)}$ coincides with the Lie algebra
  $\mathfrak{u}(\mathcal{H}_{\sym}^{(\mu)})$ of anti-hermitian operators on
  $\mathcal{H}_{\sym}^{(\mu)}$.   
\end{prop}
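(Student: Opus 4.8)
The plan is to collapse the infinite-dimensional claim to a single finite block and then build up the full matrix algebra by ladder/commutator bookkeeping. Since every element of $\mathfrak{u}(X_M)$ is block diagonal for $\mathcal{H}_{\sym}=\bigoplus_\mu \mathcal{H}_{\sym}^{(\mu)}$, the restriction $A\mapsto A^{(\mu)}$ is a homomorphism of Lie algebras, so $\mathfrak{l}_{\mathrm{TC}}^{(\mu)}$ is exactly the Lie algebra generated inside $\mathcal{B}(\mathcal{H}_{\sym}^{(\mu)})$ by the three restricted generators $H_{\mathrm{TC},1}^{(\mu)}=Y_3-(M/2)\Bbb1$, $H_{\mathrm{TC},3}^{(\mu)}=\mu\Bbb1-Y_3$ (both diagonal and affine in $Y_3$), and $H_{\mathrm{TC},2}^{(\mu)}=H_{\mathrm{TC},+}+H_{\mathrm{TC},-}$ (a nearest-neighbour coupling on the ladder $\ket{\mu;0},\dots,\ket{\mu;d_\mu}$). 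It suffices to show this equals $\mathfrak{u}(\mathcal{H}_{\sym}^{(\mu)})$; after complexifying it is equivalent to produce every matrix unit, i.e.\ to reach $\mathfrak{gl}(\mathcal{H}_{\sym}^{(\mu)})$. First I would separate the raising and lowering parts: from Eq.~\eqref{eq:55} one has $[Y_3,H_{\mathrm{TC},\pm}]=\pm H_{\mathrm{TC},\pm}$, and since $\mathrm{ad}_{H_{\mathrm{TC},1}}=\mathrm{ad}_{Y_3}$ (the central term drops out), the bracket $[H_{\mathrm{TC},1},H_{\mathrm{TC},2}]=H_{\mathrm{TC},+}-H_{\mathrm{TC},-}$ together with $H_{\mathrm{TC},2}$ yields $H_{\mathrm{TC},+}$ and $H_{\mathrm{TC},-}$ individually.

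The technical heart is a bootstrap driven by Lemma~\ref{lem:4.13}, run as an induction on the degree in $Y_3$. Part~(i) expresses $[Y_3^{\,n-1}H_{\mathrm{TC},+},H_{\mathrm{TC},-}]$ as a diagonal polynomial in $Y_3$ of degree $n+1$ whose leading coefficient works out to $-(2n+3)\neq0$; part~(ii) expresses $[Y_3^{\,n+1},H_{\mathrm{TC},+}]$ as $\sum_{k=0}^{n}\binom{n}{k}(-1)^{n-k}Y_3^{\,k}H_{\mathrm{TC},+}$, whose top term has coefficient $1$. Starting from the $n=1$ case $[H_{\mathrm{TC},+},H_{\mathrm{TC},-}]$, these two relations feed each other, so at each stage one gains a diagonal operator of one higher degree in $Y_3$ and, using it, a dressed operator $q_k(Y_3)H_{\mathrm{TC},+}$ of one higher degree; the recursion runs up to $k=d_\mu-1$, before the powers of $Y_3$ become dependent on the $(d_\mu{+}1)$-point spectrum. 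At the degenerate value $\mu=M/2$, where $H_{\mathrm{TC},1}^{(\mu)}$ and $H_{\mathrm{TC},3}^{(\mu)}$ are proportional and the pure powers of $Y_3$ cannot be isolated, I would run the same recursion with the available polynomials $p_n(Y_3)$ in place of the pure powers, using that the discrete derivative $g(Y_3)\mapsto g(Y_3)-g(Y_3{-}\Bbb1)$ of Eq.~\eqref{eq:55} lowers the degree by exactly one. At the end the algebra contains, for each degree $k\le d_\mu-1$, an operator $q_k(Y_3)H_{\mathrm{TC},+}$ with $\deg q_k=k$, and the adjoint family for $H_{\mathrm{TC},-}$.

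To harvest the matrix units, observe that the couplings from Eq.~\eqref{eq:59}, namely $f(\mu,\nu{+}1)=\sqrt{\mu-\nu}\,\sqrt{M-\nu}\,\sqrt{\nu+1}$, are nonzero for every $\nu\in\{0,\dots,d_\mu-1\}$, so $q_k(Y_3)H_{\mathrm{TC},+}=\sum_\nu q_k(\nu{+}1)\,f(\mu,\nu{+}1)\,\KB{\mu;\nu{+}1}{\mu;\nu}$. Because the $q_k$ have distinct degrees $0,\dots,d_\mu-1$, their values on the distinct nodes $1,\dots,d_\mu$ form an invertible (generalized Vandermonde) matrix, so a linear inversion isolates each superdiagonal unit $\KB{\mu;\nu{+}1}{\mu;\nu}$ individually, and the subdiagonal units arise symmetrically. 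Climbing commutators of these nearest-neighbour units then produce all off-diagonal matrix units $\KB{\mu;\nu}{\mu;\nu'}$ with $\nu\neq\nu'$, and commuting a unit with its transpose yields the diagonal differences; thus, exactly as in the argument behind Lemma~\ref{lem:5}, one recovers $\mathfrak{su}(\mathcal{H}_{\sym}^{(\mu)})$.

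It remains to add the central direction, and here lies the other thing to watch: a commutator is always traceless, so $\Bbb1^{(\mu)}$ can only enter through a traceful generator. One checks $\operatorname{tr}H_{\mathrm{TC},1}^{(\mu)}=\tfrac12(d_\mu{+}1)(d_\mu-M)$, which is nonzero whenever $\mu<M$, while for $\mu\ge M$ one has $d_\mu=M$ and $\operatorname{tr}H_{\mathrm{TC},3}^{(\mu)}=(M{+}1)(\mu-M/2)\neq0$; adjoining this traceful element to $\mathfrak{su}(\mathcal{H}_{\sym}^{(\mu)})$ gives the full $\mathfrak{u}(\mathcal{H}_{\sym}^{(\mu)})$. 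I expect the bootstrap to be the main obstacle: one must verify that the leading coefficients in both parts of Lemma~\ref{lem:4.13} never vanish over the relevant range, so that the top-degree contribution can always be peeled off using operators already constructed, and one must treat the degenerate case $\mu=M/2$ carefully, since there the identity cannot be extracted from the two proportional diagonal generators and the recovery of the center must be deferred to the trace argument just given.
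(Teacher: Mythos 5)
Your proof is correct and follows essentially the same route as the paper's: restrict to a single block, complexify, split off $H_{\mathrm{TC},\pm}$ via $[H_{\mathrm{TC},1},H_{\mathrm{TC},2}]$, bootstrap the powers $Y_3^k$ and the dressed operators $Y_3^kH_{\mathrm{TC},\pm}$ by the two-part induction built on Lemma~\ref{lem:4.13}, and extract the nearest-neighbour matrix units by polynomial interpolation (your Vandermonde inversion is the paper's Lagrange-interpolation step). Two remarks only: the leading coefficient in Lemma~\ref{lem:4.13}(i) is $-(n+2)$ rather than $-(2n+3)$ (only its non-vanishing is used, so nothing breaks), and your explicit treatment of the degenerate block $\mu=M/2$ --- where $H_{\mathrm{TC},1}^{(\mu)}$ and $H_{\mathrm{TC},3}^{(\mu)}$ are proportional and $\Bbb{1}^{(\mu)}$ must instead be recovered from the trace computation at the end --- is a genuine point of care that the paper's own proof passes over silently.
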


\begin{proof}
  We will prove the corresponding statements for the complexifications:
  $\mathfrak{l}_{\mathrm{TC},\Bbb{C}}= \mathfrak{l}_{\mathrm{TC}} \oplus i
  \mathfrak{l}_{\mathrm{TC}} = \mathcal{B}(\mathcal{H}_{\sym}^{(\mu)})$. The
  proposition then follows from taking only anti-hermitian operators on both
  sides. Now note that $H_{\mathrm{TC},\pm} \in
  \mathfrak{l}_{\mathrm{TC},\Bbb{C}}$ since we can express them as linear
  combinations of $H_{\mathrm{TC},2}$ with the commutator of
  $H_{\mathrm{TC},1}$ and $H_{\mathrm{TC},2}$. Furthermore, $X_M$ act as $\mu
  \Bbb{1}$ on $\mathcal{H}_{\sym}^{(\mu)}$. Hence, Eq.~\eqref{eq:53} shows
  that the restriction $\mathfrak{l}_{\mathrm{TC},\Bbb{C}}^{(\mu)}$ is
  generated by $\Bbb{1}$, $Y_3$ and $H_{\mathrm{TC},\pm}$ considered as
  operators on $\mathcal{H}_{\sym}^{(\mu)}$. Note that all operators in this
  proof are operators on $\mathcal{H}_{\sym}^{(\mu)}$, and therefore we
  simplify the notation by dropping \emph{temporarily} the superscript $\mu$,
  when operators are concerned.

  The first step is to show that $Y_3^k, Y_3^jH_{\mathrm{TC},\pm} \in
  \mathfrak{l}_{\mathrm{TC},\Bbb{C}}^{(\mu)}$ holds for all $k, j \in \N_0$. This
  is done by induction. The statement is true for $k\in\{0,1\}$ and $j=0$. Now
  assume it holds for all $k\in\{0,\dots,n\}$ and $j\in\{1,\dots,n{-}1\}$. Lemma~\ref{lem:4.13}(i)
  shows that the commutator $[Y_3^{n-1} H_{\mathrm{TC},+},H_{\mathrm{TC},-}]$
  is a polynomial in $Y_3$ with $-(n+2) Y_3^{n+1}$ as leading term. Since 
  $Y_3^j \in \mathfrak{l}_{\mathrm{TC},\Bbb{C}}^{(\mu)}$ for $j\in\{0,\dots,n\}$ we can
  subtract all lower order terms and get $Y_3^{n+1} \in
  \mathfrak{l}_{\mathrm{TC}}^{(\mu)}$. To handle $Y_3^nH_{\mathrm{TC},\pm}$ we
  use Lemma~\ref{lem:4.13}(ii). The commutator $[Y_3^{n+1},H_{\mathrm{TC},+}]$ is of the
  form $P(Y_3) H_{\mathrm{TC},+}$ with an $n^{\mathrm{th}}$-order polynomial $P$. Since
  $Y_3^k H_{\mathrm{TC},+} \in \mathfrak{l}_{\mathrm{TC},\Bbb{C}}^{(\mu)}$, we can
  subtract all terms of order $k<n$ and conclude that $Y_3^n H_{\mathrm{TC},+}
  \in \mathfrak{l}_{\mathrm{TC},\Bbb{C}}^{(\mu)}$.

  Now consider a polynomial $P$ with $P(\nu) = 0$ for $\nu \neq \kappa$ and
  $P(\kappa)=1$ with $\nu,\kappa \in\{ 0, \dots, d_\mu\}$. Since all $Y_3^n$ are
  in $\mathfrak{l}_{\mathrm{TC},\Bbb{C}}^{(\mu)}$, we get $\kb{\mu;\kappa}
  = P(Y_3) \in \mathfrak{l}_{\mathrm{TC},\Bbb{C}}^{(\mu)}$. Applying the same argument
  to $Y_3^n H_{\mathrm{TC},\pm}$, we  also get
  $\KB{\mu;\kappa}{\mu;\kappa\pm 1} \in \mathfrak{l}_{\mathrm{TC},\Bbb{C}}^{(\mu)}$
  and the general case $\KB{\mu;\nu}{\mu;\lambda}$ with $\mu\neq\lambda$ can be
  treated with repeated commutators of $\KB{\mu;\kappa}{\mu;\kappa\pm 1}$ for
  different values of $\kappa$. 
\end{proof}

This proposition says that the control system with Hamiltonians
$H_{\mathrm{TC},j}$ with $j\in\{1,2,3\}$ can generate any special unitary
$U^{(\mu_0)}$ on $\mathcal{H}_{\sym}^{(\mu_0)}$ for any $\mu_0$. However,
some calculations using computer algebra, we have done for the case $M=2$
indicate that we cannot exhaust all of $\mathcal{SU}(X_M)$. In other words:
After $U^{(\mu_0)}$ is fixed, we loose the possibility to choose an
\emph{arbitrary} $U^{(\mu)} \in \mathcal{SU}(\mathcal{H}_{\sym}^{(\mu)})$ for
another $\mu$. Our analysis for two atoms suggests that the Lie algebra
generated by the $H_{\mathrm{TC},j}$ is almost as big as
$\mathfrak{su}(X_2)$, but does not contain operators of the form $A \otimes
\Bbb{1}$ with a diagonal traceless operator $A$ (except
$H_{\mathrm{TC},1}$). This observation suggests the choice of the Hamiltonians
$H_{\mathrm{CC},k}$ with $k\in\{1,\dots,M{+}1\}$ in Eq.~\eqref{eq:30}, which lead to a
dynamical group exhausting $\mathcal{SU}(X_M)$. This is shown in the next
proposition, which completes the proof of Thm.~\ref{thm:5}.

\begin{prop}
  The dynamical group generated by $H_{\mathrm{CC},k}$ with $k\in\{1,\dots,M{+}1\}$
  coincides with $\mathcal{SU}(X_M)$.
\end{prop}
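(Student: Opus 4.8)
The plan is to run everything through Corollary~\ref{kor:1}. Since each $iH_{\mathrm{CC},k}$ commutes with $X_M$, the generated Lie algebra $\mathfrak{l}$ lies in $\mathfrak{u}(X_M)$, and by Propositions~\ref{prop:5}--\ref{prop:6} it suffices to prove the equality of truncations $\mathfrak{l}^{[K]} = \mathfrak{su}^{[K]}(X_M) = \bigoplus_{\mu=0}^{K}\mathfrak{su}(\mathcal{H}_{\sym}^{(\mu)})$ for every $K$. The inclusion $\mathfrak{l}^{[K]} \subset \mathfrak{su}^{[K]}(X_M)$ I would dispatch first: read off the diagonal entries of each $H_{\mathrm{CC},k}=(\kb{k}-\kb{k-1})\otimes\unity$ in the basis $\ket{\mu;\nu}$ and note that $H_{\mathrm{CC},M+1}=H_{\mathrm{TC},2}$ is purely off-diagonal, so that each generator restricts to a traceless block. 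This is routine bookkeeping, needing a little care only on the low-lying blocks where the excitation ladder is truncated (so that $d_\mu<M$).

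The substance is the reverse inclusion, and here I would lean on the preceding proposition. Since $H_{\mathrm{TC},1}=S_3\otimes\unity$ is an atom-diagonal traceless operator it lies in the span of the $H_{\mathrm{CC},k}$, and $H_{\mathrm{TC},2}=H_{\mathrm{CC},M+1}$; hence $\mathfrak{l}$ contains the Hamiltonians whose restriction to any single block $\mathcal{H}_{\sym}^{(\mu)}$ already generates all of $\mathcal{B}(\mathcal{H}_{\sym}^{(\mu)})$. What the Tavis-Cummings system cannot do is \emph{decouple} distinct blocks: the obstruction is that its reachable diagonal operators are polynomials in $X_M$ and $Y_3$ whose $X_M$-dependence is rigidly tied to their $Y_3$-degree, so that no genuinely atom-only traceless diagonal operator beyond $H_{\mathrm{TC},1}$ is available as a global operator. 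The whole point of the $M$ extra controls is to remove this rigidity: writing $H_{\mathrm{CC},k}=p_k(Y_3)$, the operators $H_{\mathrm{CC},1},\dots,H_{\mathrm{CC},M}$ span all atom-diagonal traceless operators, i.e.\ every $Y_3$-polynomial traceless on $\C^{M+1}$, with no $X_M$-dependence whatsoever.

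Concretely I would proceed as follows. Commuting a diagonal $p(Y_3)$ with $H_{\mathrm{TC},2}$ and using $Y_\pm g(Y_3)=g(Y_3\mp\unity)Y_\pm$ from Eq.~\eqref{eq:55} gives $[p(Y_3),H_{\mathrm{TC},+}]=\bigl(p(Y_3)-p(Y_3-\unity)\bigr)H_{\mathrm{TC},+}$, so finite differences of the now freely available $Y_3$-polynomials produce $Y_3^{j}H_{\mathrm{TC},\pm}$ for all $j$ carrying \emph{no} forced $X_M$-weighting. Feeding these back through a commutator with $H_{\mathrm{TC},\mp}$ and invoking Lemma~\ref{lem:4.13}(i) --- where the factor $f^2(X_M,Y_3)$ from Eq.~\eqref{eq:59} is linear in $X_M$ --- yields diagonal operators genuinely linear in $X_M$; subtracting the pure $Y_3$-operators already in hand isolates the $X_M$-dependent pieces, and iterating the round trip raises the $X_M$-degree one step at a time. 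Once all polynomials in $X_M$ and $Y_3$ (to the degrees relevant for $\mathcal{H}^{[K]}$) are reached, block selection is immediate: $X_M$ acts as the scalar $\mu$ on $\mathcal{H}_{\sym}^{(\mu)}$, so a polynomial in $X_M$ vanishing at every $\mu\le K$ except a chosen $\mu_0$ projects any such operator onto the single block $\mu_0$. Combining with $Y_3$-polynomials gives all diagonal traceless operators on block $\mu_0$, and block-selecting the off-diagonal $X_M^{i}Y_3^{j}H_{\mathrm{TC},\pm}$ together with their commutators fills in the off-diagonal part; summing over $\mu_0\le K$ gives $\mathfrak{su}^{[K]}(X_M)\subset\mathfrak{l}^{[K]}$.

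The main obstacle is this decoupling step: proving that the diagonal operators one can actually synthesise span all of the products $X_M^{i}Y_3^{j}$, and not merely the correlated $X_M$-and-$Y_3$ subspace already reachable from the Tavis-Cummings Hamiltonians. I would make it precise by induction on the $X_M$-degree $i$, the base case $i=0$ being supplied by the new controls $p_k(Y_3)$ and the step consisting of one round trip through $H_{\mathrm{TC},+}$ and $H_{\mathrm{TC},-}$, whose single $X_M$-linear factor $f^2$ advances the degree by exactly one after the lower-order terms (available by induction) are subtracted. The delicate point here is the same as in the $\subset$ direction: the coefficient $(M+1-Y_3)Y_3$ in $f^2$ vanishes at the two ends $\nu=0$ and $\nu=d_\mu$ of each ladder, so the induction must be controlled carefully at the boundary rungs. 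Finite-dimensionality of each block --- only finitely many $Y_3$-powers and a fixed $K+1$ blocks matter --- is what lets the induction terminate and close.
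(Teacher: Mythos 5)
Your proposal is correct in outline but takes a genuinely different route from the paper's. The paper works in the root-vector basis: it introduces block-diagonal matrix units $\kappa(k,j)$ with $\kappa(k,j)^{(\mu)}=\KB{\mu;k}{\mu;j}$, computes $[H_{\mathrm{CC},j},Y_+]=\sum_k A_{jk}\,\kappa(k,k{-}1)$ with $(A_{jk})$ the tridiagonal Cartan-type matrix of determinant $M{+}1$, and inverts $A$ to isolate each individual weighted simple-root vector $f(X_M,Y_3)\kappa(k,k{-}1)$. For each fixed $k$ this yields the triple in Eq.~\eqref{eq:64}, which has exactly the $\sqrt{X_M+\mathrm{const}}\cdot\varsigma_\alpha$ structure of the one-atom algebra of Lemma~\ref{lem:2}, so the block-selection argument of Proposition~\ref{prop:7} applies verbatim to each rung pair, and commutators over different $k$ fill out $\mathfrak{su}(\mathcal{H}_{\sym}^{(\mu)})$. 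You instead stay in the $(X_M,Y_3)$-polynomial picture inherited from the Tavis--Cummings proposition and argue that the new diagonal controls decouple the $X_M$-degree from the $Y_3$-degree by an induction through round-trip commutators with $H_{\mathrm{TC},\pm}$. Both routes share the same endgame (block localization by polynomials in $X_M$), but the paper's inversion of $(A_{jk})$ hands over the individual root vectors in one linear-algebra step and then recycles an already-proved lemma, whereas your route concentrates all the difficulty in the degree-raising step. That step is the one place where your sketch must become a computation: by Lemma~\ref{lem:4.13} the $X_M$-linear coefficient produced by one round trip from $q(Y_3)H_{\mathrm{TC},+}$ is $q(y)w(y)-q(y{+}1)w(y{+}1)$ with $w(y)=(M{+}1-y)\,y$ from Eq.~\eqref{eq:59}, and you need this map to be onto the traceless diagonal polynomials; it is, precisely because $w$ vanishes at the two boundary rungs $y=0$ and $y=M{+}1$ so the telescoping sum of the output vanishes while its interior values are free --- but this is exactly the ``delicate point'' you defer, and without it the induction does not visibly close. (One further caveat applies equally to your $\subset$ direction and to the paper's statement: on the truncated block $\mu=k-1$ the operator $H_{\mathrm{CC},k}$ of Eq.~\eqref{eq:30} restricts to $-\kb{\mu;k{-}1}$, which is not traceless, so the claim that each generator lies in $\mathfrak{su}(X_M)$ needs qualification on finitely many low-lying blocks; this does not affect the reverse inclusion, which is what Corollary~\ref{kor:1} actually consumes.)
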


\begin{proof}
  Let us introduce the operators $\kappa(k,j) \in \mathfrak{u}_\Bbb{C}(X_M)$
  (the complexification of $\mathfrak{u}(X_M)$) given by $\kappa(k,j)^{(\mu)}
  =\KB{\mu;k}{\mu;j}$ with $k,j\in\{0,\dots,M\}$ and $\kappa(k,j) = 0$ if $k \geq
  d_\mu$ and $j \leq d_\mu$, where $d_\mu = \min(\mu,M{+}1)$;
  cf.\ Eq.~\eqref{eq:23}. We can re-express $Y_\pm$ in terms of $\kappa(k,j)$ as
    $Y_+ = \sum_{k=0}^{M{-}1} \kappa(k{+}1,k)$, $Y_-=\sum_{k=1}^M \kappa(k{-}1,k)$.
  Compare this to the definition of $Y_\pm$ in (\ref{eq:43}). The truncation
  of the sums occuring for $\mu < M$ is now built into the definition of the
  $\kappa(k,j)$. Similarly we can write the $H_{\mathrm{CC},j}$ for $j\in\{1,\dots,M\}$
  as $H_{\mathrm{CC},j} = \kappa(k,k) - \kappa(k{-}1,k{-}1)$. The $\kappa(k,j)$
  are particularly useful because their commutator has the following simple
  form: $[\kappa(k,j),\kappa(p,q)] = \delta_{jp}\kappa(k,q) - \delta_{kq}
  \kappa(p,j)$. Note that all truncations for small $\mu$ are automatically
  respected. This can be used to calculate the commutator of
  $H_{\mathrm{CC},k}$ and $Y_\pm$. To this end we introduce the $M \times M$
  matrix $(A_{jk})$ with $A_{jj}=2$, $A_{j,k}=-1$ if $|j-k|=1$ and $A_{jk}=0$
  otherwise. Using $(A_{jk})$ we can write $[H_{\mathrm{CC},j},Y_+] = \sum_k A_{jk}
  \kappa(k,k{-}1)$. The matrix $(A_{jk})$ is tridiagonal, and therefore its
  determinant can be easily calculated and it equals $M{+}1$. Hence $(A_{jk})$ is
  invertible, and we can express $\kappa(j,j{-}1)$ for $j\in\{1,\dots,M\}$ as linear
  combination of the commutators $[H_{\mathrm{CC},k},Y_+]$. 

  Now consider the Lie algebra $\mathfrak{l}_{\mathrm{CC}}$  generated by
  $H_{\mathrm{CC},k}$ with $k\in\{1,\dots,M{+}1\}$ and its complexification
  $\mathfrak{l}_{\mathrm{CC},\Bbb{C}}$. We have $H_{\mathrm{TC},1} \in
  \mathfrak{l}_{\mathrm{CC}}$ since it can be written as a linear combination
  of the $H_{\mathrm{CC},j}$. In addition $H_{\mathrm{TC},3} =
  H_{\mathrm{CC},M+1} \in \mathfrak{l}_{\mathrm{CC}}$ and since $S_+ \otimes
  a$, $S_-\otimes a^*$ can be written as (complex) linear combinations of
  $H_{\mathrm{TC},3}$ and its commutator with $H_{\mathrm{TC},1}$ we get $S_+
  \otimes a, S_-\otimes a^* \in \mathfrak{l}_{\mathrm{CC},\Bbb{C}}$. To
  calculate the commutators $[H_{\mathrm{CC},j},S_+\otimes a]$ note that
  according to (\ref{eq:43}) we have $S_+ \otimes a = f(X_M,Y_3) Y_+$ and
  $f(X_M,Y_+)$ commutes with $H_{\mathrm{CC},k}$. Hence 
    $[H_{\mathrm{CC},j},S_+\otimes a] = [H_{\mathrm{CC},j}, f(X_M, Y_3) Y_+]
    = f(X_M, Y_3) [H_{\mathrm{CC},j},Y_+]  = \sum_k A_{jk} f(X_M, Y_3) \kappa(k,k{-}1)$.
  Using the reasoning from the last paragraph, we see that $f(X_M,Y_3)
  \kappa(k,k{-}1) \in \mathfrak{l}_{\mathrm{CC}}$. Similarly we can show by
  using commutators with $S_- \otimes a^*$ that all $\kappa(k,k{+}1) f(X_M,Y_3)$
  are in $\mathfrak{l}_{\mathrm{CC}}$, too. By expanding the function $f$ we
  see in this way that for $k\in\{1,...,M\}$ the operators  
\begin{equation}\label{eq:64}
    A_+ = P(k) \, \kappa(k,k{-}1),\, A_- = 
      P(k) \,  \kappa(k{-}1,k),\,
      A_3 = \kappa(k,k) - \kappa(k{-}1,k{-}1)
\end{equation}
  with $P(k):= \sqrt{X_M +(1{-}k) \Bbb{1}}  $ are elements of $\mathfrak{l}_{\mathrm{CC},\Bbb{C}}$. 

  To conclude the proof, we apply again Corollary \ref{kor:1}. Hence we have to
  consider the truncated algebra $\mathfrak{l}_{\mathrm{CC}}^{[K]}$. To this
  end, look at the subalgebra $\mathfrak{l}_{\mathrm{CC},k}$ of
  $\mathfrak{l}_{\mathrm{CC}}$ generated by the operators in
  (\ref{eq:64}). They are acting on the subspace generated by basis vectors
  $\ket{\mu;k}$, $\ket{\mu;k{-}1}$ and if we write $A_1 = A_+ + A_-$, $A_2 =
  i(A_+ - A_-)$ we get (up to an additive shift in the operator $X_M$) the
  same structure already analyzed in Lemma \ref{lem:2} (cf.\ also the operators
  $A_{\alpha,k}$ in Eq.~\eqref{eq:68}). Hence we can apply the method from
  Sect.~\ref{sec:one-atom-1} to see that for all $\mu \in\{ 0, \dots, K\}$ the
  operators $\kb{\mu;k} - \kb{\mu,k{-}1}$, $\KB{\mu;k}{\mu;k{-}1}$ and
  $\KB{\mu;k{-}1}{\mu;k}$ are elements of
  $\mathfrak{l}_{\mathrm{CC},\Bbb{C}}^{[K]}$ (provided $k \leq d_\mu$). Now we
  can generate all operators $\KB{\mu;p}{\mu,j}$ with $p,j \leq d_\mu$ by
  repeated commutators of $\KB{k}{k{-}1}$ and $\KB{k{-}1}{k}$ for different values
  of $k$. This shows that $\mathfrak{su}_{\Bbb{C}}(\mathcal{H}_{\sym}^{(\mu)})
  \subset \mathfrak{l}_{\mathrm{CC},\Bbb{C}}^{[K]}$ for all $\mu \leq K$. By
  passing to anti-selfadjoint elements we conclude that
  $\mathfrak{l}_{\mathrm{CC}}^{[K]} = \mathfrak{su}(X_M)^{[K]}$ holds for all
  $K$. Hence the statement follows from Corollary \ref{kor:1}. 
\end{proof}

\section{Strong controllability}
\label{sec:strong-contr}

The purpose of this section is to show how one can complement the block-diagonal
dynamical groups from the last section to get strong controllability. 
We add one generator which
breaks the abelian symmetry of the block-diagonal decomposition.
The proofs for pure-state controllability and strong controllability are given in Proposition~\ref{prop:9}
and Proposition~\ref{prop:10}, respectively.
This completes the proof of Theorems \ref{thm:1}, \ref{thm:7} and \ref{thm:2}.

\subsection{Pure-state controllability}

Consider a family $H_1, \dots, H_n$ of control Hamiltonians on the Hilbert
space $\mathcal{H}$ with joint domain $D \subset \mathcal{H}$ admitting a
$\mathrm{U}(1)$-symmetry defined by a charge operator $X$ with the same
domain. Since all the subspaces $\mathcal{H}^{(\mu)}$ are invariant under all
time evolutions, which can be constructed from the $H_k$,  pure-state controllability 
cannot be achieved. For rectifying this problem, we have to
add a Hamiltonian that breaks this symmetry in a specific way. We will do
so by using complementary operators as in Definition
\ref{def:1}. Hence in addition to the projections $X^{(\mu)}$, $\mu \in
\N_0$ we have the mutually orthogonal projections $E_\alpha$,
$\alpha\in\{+,0,-\}$ introduced in Sect.~\ref{sec:atoms-cavity} and the
corresponding derived structures. This includes in particular the
subprojections $X^{(\mu)}_\alpha \leq X^{(\mu)}$, $\mu \in \N_0$ and the
Hilbert spaces $\mathcal{H}^{(\mu)}_\alpha$ onto which they project. Recall,
that they satisfy $X^{(\mu)}_\alpha = E_\alpha X^{(\mu)}$ and $X^{(\mu)} =
X^{(\mu)}_- \oplus X^{(\mu)}_0 \oplus X^{(\mu)}_+$, and that for $\mu > 0$ the
$X^{(\mu)}_\pm$ are required to be non-zero. For the following discussion we
need in addition the Hilbert spaces $\mathcal{H}_{[K]} = \mathcal{H}^{[K]}
\oplus \mathcal{H}^{(K+1)}_-$, the projections $F_{[K]}$ onto them and the
group $\mathcal{SU}(X,F_{[K]})$ of $U \in \mathcal{SU}(X)$ commuting with
$F_{[K]}$. Furthermore we will indicate restrictions to the subspaces
$\mathcal{H}_{[K]}$ by a subscript $[K]$, e.g. $\mathcal{SU}_{[K]}(X,F_{[K]})$
denotes the corresponding restriction of $\mathcal{SU}(X,F_{[K]})$ which has the
form $\mathcal{SU}_{[K]}(X,F_{[K]}) = \mathcal{SU}^{[K]}(X) \oplus
\mathcal{SU}(X^{(K+1)}_-)$. Now one can prove the following lemma, which will be of
importance in the subsequent subsections. 

\begin{lem} \label{lem:7}
  Consider a strongly continuous representation $\pi: \mathrm{U}(1) \to
  \mathcal{U}(\mathcal{H})$ with charge operator $X$, an operator $H$
  complementary to $X$, and the objects just introduced.  For
  all $K \in \N$,  introduce the Lie group $\mathcal{G}_{X,F,K}$ generated by 
  $\mathcal{SU}_{[K]}(X,F_{[K]})$, $\exp(i t H)$, $t\in \Bbb{R}$ and global
  phases $\exp(i \alpha) \Bbb{1}$, $\alpha \in [0,2\pi)$. Then the group
  $\mathcal{G}_{X,F,K}$ acts transitively on the unit sphere of
  $\mathcal{H}_{[K]}$.  
\end{lem}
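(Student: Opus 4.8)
My plan is to induct on $K$, the base case being essentially a restatement of the defining property of a complementary operator. For $K=0$, condition~(\ref{item:2}) of Definition~\ref{def:1} asserts that the group generated by $\exp(itH)$ and those $U\in\mathcal{SU}(X)$ commuting with $F_{[0]}$ already acts transitively on the one-dimensional projections of $\mathcal{H}_{[0]}$. As $\mathcal{G}_{X,F,0}$ contains this group — its $\mathcal{SU}_{[0]}(X,F_{[0]})$-part being exactly the restriction of those $U$ — together with all global phases $e^{i\alpha}\unity$, transitivity on one-dimensional projections upgrades to transitivity on the whole unit sphere of $\mathcal{H}_{[0]}$ (match the projections, then fix the residual phase). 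This settles $K=0$, from which the case $K=1$ demanded by the statement, and all larger $K$, will follow.

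The structural fact I would isolate first is that $H$, and hence every generator of $\mathcal{G}_{X,F,K}$, leaves $\mathcal{H}_{[K]}$ invariant. By Definition~\ref{def:1}(\ref{item:1}) the operator $H$ is block-off-diagonal with respect to the projections $X^{(\mu)}_\pm$: it carries each $\mathcal{H}^{(\mu+1)}_+$ unitarily onto $\mathcal{H}^{(\mu)}_-$ via the partial isometry $X^{(\mu)}_-HX^{(\mu+1)}_+$, and annihilates $\mathcal{H}_0$. Since $\mathcal{H}_{[K]}$ contains all blocks up to $\mu=K$ in full together with precisely the piece $\mathcal{H}^{(K+1)}_+$ that $H$ sends back down into $\mathcal{H}^{(K)}_-$, each such coupling stays inside $\mathcal{H}_{[K]}$; the elements of $\mathcal{SU}_{[K]}(X,F_{[K]})$ commute with $F_{[K]}$ and preserve $\mathcal{H}_{[K]}$ by construction. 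Consequently $\mathcal{G}_{X,F,K}$ restricts to a compact connected Lie group acting on the finite-dimensional space $\mathcal{H}_{[K]}$, and on the one-parameter group $t\mapsto\exp(itH)$ I can read off that it acts as a simultaneous Rabi rotation on the ladder pairs $(\mathcal{H}^{(\mu)}_-,\mathcal{H}^{(\mu+1)}_+)$, exchanging the two subspaces of each pair when $t=\pi/2$. The same invariance holds for every $\mathcal{H}_{[J]}$ with $J<K$, which is what makes the inductive hypothesis usable inside $\mathcal{G}_{X,F,K}$.

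For the inductive step I would reduce transitivity on $\mathcal{H}_{[K]}$ to the known transitivity on $\mathcal{H}_{[K-1]}$ by steering an arbitrary unit vector $\psi\in\mathcal{H}_{[K]}$ into $\mathcal{H}_{[K-1]}$ with an element of $\mathcal{G}_{X,F,K}$, and then composing with the hypothesis. The key geometric point is that the only coupling linking $\mathcal{H}_{[K-1]}$ to its complement inside $\mathcal{H}_{[K]}$ is the single ladder pair $(\mathcal{H}^{(K)}_-,\mathcal{H}^{(K+1)}_+)$, which lies wholly in that complement. The steering then takes three moves: first a full block-$K$ special unitary (available since $\mathrm{SU}(\mathcal{H}^{(K)})$ acts transitively on each sphere of $\mathcal{H}^{(K)}$) rotates the entire block-$K$ component of $\psi$ into $\mathcal{H}^{(K)}_+\subset\mathcal{H}_{[K-1]}$, emptying $\mathcal{H}^{(K)}_-$; then the half-turn $\exp(i(\pi/2)H)$ transfers the remaining amplitude in the top piece $\mathcal{H}^{(K+1)}_+$ into the now-empty $\mathcal{H}^{(K)}_-$, while rearranging the rest of $\psi$ only among the lower ladder pairs, i.e.\ only within $\mathcal{H}_{[K-1]}$; a second block-$K$ special unitary finally returns the block-$K$ component to $\mathcal{H}^{(K)}_+\subset\mathcal{H}_{[K-1]}$. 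The image now lies in $\mathcal{H}_{[K-1]}$, and the inductive hypothesis — realized by generators (localized block unitaries on blocks $\le K-1$ and on the $\mathcal{H}_{[K-1]}$-part of block $K$, together with $\exp(itH)$ and phases) that all belong to $\mathcal{G}_{X,F,K}$ — carries it to the chosen reference vector.

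The step I expect to require the most care is the bookkeeping around the simultaneous, non-diagonal action of $\exp(itH)$: because one time $t$ rotates all ladder pairs at once, I must verify that the half-turn bringing the top piece down neither leaks amplitude out of $\mathcal{H}_{[K]}$ (this is the content of the invariance observation) nor scatters any part of $\psi$ outside $\mathcal{H}_{[K-1]}$ — which is why $\mathcal{H}^{(K)}_-$ must be emptied beforehand and why the $H$-invariance of the nested subspaces $\mathcal{H}_{[J]}$ is essential. Pinning down the correct $\pm$-labelling so that the included top piece of block $K+1$ is exactly the source $\mathcal{H}^{(K+1)}_+$ of the partial isometry (equivalently, the piece that $H$ maps down into block $K$) is the crux that makes $\mathcal{H}_{[K]}$ invariant in the first place. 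The remaining points — the $\mathcal{H}_0$-kernel of $H$, and the determinant adjustments needed to keep the extended block unitaries inside $\mathcal{SU}_{[K]}(X,F_{[K]})$ — are routine once the ladder structure is in hand.
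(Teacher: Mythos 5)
Your proof is correct and follows essentially the same route as the paper's own argument: reduce a unit vector from $\mathcal{H}_{[K]}$ to $\mathcal{H}_{[K-1]}$ by a block-$K$ rotation, the half-turn $\exp(i(\pi/2)H)$, and a second block-$K$ rotation, iterate down to $\mathcal{H}_{[0]}$, and finish with condition~(\ref{item:2}) of Definition~\ref{def:1} plus global phases. Your explicit check that the top piece adjoined to $\mathcal{H}^{[K]}$ must be the source $\mathcal{H}^{(K+1)}_+$ of the partial isometry (so that every ladder pair lies entirely inside or outside $\mathcal{H}_{[K]}$) is welcome extra care rather than a deviation --- the paper's displayed definition of $F_{[K]}$ carries the opposite sign label, and its proof implicitly uses the convention you spell out.
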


\begin{proof}
  Consider $\phi \in \mathcal{H}_{[K]}$ and choose $\tilde{U}_1 \in
  \mathcal{SU}_{[K]}(X,F_{[K]})$ such that $X_+^{(\mu)} \tilde{U}_1 \phi =
  0$ for all $\mu > 0$.  This is possible, since
  $\mathcal{SU}(\mathcal{H}^{(\mu)})$ acts transitively (up to a phase) on the
  unit vectors of $\mathcal{H}^{(\mu)} = \mathcal{H}_-^{(\mu)} \oplus
  \mathcal{H}_0^{(\mu)} \oplus \mathcal{H}^{(\mu)}_+$. 
  According to item (\ref{item:1}) of Def. \ref{def:1} we can find $t \in
  \Bbb{R}$ (e.g. $t=\pi/2$ will do) such that $\exp(i t H)
  \mathcal{H}_+^{(K+1)} = \mathcal{H}_-^{(K)}$ holds. Hence $\exp(i t H) \phi
  \in \mathcal{H}^{[K]}$ and we can find a $\tilde{U}_2 \in
  \mathcal{SU}_{[K]}(X,F_{[K]})$ with $\phi_1 = \tilde{U}_2 \exp(i t H)
  \tilde{U}_1 \phi \in \mathcal{H}_{[K-1]}$.  Applying this procedure $K$
  times we get $\phi_K = U_K \cdots U_1 \phi \in \mathcal{H}_{[0]}$ with $U_j
  \in \mathcal{G}_{X,F,k}$. Similarly we can find $V_1, \dots, V_K \in
  \mathcal{G}_{X,F,k}$ with $\psi_K = V_k \cdots V_1 \psi \in
  \mathcal{H}_{[0]}$. 

  Now note that the group $\mathcal{G}_{X,F,0}$ can be regarded as a subgroup
  of $\mathcal{G}_{X,F,k}$ (which acts trivially on the orthocomplement of
  $\mathcal{H}_{[0]}$ in $\mathcal{H}_{[K]}$). Hence, the statement of the
  lemma follows from the fact that, due to condition (\ref{item:2}) of
  Def. \ref{def:1}, the group $G_{X,F,0}$ acts transitively on the unit
  vectors in $\mathcal{K}_{[0]} = F_{[0]} \mathcal{H}$.   
\end{proof}

The first easy consequence of this lemma is the following result which is a
proof of Thm. \ref{thm:10} which we restate here as a proposition.

\begin{prop} \label{prop:9}
  Consider a strongly continuous representation $\pi: \mathrm{U}(1) \to
  \mathcal{U}(\mathcal{H})$ with charge operator $X$ and a family of
  selfadjoint operators $H_1, \dots, H_d$ on $\mathcal{H}$. Assume that the
  following conditions hold:
  \begin{enumerate}
  \item 
    All eigenvalues $\mu$ of $X$ are greater than or equal to $0$.
  \item 
    $H_1, \dots, H_{d-1}$ commute with $X$.
  \item 
    The dynamical group generated by $H_1, \dots, H_{d-1}$ contains
    $\mathcal{SU}(X)$.
  \item 
    The operator $H_d$ is complementary to $X$.
  \end{enumerate}
  Then the system (\ref{eq:1}) with Hamiltonians $H_0 =
  \Bbb{1}, H_1, \dots, H_d$ is pure-state controllable.
\end{prop}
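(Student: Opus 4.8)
The plan is to read pure-state controllability off the transitivity statement of Lemma~\ref{lem:7}, after reducing the two target vectors to a single finite-dimensional block $\mathcal{H}_{[K]}$. First I would fix unit vectors $\psi_0,\psi\in\mathcal{H}$ and $\epsilon>0$, and use density of the finite-particle space $\FPV$ to choose unit vectors $\psi_0',\psi'\in\FPV$ with $\|\psi_0-\psi_0'\|<\epsilon/3$ and $\|\psi-\psi'\|<\epsilon/3$. Since each of these meets only finitely many eigenspaces $\mathcal{H}^{(\mu)}$, there is a single $K\in\N$ with $\psi_0',\psi'\in\mathcal{H}^{[K]}\subset\mathcal{H}_{[K]}$.

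Next I would locate the three families of generators of the group $\mathcal{G}_{X,F,K}$ from Lemma~\ref{lem:7} inside the dynamical group $\mathcal{G}$ of the full control system with Hamiltonians $H_0=\Bbb{1},H_1,\dots,H_d$. All three are available: the global phases $\exp(i\alpha)\Bbb{1}$ come from the control Hamiltonian $H_0=\Bbb{1}$; the one-parameter group $\exp(itH_d)$ consists of genuine time-evolution operators of the $d$-th control; and $\mathcal{SU}(X)\subseteq\mathcal{G}$, because by assumption~3 the dynamical group generated by $H_1,\dots,H_{d-1}$ already contains $\mathcal{SU}(X)$ and is itself a subgroup of $\mathcal{G}$ (adding generators only enlarges the dynamical group). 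Lemma~\ref{lem:7} then supplies $W\in\mathcal{G}_{X,F,K}$ with $W\psi_0'=\psi'$, and by lifting each factor in that construction to its full-space version one obtains a genuine operator $\hat{W}\in\mathcal{G}$ with $\hat{W}\psi_0'=\psi'$.

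Finally, since $\mathcal{G}$ is the strong closure of the set of time-evolution operators $U(0,T)$, the element $\hat{W}$ can be strongly approximated on the single vector $\psi_0'$: there is a finite time $T>0$ and controls $u_k$ with $\|\hat{W}\psi_0'-U(0,T)\psi_0'\|<\epsilon/3$. Using $\psi'=\hat{W}\psi_0'$ and unitarity of $U(0,T)$, a telescoping triangle inequality
\[
  \|\psi-U(0,T)\psi_0\|\le\|\psi-\psi'\|+\|\hat{W}\psi_0'-U(0,T)\psi_0'\|+\|U(0,T)(\psi_0'-\psi_0)\|<\epsilon
\]
shows that $\psi$ is reachable from $\psi_0$, which is pure-state controllability.

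I expect the only genuinely delicate point to be the bookkeeping in the lifting step. The generator $\exp(itH_d)$ is complementary to $X$ and therefore does \emph{not} leave $\mathcal{H}_{[K]}$ invariant as an operator, so one cannot simply ``restrict and re-extend'' the factors of $W$. The clean way around this is to note, exactly as in the proof of Lemma~\ref{lem:7}, that the particular intermediate states produced along the orbit of $\psi_0'$ stay inside $\mathcal{H}_{[K]}$ (the construction pushes the state down through $\mathcal{H}^{[K]}$ into the nested subspaces $\mathcal{H}_{[j]}$), so the full-space operators $\exp(itH_d)$ and $U\in\mathcal{SU}(X,F_{[K]})$ agree with their restricted actions on precisely those states. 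Once this identification of $\mathcal{G}_{X,F,K}$ with a subgroup of $\mathcal{G}$ acting on $\psi_0'$ is in place, everything else—density of $\FPV$, the strong approximation from closedness of $\mathcal{G}$, and the three-term estimate—is routine.
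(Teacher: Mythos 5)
Your proposal is correct and follows essentially the same route as the paper's proof of Proposition~\ref{prop:9}: reduce both states to a finite block $\mathcal{H}_{[K]}$ by an $\epsilon/3$-truncation, invoke the transitivity of $\mathcal{G}_{X,F,K}$ from Lemma~\ref{lem:7}, lift the resulting group element into the dynamical group $\mathcal{G}$, and then approximate it by an actual time evolution via the strong-closure definition of $\mathcal{G}$, closing with the same three-term triangle inequality. Your caveat that $\exp(itH_d)$ need not commute with $F_{[K]}$ is well taken---the paper glosses over precisely this point when it asserts lifts $W_j$ with $[W_j,F_{[K]}]=0$---and your fix of tracking the specific intermediate states along the orbit is the right way to make that step rigorous.
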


\begin{proof}
  We have to show that for each pair of pure
  states $\psi, \phi \in \mathcal{H}$ and each $\epsilon>0$ there is a finite
  sequence $U_k \in \mathcal{U}(\mathcal{H})$ with $k \in\{1, \dots, N\}$ and either $U_k
  \in \mathcal{SU}(X)$, $U_k=\exp(i t H_d)$, or $U_k = \exp(i \alpha)
  \Bbb{1}$ such that $\| \psi - U_N \cdots U_1 \phi \| < \epsilon$. To this
  end, first note that we can find $K \in \N$ such that $\| \psi - F_{[K]} 
  \psi\| < \epsilon/3$ and $\|\phi - F_{[K]} \psi\| < \epsilon/3$, where
  $F_{[K]}$ is the projection defined in the first paragraph of this
  subsection.  Therefore 
  $
    \| \psi - U_N \cdots U_1 \phi \| \leq  \| \psi - F_{[K]} \psi\| + \|
    F_{[K]} \psi - U_N \cdots U_1 F_{[K]} \phi \| + \| U_N \cdots U_1 F_{[K]}
    \phi - U_N \cdots U_1 \phi \| < \epsilon
  $
  provided $\| F_{[K]} \psi - U_N \cdots U_1 F_{[K]} \phi \| <
  \epsilon/3$. Hence we can assume that $\psi, \phi \in \mathcal{H}_{[K]}$ and
  apply Lemma \ref{lem:7}. This leads to a sequence $V_1 ,\dots, V_N \in
  \mathcal{G}_{X,F,K}$ with $V_N \cdots V_1 \phi = \psi$. Now note that the
  dynamical group $\mathcal{G}$ generated by $H_0, \dots, H_{d}$ contains by
  assumption the group $\mathcal{SU}(X)$, the unitaries $\exp(i t H_{d})$
  and the global phases $\exp(i \alpha) \Bbb{1}$. Hence with the definition of
  $\mathcal{G}_{X,F,K}$, we get for $j\in\{1,\dots,N\}$ a $W_j \in \mathcal{G}$ with
  $[W_j,F_{[K]}] = 0$ and $F_{[K]} W_j = V_j$, and therefore $\psi = W_N \cdots
  W_1 \phi$. But by definition the dynamical group is the strong closure of
  monomials $U_N \cdots U_1$ with $U_j = \exp(i t_j H_{k_j})$ for some $t_j \in
  \Bbb{R}$ and $k_j \in\{ 0, \dots, d+1\}$. In other words for all $U \in
  \mathcal{G}$, $\xi \in \mathcal{H}$ and $\epsilon >0$ we can find such a
  monomial satisfying $\| U_N \cdots U_1 \xi  -U \xi\| < \epsilon$. Applying
  this statement to the operators $W_j$ and the vectors $W_{j-1} \cdots W_1
  \phi$ concludes the proof.
\end{proof}

This proposition can be applied to all systems studied in
Sect. \ref{sec:atoms-cavity}. Therefore, {\em they are all pure-state
controllable}. However, as already stated, one can even prove strong
controllability, which is the next goal.

\subsection{Approximating unitaries}
\label{sec:appr-unit}

Lemma \ref{lem:7} shows that the group $\mathcal{G}_{X,F,K}$ acts transitively
on the pure states in the Hilbert space $\mathcal{H}_{[K]}$. This implies
that there are only two possibilities for this group: either
$\mathcal{G}_{X,F,K}$ coincides with group of symplectic unitaries on
$\mathcal{H}_{[K]}$ (which is only possible if the dimension of
$\mathcal{H}_{[K]}$ is even), or it is the whole unitary group
\cite{SchiSoLea02,SchiSoLea02b,AA03}. At the same time we have seen in
Prop. \ref{prop:9} that (under appropriate 
conditions on the control Hamiltonians) each $U \in \mathcal{G}_{X,F,K}$
admits an element $W$ in the dynamical group satisfying $W \xi = U \xi$ for
all $\xi \in \mathcal{H}_{[K]}$. Proving full controllability can therefore be
reduced to two steps:
\begin{enumerate}
\item 
  Find arguments that for an infinite number of $K \in \N$, the group
  $\mathcal{G}_{X,F,K}$ cannot be unitary symplectic, such that it has to
  coincide with the full unitary group on $\mathcal{H}_{[K]}$.
\item \label{item:3}
  Show that each unitary $U \in \mathcal{U}(\mathcal{H})$ can be approximated
  by a sequence $W_K$, $K \in \N$ of unitaries of the form $W_K = U_k
  \oplus V_k$, where $U_k \in \mathcal{U}(\mathcal{H}_{[K]})$ can be chosen
  arbitrarily, while $V_K$ is a unitary on $(\Bbb{1} - F_{[K]}) \mathcal{H}$
  which is (at least partly) fixed by the choice of $U_k$. 
\end{enumerate}
The purpose of this subsection is to prove the second statement, while the
first one is postponed to Section~\ref{sec:strong-contr-2}. We start with the
following lemma:

\begin{lem} \label{lem:8}
  Consider a sequence $F_{[K]}$, $K \in \N$ of finite-rank projections
  converging strongly to $\Bbb{1}$ and satisfying $F_{[K]} \lneq
  F_{[K+1]}$. For each unitary $U \in \mathcal{U}(\mathcal{H})$ there is a
  sequence $U_{[K]}$, $K \in \N$ of partial isometries, which converges
  strongly to $U$ and satisfies $U_{[K]}^*U_{[K]} = U_{[U]} U_{[K]}^* =
  F_{[K]}$;  i.e.\ $F_{[K]}$ is the source and the target projection of $U_{[K]}$.
\end{lem}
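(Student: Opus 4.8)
The plan is to realize each $U_{[K]}$ as a \emph{unitary on the finite-dimensional space} $\mathcal{H}_{[K]} := F_{[K]}\mathcal{H}$, extended by zero on the orthogonal complement; any such operator is automatically a partial isometry with $F_{[K]}$ as both source and target projection, so the entire content of the lemma is to choose these unitaries so that they converge strongly to $U$. The first candidate that comes to mind, $U F_{[K]}$, is already a partial isometry converging strongly to $U$, but its target projection is $U F_{[K]} U^{*}$ rather than $F_{[K]}$, so it has the wrong form. I would therefore start instead from the compression $A_K := F_{[K]} U F_{[K]}$, which has the correct support and co-support inside $\mathcal{H}_{[K]}$ but fails to be isometric; the task becomes to ``straighten'' $A_K$ into a genuine unitary on $\mathcal{H}_{[K]}$ without destroying strong convergence.

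Next I would record two elementary strong-limit facts. First, $A_K \to U$ strongly: for fixed $\psi$ one has $F_{[K]}\psi \to \psi$, hence $U F_{[K]}\psi \to U\psi$, and then $F_{[K]}(U F_{[K]}\psi) \to U\psi$ because $F_{[K]} \to \Bbb{1}$ strongly while the vectors $U F_{[K]}\psi$ converge. Second, the same telescoping argument applied to $A_K^{*}A_K = F_{[K]} U^{*} F_{[K]} U F_{[K]}$ gives $A_K^{*}A_K \to \Bbb{1}$ and likewise $A_K A_K^{*}\to\Bbb{1}$ strongly. Since $0 \le A_K^{*}A_K \le \Bbb{1}$, the pointwise inequality $(\sqrt{t}-1)^2 \le 1-t$ on $[0,1]$ shows, via functional calculus, that $|A_K| = (A_K^{*}A_K)^{1/2}$ also converges strongly to $\Bbb{1}$. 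Now take the polar decomposition $A_K = W_K |A_K|$ inside the finite-dimensional space $\mathcal{H}_{[K]}$, with $W_K$ the partial isometry from $(\ker A_K)^{\perp}$ onto $\operatorname{ran} A_K$. Because $\mathcal{H}_{[K]}$ is finite-dimensional, $\dim\ker A_K = \dim(\operatorname{ran}A_K)^{\perp}$ within $\mathcal{H}_{[K]}$, so I can extend $W_K$ to a \emph{full} unitary $U_{[K]}$ on $\mathcal{H}_{[K]}$ by mapping $\ker A_K$ isometrically onto the cokernel by an arbitrary choice of unitary. Extended by zero, $U_{[K]}$ is a partial isometry with $U_{[K]}^{*}U_{[K]} = U_{[K]}U_{[K]}^{*} = F_{[K]}$, exactly as required.

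It remains to verify $U_{[K]} \to U$ strongly, and this is where care is needed. Since $|A_K|$ annihilates $\ker A_K$ while $U_{[K]}$ agrees with $W_K$ on its orthogonal complement, one still has $A_K = U_{[K]}|A_K|$; hence for $\xi \in \mathcal{H}_{[K]}$ one gets $U_{[K]}\xi - A_K\xi = U_{[K]}(F_{[K]} - |A_K|)\xi$, so that $\|U_{[K]}\xi - A_K\xi\| = \|(F_{[K]} - |A_K|)\xi\|$. Taking $\xi := F_{[K]}\psi$ for fixed $\psi$ and using $A_K\xi = A_K\psi$, $U_{[K]}\psi = U_{[K]}\xi$, this yields $\|U_{[K]}\psi - U\psi\| \le \|(F_{[K]}-|A_K|)F_{[K]}\psi\| + \|A_K\psi - U\psi\|$, and both terms tend to $0$ by the two strong limits above together with $F_{[K]}\psi \to \psi$ (using that $F_{[K]} - |A_K| \to 0$ strongly and that these operators are uniformly norm-bounded).

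The main obstacle I anticipate is precisely the possible degeneracy of $A_K$: for each finite $K$ the compression need not be invertible, so its polar part is only a partial isometry of possibly deficient rank, and one must argue that the arbitrary ``filling-in'' on $\ker A_K$ does not spoil convergence. The identity $\|U_{[K]}\xi - A_K\xi\| = \|(F_{[K]} - |A_K|)\xi\|$ is what neutralizes this ambiguity, since it bounds the error solely in terms of $\Bbb{1} - |A_K|$ and is blind to the choice made on the kernel. Everything else is routine strong-operator bookkeeping.
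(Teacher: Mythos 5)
Your proof is correct and follows essentially the same route as the paper's: compress $U$ to $A_K = F_{[K]} U F_{[K]}$, show $A_K \to U$ and $|A_K| \to \Bbb{1}$ strongly, take the polar decomposition, extend its partial isometry to a full unitary of $\mathcal{H}_{[K]}$, and control the error through the identity $\|U_{[K]}\xi - A_K\xi\| = \|(F_{[K]} - |A_K|)\xi\|$. The only cosmetic difference is that you pass from $|A_K|^2 \to \Bbb{1}$ to $|A_K| \to \Bbb{1}$ via the functional-calculus inequality $(\sqrt{t}-1)^2 \le 1-t$ on $[0,1]$, where the paper uses an equivalent spectral-decomposition estimate.
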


\begin{proof}
  Let us start by introducing the space $D \subset \mathcal{H}$ of vectors
  $\xi \in \mathcal{H}$ satisfying $F_{[K]} \xi = \xi$ for a $K \in
  \N$. It is a dense subset of $\mathcal{H}$ and we can define the map
  $m: D \rightarrow \N$, $m(\xi) = \min \{ K \in \N\,|\, F_{[K]} \xi
  = \xi \}$. All operators in this proof are elements of the unit ball
  $\mathcal{B}_1(\mathcal{H}) = \{ A \in \mathcal{B}(\mathcal{H})\, | \, \|A\|
  \leq 1\}$ in $\mathcal{B}(\mathcal{H})$. A sequence $A_K$ of elements of
  $\mathcal{B}_1(\mathcal{H})$  converges to $A \in
  \mathcal{B}_1(\mathcal{H})$ iff $\lim_{K\rightarrow \infty} A_K \xi = A \xi$
  holds for all $\xi \in D$; \cite[I.3.1.2]{blackadar06}.

  Now define $A_{[K]} = F_{[K]} U F_{[K]}$. For $\xi \in D$, we have $U F_{[K]}
  \xi = U \xi$ if $K > m(\xi)$ and $\lim_{K \rightarrow \infty} F_{[K]} U \xi
  = U \xi$ since $F_{[K]}$ converges strongly to $\Bbb{1}$. Hence the strong
  limit of the $A_{[K]}$ is $U$,  similarly one can show that the strong
  limit of $A_{[K]}^*$ is $U^*$. 
  The $A_{[K]}$ are not partial isometries. We will rectify this problem by
  looking at the polar decomposition. To this end, first consider 
  $|A_{[K]}|^2 = A_{[K]}^* A_{[K]}$ and 
  $
    \| A_{[K]}^* A_{[K]} \xi - \xi \| = \| A_{[K]}^* A_{[K]} \xi - U^* U \xi
    \| \leq \| A_{[K]}^* (A_{[K]} -U) \xi\| + \| (A_{[K]}^* - U^*) U\xi\| 
   \leq \|(A_{[K]} - U) \xi\| + \|(A_{[K]}^*-U^*) U\xi\|
  $
  where we have used that $\|A_{[K]}^*\|\leq 1$ holds. Strong convergence of
  $A_{[K]}$ and $A_{[K]}^*$ implies $\lim_{K\rightarrow\infty} \| A_{[K]}^*
  A_{[K]} \xi - \xi \| =0$. Hence  $|A_{[K]}|^2$ converges strongly to
  $\Bbb{1}$. 

  The operators $A_{[K]}$ are of finite rank with support and range contained
  in $\mathcal{H}_{[K]} = F_{[K]} \mathcal{H}$. Hence the $|A_{[K]}|$  have pure point
  spectrum and their spectral decomposition is $\sum_{\lambda \in
    \sigma(|A_{[K]}|)} \lambda P_\lambda$ with eigenvalues $0 \leq \lambda
  \leq 1$ and spectral projections $P_\lambda$ satisfying $P_\lambda \leq
  F_{[K]}$ for $\lambda > 0$. Using the fact that the $P_\lambda$ are
  mutually orthogonal, we get for $|A_{[K]}|^2$:
  $
    \|  |A_{[K]}|^2 \xi - \xi\| = \| \sum_{\lambda \in \sigma(|A_{[K]}|)}
      (\lambda^2 -1) P_\lambda \phi\| = \sum_{\lambda \in
      \sigma(|A_{[K]}|)} |\lambda^2-1| \|P_\lambda \xi\| 
      = \sum_{\lambda \in \sigma(|A_{[K]}|)} |\lambda-1|(\lambda+1)
      \|P_\lambda \xi\| \geq \sum_{\lambda \in \sigma(|A_{[K]}|)} |\lambda-1|
      \|P_\lambda \xi\|
  $.
  Hence strong convergence of $|A_{[K]}|^2$ implies strong convergence of
  $|A_{[K]}|$. 

  Now we can look at the polar decomposition $A_{[K]} = W_{[K]}
  |A_{[K]}|$. The $W_{[K]}$ are partial isometries, and moreover, since support and range
  of the $A_{[K]}$ are contained in $\mathcal{K}_{[K]}$, they satisfy
  $W_{[K]}^* W_{[K]} \leq F_{[K]}$ and $W_{[K]} W_{[K]}^* \leq F_{[K]}$. In
  other words, we can look upon the $W_{[K]}$ as partial isometries on the finite
  dimensional Hilbert space $\mathcal{H}_{[K]}$. As such we can extend them
  to untaries $U_{[K]} \in \mathcal{U}(\mathcal{H}_{[K]})$ without sacrificing
  the relation to $A_{[K]}$, i.e.\ $A_{[K]} = U_{[K]} |A_{[K]}|$. As operators
  on $\mathcal{H}$, the $U_{[K]}$ are still partial isometries, but now with
  source and target projection equal to $F_{[K]}$ as stated in the lemma. 

  The only remaining point is to show that the $U_{[K]}$ converges strongly to
  $U$. This follows from
   $
    \| U_{[K]} \xi - U \xi\| \leq \| U_{[K]}\xi - A_{[K]}\xi\| + \|A_{[K]}\xi
    - U\xi \|$
   and
   $ \|U_{[K]}\xi - A_{[K]}\xi\| = \| U_{[K]} (\Bbb{1} - |A_{[K]}|)\xi\| 
  $
  since $A_{[K]}$ converges strongly to $U$ and $|A_{[K]}|$ to $\Bbb{1}$.
  \end{proof}
Now we come back to the case discussed in the beginning of this subsection
under item (\ref{item:3}):
\begin{lem} \label{lem:10}
  Consider $U$, $F_{[K]}$ and $U_{[K]}$ as in Lemma \ref{lem:8}, and an
  additional sequence of partial isometries $V_{[K]}$, $K \in \N$ with
  $V_{[K]}^*V_{[K]} = V_{[K]} V_{[K]}^* = \Bbb{1} - F_{[K]}$. The operators
  $W_{[K]} = U_{[K]} + V_{[K]}$ are unitary, and if $U$ is the strong limit of
  the $U_{[K]}$, the same is true for the $W_{[K]}$.  
\end{lem}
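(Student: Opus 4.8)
The plan is to exploit the orthogonal decomposition $\mathcal{H} = F_{[K]}\mathcal{H} \oplus (\Bbb{1}-F_{[K]})\mathcal{H}$ and to note that $W_{[K]}$ is assembled from two pieces supported in these mutually orthogonal sectors. First I would record the structural identities that follow from the source/target data: since $U_{[K]}$ has both source and target projection $F_{[K]}$, one has $U_{[K]} = F_{[K]} U_{[K]} F_{[K]}$, and likewise $V_{[K]} = (\Bbb{1}-F_{[K]})\,V_{[K]}\,(\Bbb{1}-F_{[K]})$. Thus $U_{[K]}$ acts as a unitary on $\mathcal{H}_{[K]}=F_{[K]}\mathcal{H}$ and annihilates its complement, while $V_{[K]}$ does the reverse.

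For unitarity I would expand
\[
  W_{[K]}^* W_{[K]} = U_{[K]}^* U_{[K]} + U_{[K]}^* V_{[K]} + V_{[K]}^* U_{[K]} + V_{[K]}^* V_{[K]}.
\]
Inserting the projections recorded above, the two cross terms carry a factor $F_{[K]}(\Bbb{1}-F_{[K]}) = 0$ and hence vanish, while the diagonal terms equal $U_{[K]}^* U_{[K]} = F_{[K]}$ and $V_{[K]}^* V_{[K]} = \Bbb{1}-F_{[K]}$ by hypothesis; their sum is $\Bbb{1}$. Running the identical computation with the factors reversed gives $W_{[K]} W_{[K]}^* = \Bbb{1}$, so $W_{[K]}$ is unitary. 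For the convergence claim I would write $W_{[K]} - U = (U_{[K]} - U) + V_{[K]}$: the first summand tends to $0$ strongly, which is precisely the conclusion of Lemma~\ref{lem:8}. For the second, I fix $\xi \in \mathcal{H}$ and use that $V_{[K]}$ is isometric on its initial space $(\Bbb{1}-F_{[K]})\mathcal{H}$, so that $\|V_{[K]}\xi\| = \|V_{[K]}(\Bbb{1}-F_{[K]})\xi\| = \|(\Bbb{1}-F_{[K]})\xi\|$, which goes to $0$ because $F_{[K]} \to \Bbb{1}$ strongly (a property of the $F_{[K]}$ inherited from Lemma~\ref{lem:8}). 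Adding the two contributions yields $W_{[K]} \to U$ strongly.

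There is no genuine obstacle here; the entire argument is bookkeeping with the complementary projections $F_{[K]}$ and $\Bbb{1}-F_{[K]}$. The only point that deserves a moment's care is that $V_{[K]} \to 0$ \emph{strongly}, which is not implied by $\|V_{[K]}\| = 1$ and must instead be obtained by pushing the source projection $\Bbb{1}-F_{[K]}$ onto the fixed test vector and then invoking the strong convergence $F_{[K]} \to \Bbb{1}$.
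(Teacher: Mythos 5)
Your proof is correct and follows essentially the same route as the paper: unitarity from the orthogonal complementarity of the two blocks, and strong convergence by splitting $W_{[K]}-U$ into $(U_{[K]}-U)+V_{[K]}$. The one place you genuinely diverge is the convergence step: the paper checks convergence only on the dense subspace $D$ of vectors with $F_{[K]}\xi=\xi$ for large $K$ (where $W_{[K]}\xi=U_{[K]}\xi$ eventually) and then implicitly extends by uniform boundedness, whereas your estimate $\|V_{[K]}\xi\|=\|(\Bbb{1}-F_{[K]})\xi\|\to 0$ works directly for every $\xi\in\mathcal{H}$ and so dispenses with the density argument altogether --- a marginally cleaner and more self-contained conclusion.
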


\begin{proof} 
  The kernels of $U_{[K]}$ and $V_{[K]}$ are $(\Bbb{1} - F_{[K]}) \mathcal{H}$
  and $\mathcal{H}_{[K]} = F_{[K]} \mathcal{H}$, respectively. These spaces are
  complementary, and therefore $W_{[K]} = U_{[K]} + V_{[K]}$ is unitary for all
  $K$. To show strong convergence, recall the space $D$ and the function
  $D  \ni \xi \mapsto m(\xi) \in \N$ introduced in the last proof. For
  $\xi \in F$ we have $W_{[K]} \xi = U_{[K]} \xi$ if $K > m(\xi)$. Hence by
  assumption $\lim_{K \rightarrow \infty} W_{[K]} \xi = \lim_{K \rightarrow
    \infty} U_{[K]} \xi = U\xi$, which implies strong convergence of
  $W_{[K]}$ to $U$.  
\end{proof}

\subsection{Strong controllability}
\label{sec:strong-contr-2}

We are now prepared to prove Theorem~\ref{thm:7}. The first
step is the following lemma announced already at the beginning
of Subsection~\ref{sec:appr-unit}.

\begin{lem} \label{lem:9}
  Consider the group $\mathcal{G}_{X,F,K}$ introduced in Lemma \ref{lem:7} and
  assume that there is a $\mu \leq K$ with $d^{(\mu)} =
  \dim(\mathcal{H}^{(\mu)}) > 2$. Then $\mathcal{G}_{X,F,K} =
  \mathcal{U}(\mathcal{H}_{[K]})$.  
\end{lem}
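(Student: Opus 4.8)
The plan is to prove Lemma~\ref{lem:9} by combining the transitivity result from Lemma~\ref{lem:7} with a structural dichotomy for transitive subgroups of the unitary group. By Lemma~\ref{lem:7}, the group $\mathcal{G}_{X,F,K}$ acts transitively on the unit sphere of the finite-dimensional Hilbert space $\mathcal{H}_{[K]}$. As noted at the beginning of Subsection~\ref{sec:appr-unit}, a closed connected subgroup of $\mathcal{U}(\mathcal{H}_{[K]})$ acting transitively on the unit sphere must be either the full unitary group or (up to conjugacy, and only when $\dim \mathcal{H}_{[K]}$ is even) the unitary symplectic group $\mathrm{Sp}(\dim\mathcal{H}_{[K]}/2)$; this is the classification of transitive actions on spheres going back to Montgomery--Samelson and Borel, and is exactly the content cited via \cite{SchiSoLea02,SchiSoLea02b,AA03}. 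Hence the proof reduces to \emph{excluding the symplectic case} under the hypothesis that some $\mu \leq K$ has $d^{(\mu)} = \dim(\mathcal{H}^{(\mu)}) > 2$.

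The key step is therefore to show that $\mathcal{G}_{X,F,K}$ cannot be (conjugate to) a symplectic group. The plan is to exhibit an explicit element of $\mathcal{G}_{X,F,K}$ whose restriction to one of the blocks cannot be symplectic with respect to any invariant bilinear form. Recall that $\mathcal{G}_{X,F,K}$ contains $\mathcal{SU}_{[K]}(X,F_{[K]})$, which in turn contains the full special unitary group $\mathcal{SU}(\mathcal{H}^{(\mu)})$ acting on the block $\mathcal{H}^{(\mu)}$ and trivially elsewhere (for each $\mu \le K$ with the appropriate structure). If $\mathcal{G}_{X,F,K}$ were unitary symplectic, there would exist a nondegenerate antisymmetric (or conjugate-bilinear skew) form $\omega$ on $\mathcal{H}_{[K]}$ preserved by every element of the group. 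But the subgroup $\mathrm{SU}(d^{(\mu)})$ acting irreducibly on the $d^{(\mu)}$-dimensional block $\mathcal{H}^{(\mu)}$ preserves no nondegenerate antisymmetric form once $d^{(\mu)} > 2$: the defining representation of $\mathrm{SU}(n)$ is symplectic (self-dual with an invariant \emph{antisymmetric} form) only for $n=2$, and for $n > 2$ it is not self-dual at all. Concretely, an invariant form would have to restrict to an $\mathrm{SU}(d^{(\mu)})$-invariant pairing between $\mathcal{H}^{(\mu)}$ and some block, forcing $\mathcal{H}^{(\mu)}$ to be paired either with itself (impossible antisymmetrically for $d^{(\mu)}>2$, since the only invariant forms on the defining representation of $\mathrm{SU}(n)$, $n\ge 3$, is the symmetric one coming from duality only when paired with the \emph{conjugate}) or with another conjugate block, which cannot happen because $\mathrm{SU}(d^{(\mu)})$ acts trivially off $\mathcal{H}^{(\mu)}$. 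This contradiction rules out the symplectic alternative.

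I would organize the argument as follows. First, invoke Lemma~\ref{lem:7} to get transitivity on the unit sphere of $\mathcal{H}_{[K]}$ and appeal to the cited dichotomy, so that $\mathcal{G}_{X,F,K}$ is either $\mathcal{U}(\mathcal{H}_{[K]})$ or unitary symplectic. Second, observe that $\mathcal{G}_{X,F,K} \supset \mathcal{SU}(\mathcal{H}^{(\mu)})$ realized as block-diagonal unitaries supported on $\mathcal{H}^{(\mu)}$ (acting as the identity on the orthogonal complement inside $\mathcal{H}_{[K]}$). Third, assume for contradiction that the group preserves a nondegenerate invariant symplectic form $\omega$ on $\mathcal{H}_{[K]}$, and analyze the restriction of $\omega$ to $\mathcal{H}^{(\mu)}$ and to its pairing with the complement using Schur's lemma and the self-duality properties of the defining representation of $\mathrm{SU}(d^{(\mu)})$. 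Since $d^{(\mu)}>2$, the defining representation admits no invariant antisymmetric form and is not self-dual, so $\omega$ must vanish identically on the block and on its couplings, contradicting nondegeneracy of $\omega$. Conclude that the symplectic case is impossible, hence $\mathcal{G}_{X,F,K} = \mathcal{U}(\mathcal{H}_{[K]})$.

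The main obstacle I anticipate is making the representation-theoretic exclusion of the symplectic case airtight when $\mathcal{H}^{(\mu)}$ sits inside the larger space $\mathcal{H}_{[K]}$: one must carefully track how an invariant form on the whole space decomposes under the block-diagonal $\mathrm{SU}(d^{(\mu)})$-action and rule out the possibility that $\omega$ pairs $\mathcal{H}^{(\mu)}$ nontrivially with other blocks. The clean way to handle this is to note that the $\mathrm{SU}(d^{(\mu)})$-subgroup acts as the defining representation on $\mathcal{H}^{(\mu)}$ and trivially on the complement, so by Schur any invariant sesquilinear or bilinear coupling between the block and the rest is forced to zero, and any invariant form \emph{within} the block must be proportional to the (unique, symmetric-type) invariant, which is not antisymmetric for $d^{(\mu)} > 2$; this reduces the whole matter to the elementary fact that $\mathrm{SU}(n)$ for $n \ge 3$ is not a symplectic group.
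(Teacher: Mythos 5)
Your proposal is correct and follows essentially the same route as the paper: invoke the transitivity from Lemma~\ref{lem:7}, apply the cited dichotomy (full unitary versus unitary symplectic), and rule out the symplectic alternative using the block-diagonal copy of $\mathrm{SU}(d^{(\mu)})$ with $d^{(\mu)}>2$. The only cosmetic difference is that you phrase the exclusion via a would-be invariant antisymmetric form and Schur's lemma, whereas the paper phrases it as the non-existence of a unitary $V$ with $VUV^*=\bar{U}$ (i.e.\ the defining representation of $\mathrm{SU}(d)$ is not equivalent to its conjugate for $d>2$) --- these are equivalent formulations of the same obstruction.
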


\begin{proof}
  Consider the group $\mathcal{SG}_{X,F,K}$ consisting of elements of
  $\mathcal{G}_{X,F,K}$ with determinant $1$. By Lemma \ref{lem:7} this group
  acts transitively on the set of pure states of the Hilbert space
  $\mathcal{H}_{[K]}$. Hence, there are only two possibilities
  left\footnote{\label{fn:1} Note that $\mathcal{H}_{[K]}$ is a finite-dimensional Hilbert
    space. Hence after fixing a basis $e_1, \dots, e_d$ it can be identified with
    $\Bbb{C}^d$.}: $\mathcal{SG}_{X,F,K}$ coincides
  either with the unitary symplectic group $\mathrm{USp}(\mathcal{H}_{[K]})$
  or with the full unitary group $\mathcal{U}(\mathcal{H}_{[K]})$; cf.\
  \cite{SchiSoLea02,SchiSoLea02b,AA03}. Assume $\mathcal{SG}_{X,F,K} =
  \mathrm{USp}(\mathcal{K}_{[K]})$ holds. This would imply that
  $\mathcal{SG}_{X,F,K}$ is self-conjugate (or more precisely the
  representation given by the identity map on $\mathcal{SG}_{X,F,K} \subset
  \mathcal{B}(\mathcal{H}_{[K]})$ is self-conjugate). In other words, there
  would be a unitary $V \in \mathcal{U}(\mathcal{H}_{[K]})$ with $V U V^* =
  \bar{U}$ for all $U \in \mathcal{SG}_{X,F,K}$. Here $\bar{U}$ denotes
  complex conjugation  in an arbitrary but fixed basis  (cf.\ footnote~\ref{fn:1}). 

  Now consider $\mathcal{SU}(\mathcal{H}^{(\mu)})$ with $d^{(\mu)} > 2$. It
  can be identified with $\mathrm{SU}(d)$ in its first fundamental
  representation $\lambda_1$ (i.e.\ the ``defining'' representation). At the
  same time it is a subgroup of $\mathcal{SG}_{X,F,K}$ (one which acts
  nontrivially only on $\mathcal{H}^{(\mu)} \subset
  \mathcal{H}_{[K]}$). Existence of a $V$ as in the last paragraph would imply
  that $\lambda_1$ is unitarily equivalent to its conjugate representation,
  which is the $d-1^{\mathrm{st}}$ fundamental representation. This is
  impossible if $d^{(\mu)} > 2$ holds. Hence $V$ with the described properties
  does not exist and $\mathcal{SG}_{X,F,k}$ has to coincide with
  $\mathcal{SU}(\mathcal{H}_{[K]})$ and therefore $\mathcal{G}_{X,F,K} =
  \mathcal{U}(\mathcal{H}_{[K]})$ as stated.
\end{proof}

Finally we can conclude the proof of Thm. \ref{thm:7} which we restate here as
the following proposition:

\begin{prop} \label{prop:10}
  A control system (\ref{eq:1}) with control Hamiltonians
  $H_0=\Bbb{1}, \dots, H_d$ satisfying the conditions from
  Prop.~\ref{prop:9} is strongly controllable, if $d^{(\mu)} = \dim
  \mathcal{H}^{(\mu)} > 2$ for at least one $\mu \in \N$. 
\end{prop}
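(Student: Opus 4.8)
The plan is to show that under the stated hypotheses the dynamical group $\mathcal{G}$ exhausts all of $\mathcal{U}(\mathcal{H})$; since $\mathcal{G}$ is by construction the strong closure of the set of time-evolution operators $U(0,T)$, this is exactly strong controllability. The strategy follows the two-step scheme sketched at the beginning of Subsection~\ref{sec:appr-unit}: for a fixed target $U$, first produce a sequence of unitaries in $\mathcal{G}$ that agrees with a good finite-dimensional approximant of $U$ on the growing subspaces $\mathcal{H}_{[K]}$, and then pass to the strong limit using the closedness of $\mathcal{G}$.

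Concretely, fix $U \in \mathcal{U}(\mathcal{H})$ and let $\mu_0 \in \N$ be the index with $d^{(\mu_0)} > 2$ guaranteed by hypothesis. First I would apply Lemma~\ref{lem:8} to the sequence $F_{[K]}$ (which increases and converges strongly to $\Bbb{1}$) to obtain partial isometries $U_{[K]}$ with source and target projection $F_{[K]}$ that converge strongly to $U$; each $U_{[K]}$ is then a genuine unitary on the finite-dimensional space $\mathcal{H}_{[K]}$. For every $K \geq \mu_0$ Lemma~\ref{lem:9} applies (since $\mu_0 \leq K$ and $d^{(\mu_0)} > 2$) and yields $\mathcal{G}_{X,F,K} = \mathcal{U}(\mathcal{H}_{[K]})$, so in particular $U_{[K]} \in \mathcal{G}_{X,F,K}$.

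Next I would invoke the correspondence established in the proof of Proposition~\ref{prop:9}, namely that every element of $\mathcal{G}_{X,F,K}$ is matched on $\mathcal{H}_{[K]}$ by an element of the dynamical group. This produces $W_{[K]} \in \mathcal{G}$ with $W_{[K]}\, \xi = U_{[K]}\, \xi$ for all $\xi \in \mathcal{H}_{[K]}$. Since $U_{[K]}$ maps $\mathcal{H}_{[K]}$ onto itself, so does $W_{[K]}$, whence by unitarity $W_{[K]}$ also preserves $(\Bbb{1}-F_{[K]})\mathcal{H}$; that is, $W_{[K]} = U_{[K]} \oplus V_{[K]}$ with $V_{[K]}$ a unitary on the orthocomplement, hence a partial isometry with source and target $\Bbb{1}-F_{[K]}$. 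This is precisely the setting of Lemma~\ref{lem:10}, which—combined with the strong convergence $U_{[K]} \to U$ from Lemma~\ref{lem:8}—gives that $W_{[K]} \to U$ strongly as $K \to \infty$. As every $W_{[K]}$ lies in $\mathcal{G}$ and $\mathcal{G}$ is strongly closed, we conclude $U \in \mathcal{G}$; since $U$ was arbitrary, $\mathcal{G} = \mathcal{U}(\mathcal{H})$ and the system is strongly controllable.

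Most of the analytic difficulty has already been absorbed into Lemmas~\ref{lem:8}--\ref{lem:10}, so the remaining work is largely bookkeeping. The step deserving the most care is verifying that the dynamical-group element $W_{[K]}$ inherits the block form $U_{[K]} \oplus V_{[K]}$ required to feed Lemma~\ref{lem:10}; this rests on $U_{[K]}$ being a \emph{true} unitary on $\mathcal{H}_{[K]}$ rather than merely a partial isometry into a larger space, and on the observation that matching $W_{[K]}$ with $U_{[K]}$ on $\mathcal{H}_{[K]}$ forces invariance of the orthocomplement. I would also make explicit that the hypothesis ``$d^{(\mu)} > 2$ for at least one $\mu$'' is exactly what guarantees Lemma~\ref{lem:9} for all sufficiently large $K$, i.e.\ for infinitely many $K$, so that the strong limit over $K \to \infty$ can legitimately be taken.
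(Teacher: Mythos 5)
Your proposal is correct and follows essentially the same route as the paper: Lemma~\ref{lem:8} to get the approximating partial isometries $U_{[K]}$, Lemma~\ref{lem:9} (valid for all $K\geq\mu_0$) to identify $\mathcal{G}_{X,F,K}$ with $\mathcal{U}(\mathcal{H}_{[K]})$, the correspondence from Prop.~\ref{prop:9} to lift $U_{[K]}$ to $W_{[K]}\in\mathcal{G}$ of block form $U_{[K]}\oplus V_{[K]}$, and Lemma~\ref{lem:10} plus strong closedness of $\mathcal{G}$ to conclude. The only cosmetic difference is that the paper obtains the block structure of $W_{[K]}$ by choosing it in the subgroup $\mathcal{G}(F_{[K]})$ of elements commuting with $F_{[K]}$, whereas you derive it from unitarity and the matching on $\mathcal{H}_{[K]}$; both are fine.
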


\begin{proof}
  Consider an arbitrary unitary $U \in \mathcal{U}(\mathcal{H})$. By Lemma
  \ref{lem:8}, there is a sequence of partial isometries $U_{[K]}$ converging
  strongly to $U$, and by Lemma \ref{lem:9} we can assume that $U_{[K]} \in
  \mathcal{G}_{X,F,K}$. Now considering the dynamical group $\mathcal{G}$ generated by
  the $H_j$, define the subgroup $\mathcal{G}(F_{[K]})$ of $U \in
  \mathcal{G}$ commuting with $F_{[K]}$, and the restriction
  $\mathcal{G}_{[K]}$ of $\mathcal{G}(F_{[K]})$ to $\mathcal{H}_ {[K]}$. The assumptions
  on the $H_j$ imply that $\mathcal{G}_{[K]} = \mathcal{G}_{X,F,K} =
  \mathcal{U}(\mathcal{H}_{[K]})$. Hence there is a sequence $W_K$, $K \in
  \N$ of unitaries with $W_{[K]} \in \mathcal{G}(F_{[K]}) \subset
  \mathcal{G}$ and $F_{[K]} W_{[K]} = U_{[K]}$. Since $U_{[K]}$ converges to  $U$
  strongly, Lemma \ref{lem:10} implies that the strong limit of the $W_{[K]}$
  is $U$, which was to show.   
\end{proof}

This proposition shows strong controllability for all the systems studied in
Sect.~\ref{sec:atoms-cavity}. The only exception is one atom interacting with
one harmonic oscillator (Sect. \ref{sec:one-atom}). Here we have $d^{(\mu)} =
\dim \mathcal{H}^{(\mu)} \leq 2$ and we \emph{can} actually find a unitary $V$
with $V U V = \bar{U}$ for all $U \in
\mathcal{SU}_{[K]}(X_1,F_{[K]})$. However, the elements
$U$ of $\mathcal{SU}(X_1)$ are block diagonal where the blocks $U^{(\mu)} \in
\mathcal{SU}(\mathcal{H}^{(\mu)})$ can be chosen independently. This implies $V
\in \mathcal{SU}_{K]}(X_1,F_{[K]})$, which is incompatible with $V
H_{\mathrm{JC},4} V^* = - H_{\mathrm{JC},4}$ (cf.\ Eq.~\eqref{eq:25} for
the definition of $\mathcal{H}_{\mathrm{JC},4}$) which would be necessary for the
group $\mathcal{G}_{X_1,F,K}$ to be self-conjugate. Hence we can proceed as in
the proof of Prop.~\ref{prop:10} to prove Thm.~\ref{thm:1}.  

\section{Conclusions and Outlook}

Many of the difficulties of quantum control theory in infinite dimensions
arise from the fact that, due to unbounded operators, the
group $\mathcal{U}(\mathcal{H})$ of all unitaries on an infinite-dimensional
separable Hilbert space $\mathcal{H}$ is in fact no Lie group as long as it is
equipped with the strong topology, which inevitably is the correct choice when studying
questions of quantum dynamics. Yet $\mathcal{U}(\mathcal{H})$ contains
\begin{table}[Ht!]
  \centering
\caption{Controllability results for several 2-level atoms in a cavity as derived here.}
  \begin{tabular}{l@{\hspace{2mm}}l@{\hspace{2mm}}ll}
    \hline \\[-4.8mm]\hline\\[-4.8mm]
    System & Control Hamiltonians & \multicolumn{2}{c}{------------ Controllability ------------}\\[-1mm]
    & & \multicolumn{2}{l}{system algebra\ $\fg$, dynamic\ group $\mathcal{G}$}\\
    \hline\\[-4.8mm]\hline\\[-4mm]
    one atom & $H_{\mathrm{JC},j}$, $j=1,2$, Eq.~\eqref{eq:9} & $\fg=\su(X_1)$,
    $\mathcal{G}=\mathcal{SU}(X_1)$ &[Thm.~\ref{thm:3}] \\[-1mm]
	& \multicolumn{3}{c}{\xhrulefill{black}{.5pt}}\\[-0mm] 
    & $H_{\mathrm{JC},j}$, $j=1,2$, Eq.~\eqref{eq:9} & strongly controllable$^{a}$ & \\
    & $H_{\mathrm{JC},4}$, Eq.~\ref{eq:25} & \quad with $\mathcal{G}=\mathcal{U}(\mathcal{H})$ & [Thm.~\ref{thm:1}]\\
    \hline\\[-5.5mm]\hline\\[-4mm]
    $M$ atoms & $H_{\mathrm{IC},j}$, $j=1,\dots 2M$ &$\fg=\su(X_M)$ and &\\
    \multicolumn{2}{l}{\quad with individual controls of Eq.~\eqref{eq:44} } & 
    $\mathcal{G} = \mathcal{SU}(X_M)$ & [Thm.~\ref{thm:8}]\\[-1mm]
	& \multicolumn{3}{c}{\xhrulefill{black}{.5pt}}\\[-0mm]
    & $H_{\mathrm{IC},j}$, $j=1,\dots 2M+1$ & strongly controllable$^{a}$ &\\
    \multicolumn{2}{l}{\quad with individual controls of Eqs.~(\ref{eq:44},\ref{eq:45})} & \quad with $\mathcal{G}=\mathcal{U}(\mathcal{H})$ & [Thm.~\ref{thm:7}]\\
    \hline\\[-5.5mm]\hline\\[-3mm]
    $M$ atoms & $H_{\mathrm{TC},j}$, $j=1,2,3$ & $\fg \subset \uu(X_M)$ and &\\
    \multicolumn{2}{l}{\quad under collective control of Eq.~\eqref{eq:31}} &  $\mathcal{G} \subset \mathcal{U}(X_M)$ & [Thm.~\ref{thm:4}]\\[-1mm]
	& \multicolumn{3}{c}{\xhrulefill{black}{.5pt}}\\[-0mm]
    & $H_{\mathrm{CC},j}$, $j=1,\dots,M+1$ & $\fg=\su(X_M)$ and &\\
     \multicolumn{2}{l}{\quad under collective control of  Eq.~\eqref{eq:30}} &  $\mathcal{G} = \mathcal{SU}(X_M)$ & [Thm.~\ref{thm:5}]\\[-1mm]
	& \multicolumn{3}{c}{\xhrulefill{black}{.5pt}}\\[-0mm]
    & $H_{\mathrm{CC},j}$, $j=1,\dots,M+2$ & strongly controllable$^{a}$ &\\
     \multicolumn{2}{l}{\quad under collective control of Eq.~\eqref{eq:30}} &  \quad with $\mathcal{G}=\mathcal{U}(\mathcal{H})$ & [Thm.~\ref{thm:2}]
    \\
    \hline\\[-4.8mm]\hline
  \multicolumn{4}{l}{$^{a}$\footnotesize{Here in the strong topology,  no system algebra or exponential map exists.}}
  \end{tabular}
  \label{tab:1}
\end{table}
a plethora of subgroups which are still infinite-dimensional while admitting a
proper Lie structure -- including in particular a Lie algebra~$\mathfrak{l}$ consisting of
unbounded operators and a well-defined exponential map. An important 
example are those unitaries with an abelian $\mathrm{U}(1)$-symmetry,
which in the Jaynes-Cummings model relates to a kind of particle-number operator.

As shown here, this  infinite-dimensional
system Lie algebra $\mathfrak{l}$  can be exploited for control theory
in infinite dimensions in close analogy to the finite-dimensional case. 
Due to the in-born symmetry of $\mathfrak{l}$ and
the corresponding Lie group $\mathcal{G}$, full controllability cannot be achieved
that way. Yet we have also shown that this problem can readily be overcome by 
complementary methods directly on the group level.

For several
2-level atoms interacting with one harmonic oscillator (e.g.,\ a cavity mode or a phonon mode), 
these methods allowed us to extend previous results substantially, in particular 
in two aspects also summarized in Table \ref{tab:1}:
(A) We have answered approximate control and convergence questions for
asymptotically vanishing control error. (B) Our results include not only
reachability of states, but also its operator lift, i.e.\ simulability of unitary gates. 
To this end, we have introduced the notion of \emph{strong controllability}, and we
have shown that all systems under consideration require only a fairly small set of
control Hamiltonians for guaranteeing strong controllability,  i.e.\ simulability. 
--- Thus we anticipate the methods introduced here will
find wide application to systematically characterize experimental set-ups of 
cavity QED and ion-traps in terms of pure-state controllability and simulability.

 \section*{Acknowledgements}
\footnotesize{
This work was supported in part by the  {\sc eu}
through the integrated programmes \mbox{{\sc q-essence}} and {\sc siqs},  
and the {\sc eu-strep} {\sc coquit}, and moreover
by the Bavarian Excellence Network {\sc enb}
via the international doctorate programme of excellence
{\em Quantum Computing, Control, and Communication} ({\sc qccc}),
by {\em Deutsche Forschungsgemeinschaft} ({\sc dfg}) in the
collaborative research centre {\sc sfb}~631 as well as the international 
research group {\sc for} 1482 through the grant {\sc schu}~1374/2-1.
}

\section*{References}

\providecommand{\newblock}{}

\clearpage

\end{document}